\patchcmd{\chapter}{\if@openright\cleardoublepage\else\clearpage\fi}{}{}{}
\newtheorem{lemma}{ Lemma}[section]
\newtheorem{proposition}{ Proposition}[section]
\newtheorem{corollary}{ Corollary}
\newtheorem{remark}{ Remark}
\newtheorem{algorithm}{Algorithm}
\newtheorem{theorem}{ Theorem}[section]
\newtheorem{definition}{Definition}[section]
\newtheorem{example}{Example}[section]
\newcommand{\HRule}{\rule{\linewidth}{0.5mm}}
\begin{document}

\begin{titlepage}
\begin{center}


\textsc{\LARGE School of Mathematics and Statistics, University of Sydney\\[1.5cm]}


\HRule \\[0.4cm]
{ \huge \bfseries Stochastic Analysis Seminar on Filtering Theory \\[0.4cm] }

\HRule \\[1.5cm]

\textsc{\Large \emph{Author:}\\
Andrew \textsc{Papanicolaou}}\\alpapani@maths.usyd.edu.au\\[0.5cm]

\vfill


\end{center}
{\large These notes were originally written for the Stochastic Analysis Seminar in the Department of Operations Research and Financial Engineering at Princeton University, in February of 2011. The seminar was attended and supported by members of the Research Training Group, with the author being partially supported by NSF grant DMS-0739195. }

\end{titlepage}
\tableofcontents
\chapter{Hidden Markov Models}

\noindent We begin by introducing the concept of a Hidden Markov Model (HMM). Let $t\in[0,\infty)$ denote time, and consider a Markov process $X_t$ which takes value in the state-space $\mathcal S$. We assume that the distribution function of $X_t$ has either a mass or a density, and we denote mass/density with $p_t(x)$ such that
\[p_t(x) =\frac{d}{dx}\mathbb P(X_t\leq x)\qquad\hbox{for densities},\]
\[p_t(x) =\mathbb P(X_t=x)\qquad\hbox{for masses}\]
for any $t\geq 0$ and $\forall x\in\mathcal S$. The generator of $X_t$ is the operator $Q$ with domain $\mathcal B(Q)$, such that for any bounded function $g(x)\in\mathcal B(Q)$ we have a backward equation,
\[\frac{\mathbb E[g(X_{t+\Delta t})|X_t=x] -g(x)}{\Delta t}\rightarrow Qg(x)\qquad\hbox{as }\Delta t\searrow 0\]
for any $x\in\mathcal S$. Provided that regularity conditions are met, the adjoint leads to the forward equation,
\[\frac{d}{dt}p_t(x) = Q^*p_t(x).\]
\begin{example}
If $\mathcal S=\{x_1,\dots,x_m\}$ (a finite space) and the operator $Q$ is a jump-intensity matrix such that $Q_{ji}\geq 0$ for all $i\neq j$ and $\sum_{i\neq j}Q_{ji} = -Q_{jj}$ for all $j\in\{1,\dots,m\}$. The forward equation is then
\[\frac{d}{dt}p_t(x_i) = \sum_{j=1}^mp_t(x_j)Q_{ji}.\]
\end{example}
\begin{example} If $\mathcal S=\mathbb R$ and $X_t$ is an It\^o process such as
\[dX_t=a(X_t)dt+\sigma dB_t,\]
then $Q =\mathcal L= \frac{\sigma^2}{2}\frac{\partial^2}{\partial x^2}\cdot+a(x)\frac{\partial}{\partial x}\cdot$ is the generator. The backward equation is then
\[\frac{d}{dt}\mathbb Eg(X_t)=\mathbb E\mathcal Lg(X_t) = \frac{\sigma^2}{2}\mathbb Eg''(X_t)+\mathbb Ea(X_t)g'(X_t)\]
for any bounded function $g\in C^2(\mathbb R)$. Provided that $a(x)$ and the initial distribution satisfy some conditions for regularity, there is also a forward equation given by the adjoint
\[\frac{\partial}{\partial t}p_t(x)=\mathcal L^*p_t(x) = \frac{\sigma^2}{2}\frac{\partial^2}{\partial x^2}p_t(x)-a(x)\frac{\partial}{\partial x}p_t(x)-a'(x)p_t(x)\]
for any $x\in\mathbb R$.
\end{example}
In addition to $X_t$, there is another process $Y_t$ that is a noisy function of $X_t$. The process $Y_t$ can given by the SDE
\[dY_t = h(t,X_t)dt+\gamma(t,X_t)dW_t\]
where $W_t$ is an independent Wiener process, or can be given discretely,
\[Y_{t_k} =  h(t_k,X_{t_k})+\gamma (W_{t_k}-W_{t_{k-1}})\]
where $(t_k)_k$ is a set of discrete times at which data is collected. 

As a pair, $(X_t,Y_t)$ are a Markov chain. The process $X_t$ is of primary interest to us and is referred to as the `signal' process, however it is not observable. Instead, the process $Y_t$ is in some way observable and so we call it the `measurement.' Hence, $(X_t,Y_t)$ is an HMM and the goal is to calculate estimates of $X_t$ that are optimal in a posterior sense given observations on $Y_t$. 

\section{Basic Nonlinear Filtering}
Let $\mathcal F_t^Y$ denote the filtration generated by the observations on $Y$ up to time $t$. The optimal posterior estimate of $X_t$ in terms of mean-square error (MSE) is
\[\widehat X_t = \mathbb E[X_t|\mathcal F_t^Y]=\arg\min_{f\in\mathcal F_t^Y}\mathbb E(f-X_t)^2 .\]
\begin{proposition}$\widehat X_t$ is the unique $\mathcal F_t^Y$-measurable minimizer of MSE.
\end{proposition}
\begin{proof} Let $f$ be another $\mathcal F_t^Y$-measurable estimate of $X_t$. Then
\[MSE(f) = \mathbb E(f-X_t)^2 = \mathbb E(f-\widehat X_t+\widehat X_t-X_t)^2\]

\[= \mathbb E(f-\widehat X_t)^2+2\mathbb E(f-\widehat X_t)(\widehat X_t-X_t)+\mathbb E(\widehat X_t-X_t)^2\]

\[= \mathbb E(f-\widehat X_t)^2+2\mathbb E\left[(f-\widehat X_t)\mathbb E[(\widehat X_t-X_t)|\mathcal F_t^Y]\right]+\mathbb E(\widehat X_t-X_t)^2\]

\[= \mathbb E(f-\widehat X_t)^2+\mathbb E(\widehat X_t-X_t)^2\]

\[\geq \mathbb E(\widehat X_t-X_t)^2=MSE(\widehat X_t)\]
with equality holding iff $f=\widehat X_t$ almost everywhere.

\end{proof}
The filtering measure is defined as
\[\pi_t(\mathcal A) = \mathbb P(X_t\in \mathcal A|\mathcal F_t^Y)\]
for any Borel set $\mathcal A$, and for any measurable function $g$ 
\[\hat g_t = \mathbb E[g(X_t)|\mathcal F_t^Y].\]
\begin{remark}
There are also smoothing and prediction distributions. When posteriors have density, we write
\[\pi_{t|T}(dx)=\mathbb P(X_t\in dx|\mathcal F_T^Y).\]
We say that $\pi_{t|T}$ is the \textbf{smoothing} density if $T>t$ and the \textbf{prediction} density if $T<t$. Smoothing requires significantly more calculation to compute, but the prediction simply requires us to solve the forward equation for $X_t$ in the interval $[t,T]$ with initial condition $\pi_t$.

\end{remark}
\begin{example}\textbf{Filtering With Discrete Observations; The Bayesian Case.}  Let $\mathcal S$ be a countable state-space, let $h(x)$ be a  known nonlinear function, let $\gamma(x)=\gamma>0$, and for $k=0,1,2,3,\dots$ let there be specific times $t_k$ at which observations are collected on $Y_t$. At each time $t_k$, let $Y_k=Y_{t_k}$ and denote the history of measurements up to time $t_k$ as $Y_{0:k}=\{Y_0,Y_1,\dots,Y_k\}$. Denote $X_k=X_{t_k^-}$. 

Consider the following discrete differential:
\[Y_{k+1}=Y_k+h(X_{k+1})\Delta t_k+\gamma\Delta W_k\]
where $\Delta t_k = t_{k+1}-t_k$ and $\Delta W_k=W_{t_{k+1}}-W_{t_k}$.

Using Bayes rule, we find the the filtering distribution has a mass function
\[\pi_k(x) = \mathbb P(X_k=x|Y_{0:k})\]
for any $x\in\mathcal S$. It can be written recursively as follows,
\begin{eqnarray}
\label{eq:bayesFilter}
\pi_{k+1}(x) &=& \frac{1}{c_{k+1}}\psi_{k+1}(x)e^{Q^*\Delta t_k}[\pi_k](x)
\end{eqnarray}
where $e^{Q^*\Delta t_k}[~\cdot~]$ is the kernel of $X_t$'s forward transition probabilities, $\psi_{k+1}(x)$ is the likelihood ratio of $\{X_{k+1}=x,Y_k\}$ given $Y_{k+1}$,
\[\psi_{k+1}(x) = \exp\left\{-\frac{ h^2(x)\Delta t_k-2(Y_{k+1}-Y_k)h(x)\Delta t_k}{2\gamma^2}\right\}\]
and $c_{k+1}=\int\psi_{k+1}(x)e^{Q^*\Delta t_k}\pi_k(dx)$ is a normalizing constant. Equation (\ref{eq:bayesFilter}) can be shown to hold true through a use of Bayes formula and by the independence properties of the HMM.\\
\begin{proof}\textbf{(of equation (\ref{eq:bayesFilter}))} Regarding the likelihood function of $\{X_{k+1}=x,Y_k\}$ given $Y_{k+1}$, it is
 \[L(Y_{k+1}|Y_k,X_{k+1}=x)\propto\exp\left\{-\frac{1}{2}\left(\frac{Y_{k+1}-Y_k-h(x)\Delta t_k}{\gamma\Delta t_k}\right)^2\right\}\]
 
 \[= \exp\left\{-\frac{(Y_{k+1}-Y_k)^2+h^2(x)\Delta t_k-2h(x)(Y_{k+1}-Y_k)}{2\gamma^2}\right\}\]
 and since we are only interested in how this likelihood varies with $x$, we can remove the terms that do not have $x$ in them,
 
 \[\propto \exp\left\{-\frac{h^2(x)\Delta t_k-2h(x)(Y_{k+1}-Y_k)}{2\gamma^2}\right\}=\psi_{k+1}(x),\]
 and so it is in fact the likelihood-ratio of $\{X_{k+1}=x\}$.
 
Now, by Bayes formula we have,
\[\pi_{k+1}(x) = \frac{\mathbb P(X_{k+1}=x;Y_{0:k+1})}{\mathbb P(Y_{0:k+1})}\]

\[=\frac{\mathbb P(Y_{k+1}|X_{k+1}=x;Y_{0:k})\mathbb P(X_{k+1}=x;Y_{0:k})}{\mathbb P(Y_{0:k+1})}=\frac{\psi_{k+1}(x)\mathbb P(X_{k+1}=x;Y_{0:k})}{\mathbb P(Y_{0:k+1})}\]

\[=\frac{\psi_{k+1}(x)\sum_{v\in\mathcal S}\mathbb P(X_{k+1}=x,X_k=v;Y_{0:k})}{\mathbb P(Y_{0:k+1})}\]

\[=\frac{\psi_{k+1}(x)\sum_{v\in\mathcal S} \mathbb P(X_{k+1}=x|X_k =v)\mathbb P(X_k\in dv;Y_{0:k})}{\mathbb P(Y_{0:k+1})}\]

\[=\frac{\psi_{k+1}(x)\sum_{v\in\mathcal S} \mathbb P(X_{k+1}=x|X_k =v)\mathbb P(X_k\in dv|Y_{0:k})}{\mathbb P(Y_{k+1}|Y_{0:k})}\]

\[=\frac{\psi_{k+1}(x)\sum_{v\in\mathcal S} \mathbb P(X_{k+1}=x|X_k =v)\pi_k(dv)}{\mathbb P(Y_{k+1}|Y_{0:k})}\]

\[=\frac{\psi_{k+1}(x)e^{Q^*\Delta t_k}\pi_k(dv)}{\mathbb P(Y_{k+1}|Y_{0:k})}\]
and clearly, $\mathbb P(Y_{k+1}|Y_{0:k})$ is the integral of numerator of the last line over $x$. 
\end{proof}
\end{example}

The Bayesian filter is an essential tool for numerical computations of nonlinear filtering. The Kalman filter (see Jazwinski \cite{jazwinski}) is also an important tool, but it only applies to linear Gaussian models or models that are well-approximated as such. The contemporary way to compute nonlinear filters is via Monte Carlo with a particle filter (which we'll talk about in a later section). In continuous time, approximating filters based on discretization of the differential $dY_t$ have been shown to converge as $\Delta t\searrow 0$ for a certain class of filtering problems, but we must be able to approximate the law of $X_t$, and $h$ must be bounded (see Kushner \cite{kushner08}).

In summary, an HMM consists of a pair of process $(X_t,Y_t)$ where $X_t$ is an unobserved signal which is a Markov process, while $Y_t$ is an observable process that depends on $X_t$ through a system of known functions and known parameters. We use filtering to compute the posterior distribution of $X_t$ given $\mathcal F_t^Y$.

\chapter{Filtering and the VIX}

\section{Stochastic Volatility} 
\label{sec:intoSVM}
Consider an equity model with stochastic volatility,
\[dS_t = \mu S_tdt+f(X_t)S_tdW_t\]
where $S_t$ is the price of a stock, the function $f(x)$ is known and $X_t$ is a hidden Markov process. For instance, the Heston model, where $f(x) = \sqrt{X}$ and 
\[dX_t = \kappa(m-X_t)dt+\gamma\sqrt{X_t}dB_t\]
where we model the volatility leverage effect by saying that $\frac{1}{t}\mathbb EB_tW_t=\rho $ with $\rho\in[-1,0)$.

In general, if we observe a continuum of prices, then $f(X_t)$ is measurable with respect to the filtration generated by $\{S_\tau:\tau\leq t\}$. Let $Y_t=\log S_t$, and notice that 
\[dY_t = \left(\mu-\frac{1}{2}f^2(X_t)\right)dt+f(X_t)dW_t.\]
For a fixed $t>0$, let $(t_k)_k$ be a partition of $[0,t]$, then the quadratic variation of $Y$ is the cumulative variance
\[[Y]_t = \lim_{\|P\|\searrow0}\sum_k(\Delta Y_{t_k})^2=\int_0^tf^2(X_\tau)d\tau\qquad\hbox{in probability}\]
where $\|P\|=\sup_k(t_{k+1}-t_k)$. Clearly, then $\int_0^tf^2(X_\tau)d\tau$ is $\mathcal F_t^Y$-measurable, and if $f(X_t)$ is a continuous process we have
\[\frac{d}{dt}[Y]_t = f^2(X_t)\]
is also $\mathcal F_t^Y$-measurable. So at the very least (e.g. for $f(X_t)$ a continuous process) volatility is observable for almost everywhere $t$, and $X_t$ is observable if $f^{-1}$ exists. 

Nonetheless, it is still beneficial to have a Markov structure for $X_t$ so that we can price derivatives on $S_t$. For instance, in the example by Elliot \cite{elliot2007}, the Black-Scholes price of a European call option in the presence of Markovian volatility, when $\rho = 0$ is
\[C(t,S_t,X_t;T,K)=\mathbb E^*[C_{BS}(t, S_t;T,K,Z_{[t,T]})|X_t,S_t] =\mathbb E^*[C_{BS}(t, S_t;T,K,Z_{[t,T]})|\mathcal F_t^Y]\]
where $Z_{[t,T]} = \frac{1}{T-t}\int_t^Tf^2(X_s)ds$, and $\mathbb E^*[~\cdot~]$ is the market's pricing measure. Given $X_t$ and the parameters of $X$'s dynamics under the market measure, we can compute the expected return of the call option either explicitly or through Monte Carlo. 

If $X_t$ is not known (i.e. observations are discrete) then we could take a filtering expectation,
\[C(t,S_t;T,K)\stackrel{?}{=}\int C(t,S_t;T,K,x)\pi_t(dx)\]
but such a price would be significantly biased if the market placed any premium on volatility. This option pricing formula exemplifies the challenge of interpreting the filter in financial math.

\section{The VIX}
Filtering can be used to extract the risk-premium placed on volatility by the market. Let $S_t$ be the S\&P500 index. For any time $t$ and some time-window $T>0$, the VIX index is the square root of the market's prediction of average variance during $[t,t+T]$,
\[V_t= \mathbb E^*\left[\frac{1}{T}\int_t^{t+T}f^2(X_s)ds\Big|\mathcal F_t^m\right]\]
where $\mathbb E^*[~\cdot~]$ is the market's pricing measure, and $\mathcal F_t^m$ is the filtration generated by \textit{all} the information in the market (i.e. $\mathcal F_t^Y\subset\mathcal F_t^m$). The process $V_t$ is the `fair' price of a variance swap whose floating leg is the realized variance,
\[RV_{[t,t+T]} =\lim_{\|P\|\searrow 0}\frac{1}{T}\sum_{k}(\Delta Y_{t_k})^2=_p\frac{1}{T}\int_t^{t+T}f^2(X_s)ds\]
where $(t_k)_k$ is a partition of $[t,t+T]$ (i.e. $t=t_0<t_1<\dots t_N=t+T$, with $\|P\|=\sup_k(t_{k+1}-t_k)$ going to zero as $N$ gets large). For a pre-specified notional amount, the payoff of a variance swap is
\[\hbox{notional}\times\left(RV_{[t,t+T]} - V_t\right).\]

\subsection{The VIX Formula}
For diffusion models without jumps, it was shown by Demeterfi et al \cite{derman} that a portfolio of out-of-the-money call and put contracts along with a short position in a futures contract replicates the VIX. It was shown by Carr et al \cite{carr2002,carr2005} that models with jumps can be approximated by the same setup. The following lemma derives the strategy for diffusions:
\begin{lemma} Let $T$ denote the life of the contract, let $F_{t,T}$ denote the future price on $S_{t+T}$ at time $t\geq 0$, and let $r$ be the rate so that $F_{t,T}=S_te^{rT}$. If $S_t$ is purely a diffusion process (i.e. has no jump terms in its differential), then the market's expectation of future realized variance is

\begin{equation}
\label{eq:VIX}
V_t=\frac{2e^{rT}}{T}\left(\int_{K\leq F_{t,T}}P_t(K,T)\frac{dK}{K^2}+\int_{K\geq F_{t,T}}C_t(K,T)\frac{dK}{K^2}     \right)
\end{equation}
where $P_t(K,T)$ and $C_t(K,T)$ denote the price of a put and a call option at time $t$ with strike $K$ and time to maturity $T$, respectively.
\end{lemma}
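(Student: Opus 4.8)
The plan is to reduce the variance expectation to a single ``log contract'' and then replicate that contract statically with the continuum of puts and calls. First I would write the dynamics of $S_t$ under the pricing measure $\mathbb E^*$, where the drift is the risk-free rate, $dS_t = rS_t\,dt + f(X_t)S_t\,dW_t^*$. Setting $\sigma_t = f(X_t)$ and applying It\^o's formula to $\log S_t$ gives $d\log S_t = (r - \tfrac12\sigma_t^2)\,dt + \sigma_t\,dW_t^*$. Subtracting this from $dS_t/S_t = r\,dt + \sigma_t\,dW_t^*$ cancels the stochastic term and isolates the instantaneous variance, $\tfrac12\sigma_t^2\,dt = dS_t/S_t - d\log S_t$. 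Here the purely-diffusive hypothesis is essential: a jump in $S$ would add a $\log$-correction that breaks this identity.

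Integrating over $[t,t+T]$ yields $\int_t^{t+T}\sigma_s^2\,ds = 2\int_t^{t+T} dS_s/S_s - 2\log(S_{t+T}/S_t)$. Next I would take $\mathbb E^*[\,\cdot\,|\mathcal F_t^m]$. Since $\int dS_s/S_s$ has drift $r$ and a martingale stochastic part, its conditional expectation is $rT$, while $F_{t,T}=S_te^{rT}$ lets me re-center the log term as $\log(S_{t+T}/S_t) = \log(S_{t+T}/F_{t,T}) + rT$. The two $rT$ contributions cancel and, dividing by $T$, the problem collapses to evaluating a single log contract,
$$V_t = -\frac{2}{T}\,\mathbb E^*\!\left[\log\frac{S_{t+T}}{F_{t,T}}\,\Big|\,\mathcal F_t^m\right].$$

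The core step is to replicate the payoff $\varphi(s)=\log(s/F_{t,T})$ with options via the static spanning identity (Taylor's theorem with integral remainder, split at the reference $\kappa=F_{t,T}$),
$$\varphi(S_{t+T}) = \varphi(\kappa) + \varphi'(\kappa)(S_{t+T}-\kappa) + \int_0^\kappa \varphi''(K)(K-S_{t+T})^+\,dK + \int_\kappa^\infty \varphi''(K)(S_{t+T}-K)^+\,dK.$$
For $\varphi(s)=\log(s/F_{t,T})$ one has $\varphi(\kappa)=0$, $\varphi'(\kappa)=1/F_{t,T}$, and $\varphi''(K)=-1/K^2$. Taking the conditional pricing expectation, the linear term vanishes because $\mathbb E^*[S_{t+T}|\mathcal F_t^m]=F_{t,T}$, while $\mathbb E^*[(K-S_{t+T})^+|\mathcal F_t^m]=e^{rT}P_t(K,T)$ and $\mathbb E^*[(S_{t+T}-K)^+|\mathcal F_t^m]=e^{rT}C_t(K,T)$ convert the two integrals into the put and call legs weighted by $dK/K^2$. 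Substituting into the log-contract formula, the minus sign there combined with $\varphi''=-1/K^2$ produces exactly (\ref{eq:VIX}).

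The main obstacle I anticipate is justifying the spanning identity and the exchange of expectation with the strike integrals, rather than the algebra. I would establish the identity by integrating the second-order remainder and checking that the kernels $(K-s)^+$ and $(s-K)^+$ reproduce the fundamental-theorem-of-calculus terms, then invoke Fubini to pass $\mathbb E^*$ inside $\int dK$. This requires mild integrability so the put/call integrals converge as $K\to 0^+$ and $K\to\infty$, which holds because $P_t$ and $C_t$ decay fast enough against the $1/K^2$ weight. The secondary point to state carefully is the pure-diffusion assumption, which is precisely what guarantees that the It\^o step in the first paragraph carries no jump correction.
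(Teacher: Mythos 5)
Your proposal is correct and follows essentially the same route as the paper: the same It\^o cancellation isolating $\tfrac12\sigma_t^2\,dt = dS_t/S_t - d\log S_t$, the same re-centering of the log at $F_{t,T}$, and the same spanning identity with kernel $dK/K^2$ (which the paper states directly for the log payoff and calls ``simple calculus,'' while you derive it as Taylor's theorem with integral remainder split at $\kappa=F_{t,T}$), followed by taking $\mathbb E^*[\,\cdot\,|\mathcal F_t^m]$ and discounting to convert expected payoffs into $P_t(K,T)$ and $C_t(K,T)$. The only cosmetic difference is ordering --- you take expectations before replicating the log contract, whereas the paper replicates pathwise and then takes expectations --- and your explicit attention to Fubini and integrability is a welcome tightening of a step the paper leaves implicit.
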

\begin{proof} Under the market measure, there is a Wiener process $dW_t^*$ such that the returns on the stock satisfy
\[\frac{dS_t}{S_t} = rdt+f(X_t)dW_t^*\]
and the log-price satisfies
\[d\log(S_t) = dY_t = \left(r-\frac{1}{2}f^2(X_t)\right)dt+f(X_t)dW_t^*.\]
Integrating the returns and the log-price separately and then subtracting, we can eliminate all randomness to get the cumulative variance,
\[\int_t^{t+T}\frac{dS_\tau}{S_\tau} -\int_t^{t+T}dY_\tau = \frac{1}{2}\int_t^{t+T}f^2(X_\tau)d\tau\]
which shows us that the realized variance satisfies the following:
\[RV_{[t,t+T]}=\frac{2}{T}\left(\int_t^{t+T}\frac{dS_\tau}{S_\tau}-\log(S_{t+T}/S_t)\right)\]

\[=\frac{2}{T}\left(\int_t^{t+T}\frac{dS_\tau}{S_\tau}-\log(S_{t+T}/F_{t,T})-\log(F_{t,T}/S_t))\right).\qquad\qquad(*)\]
Then, through some simple calculus, we see that
\[-\log(S_{t+T}/F_{t,T}) \]

\[= -\frac{S_{t+T}-F_{t,T}}{F_{t,T}}+\int_{K\leq F_{t,T}}(K-S_{t+T})^+\frac{dK}{K^2}+\int_{K\geq F_{t,T}}(S_{t+T}-K)^+\frac{dK}{K^2}.\]
Plugging this expression for $-\log(S_{t+T}/F_{t,T})$ into $(*)$ we see that the realized variance can be written as the payoffs of several contracts and a continuum of puts and calls that were out-of-the money at time $t$:
\[RV_{[t,t+T]}=\frac{2}{T}\Bigg(\int_t^{t+T}\frac{dS_\tau}{S_\tau} -\frac{S_{t+T}-F_{t,T}}{F_{t,T}}-\log(F_{t,T}/S_t)\]

\[+\int_{K\leq F_{t,T}}(K-S_{t+T})^+\frac{dK}{K^2}+\int_{K\geq F_{t,T}}(S_{t+T}-K)^+\frac{dK}{K^2}\Bigg).\]
Taking expectation of both with respect to the market measure, the noise in $\int\frac{dS}{S}$ vanishes, and since both $\log(F_{t,T}/S_t)=rT$ and $\mathbb E^*S_{t+T}=F_{t,T}$, we have

\[V_t = \mathbb E^*[RV_{t,T}|\mathcal F_t^m]\]

\[=\frac{2}{T}\Big(\int_{K\leq F_{t,T}}\mathbb E^*[(K-S_{t+T})^+|\mathcal F_t^m]\frac{dK}{K^2}+\int_{K\geq F_{t,T}}\mathbb E^*[(S_{t+T}-K)^+|\mathcal F_t^m]\frac{dK}{K^2}\Big)\]
and if we multiply and divide the RHS by $e^{-rT}$ we get the result.
\end{proof}

\subsubsection{Risk Premium}
Under the market measure, the expected returns on a variance swap are zero. However, statistically speaking, variance swaps exhibit a slight bias against the holder of the contract. In other words, the person who receives $\hbox{\textit{notional}}\times(RV_{[t,t+T]}-V_t)$ at time $t+T$ will have an average return that is slightly negative. But because of the volatility leverage effect, the variance swap has the potential to provide relief in the form of a positive cash flow when volatility is high and equities are losing. When quoted in the market, the VIX is quoted as the square-root of $V_t$ in percentage points, and is a mean-reverting process that drifts between 15\% and 50\%. The VIX has the nickname `the investor fear gauge', as it should because it is composed primarily of out-of-the-money options which means that there is an increase in crash-a-phobia whenever the VIX increases. 

To get a more precise understanding of investors' fears, it would be nice to remove any actual increases in volatility and merely examine the bias in the market measure's prediction of variance. In other-words, we would like to predict variance in the physical measure, and then compare it with the VIX's prediction to gain a sense of how much of a premium is being placed on risk, or how much fear there is out there.

If we can identify an HMM whose observable component generates $\mathcal F_t^m$, then the market's price of volatility risk (aka the risk-premium) is
\[\mathcal {RP}_t = V_t -  \mathbb E\left[\frac{1}{T}\int_t^{t+T}f^2(X_s)ds\Big|\mathcal F_t^m\right].\]

\begin{remark}
Carr and Wu \cite{carr2006} point out that if we define the martingale change of measure with the Radon-Nykodym derivative $M_t$, then
\[V_t = \mathbb E_t^*[RV_{t,t+T}] = \frac{ \mathbb E_t[M_{t+T}RV_{t,t+T}] }{ \mathbb E_t[M_{t+T}]}=\mathbb E_t[RV_{t,t+T}]+cov_t\left(\frac{M_{t+T}}{\mathbb E_t[M_{t+T}]},RV_{t,t+T}\right)\]
where $\mathbb E_t[~\cdot~]= \mathbb E[~\cdot~|\mathcal F_t^m]$ and $\mathbb E_t^*[~\cdot~]= \mathbb E^*[~\cdot~|\mathcal F_t^m]$, which leads to an expression for the risk-premium, and is $\mathcal F_t^m$-measurable provided that $M_{t+T}$ is known.

\end{remark}
\subsubsection{More on the VIX}
Historically, indices like the S\&P500 exhibit contrary motion with volatility. In particular, periods of high volatility often coincide with bearish markets. Whaley \cite{whaley} describes how tradable volatility assets, such as the VIX, provided market makers with new ways to hedge the options they had written. For instance, a market maker who was short a portfolio of options has always been able to go long in some other types of contracts to reduce the portfolio's Delta to almost zero, meaning that the portfolio would not be hugely affected by small changes in the value of the underlying. When VIX was introduced, it allowed the same market maker the opportunity to also hedge the Vega of his/her short position in options. Prior to volatility hedging instruments, a market maker might be exposed to the rising options prices that occur as volatility increases. However, instruments such VIX futures and calls on VIX futures changed all that, as a short position in options could then have its Vega reduced to almost nothing with the appropriate number of VIX contracts.

Traders use VIX futures and options to hedge in times of uncertain volatility in the SPX or the SPY.\footnote{SPX is the S\&P500; SPY is the tracking stock for the S\&P 500.} When volatility traders notice a spread between the VIX and VIX futures, they realize that a correction is probable. Therefore, if VIX is trading higher than the VIX futures price, then buying calls in both SPY and VIX will make money because either a) VIX goes down and the SPY goes up which places the SPY call in-the-money, or b) the VIX stays up and the VIX futures close the gap which places the VIX call in-the-money. A similar strategy with puts on SPY and VIX can be devised when the VIX is significantly lower than the VIX futures price. The rule of thumb is: provided that the spread between VIX and VIX futures is wide enough, the correction in VIX will create enough change in the market that one of these straddles can cover its initial cost and provide some profit to the investor. The operative word in the strategy mentioned in this paragraph is \textit{spread}, which is precisely what we are looking for as we filter for the risk-premium.

\chapter{Stochastic Volatility Filter for Heston Model}

\noindent With discrete observations, we derive a Bayesian stochastic volatility filter for a Heston model. Let $Y_t$ denote the log-price of the equity, and let $\sqrt{X_t}$ denote volatility. The dynamics of the processes are,
\begin{eqnarray}
\label{eq:heston}
dX_t &=& \kappa(\bar X-X_t)dt+\gamma\sqrt{X_t}dB_t\\
\label{eq:logReturns}
dY_t& =& \left(\mu-\frac{1}{2}X_t\right)dt+\sqrt{X_t}\left(\rho dB_t+\sqrt{1-\rho^2}dW_t\right)
\end{eqnarray}
and the interpretation of the model parameters is as follows:
\begin{eqnarray}
\nonumber
\bar X&=&\hbox{the long-time average of $X_t$}\\
\nonumber
\kappa&=&\hbox{the rate of mean-reversion (on $X_t$)}\\
\nonumber
\gamma&=&\hbox{volatility of volatility}\\
\nonumber
\rho&=&\hbox{models volatility leverage effect when $\rho\in(-1,0)$}\\
\nonumber
\mu&=&\hbox{the mean-rate of returns on the stock}
\end{eqnarray}
Certain restrictions on the parameters need to be put in place, such as the Feller condition: $\gamma\leq \sqrt{2\kappa \bar X}$ to insure the $X_t$ is well-defined (see chapter volatility time scales in \cite{FPS00}). There are times $(t_n)_n$ for which the process is actually observed, and if our model is correct\footnote{`correct' not only means that the processes follow these parametric SDEs, but it also means that $\{W_t\}$ and $\{B_t\}$ are idiosyncratic noises that are endemic to this system and not correlated with other data in the market.} then $\mathcal F_t^m=\mathcal F_t^Y = \sigma\{Y_{t_n}:t_n\leq t\}$. 
\section{The Filter}
\label{sec:volFilter}
We showed in previous lectures that $X_t$ is measurable when $Y$ is observed continuously. It is also straight forward to show that $\mathbb E[g(X_t)|\mathcal F_t^Y]\rightarrow g(X_t)$ as the partition of $[0,t]$ shrinks to zero. 
\begin{lemma} 
\label{lemma:discreteToCont}
For any $N\in\mathbb Z^+$ let there be a partition of $[0,t]$ into $N$-many points, and let $\mathcal F_t^N$ denote the filtration generated by the observations of $Y$ at the partitioned points $(i.e.~\mathcal F_t^N = \sigma\{\{Y_{t_n}\}_{n=0}^N\})$. If the filtrations are increasing with $N$, then  in the context of the model given by (\ref{eq:heston}) and (\ref{eq:logReturns}) we have
\[\mathbb E[g(X_t)|\mathcal F_t^N]\rightarrow g(X_t)\qquad\hbox{a.s.}\]
as $N\nearrow\infty$ for any function $g(x)$.
\end{lemma}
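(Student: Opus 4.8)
The plan is to recognize this as a statement about convergence of conditional expectations along an increasing family of $\sigma$-algebras, and to combine the martingale convergence theorem with the fact --- established earlier via quadratic variation --- that $X_t$ is recoverable from a continuous record of $Y$ on $[0,t]$. First I would set $\mathcal F_t^\infty := \sigma\left(\bigcup_N \mathcal F_t^N\right)$. Because the partitions are nested and the filtrations are increasing in $N$ by hypothesis, the martingale convergence theorem for increasing filtrations (L\'evy's upward theorem) applies to the integrable random variable $g(X_t)$, giving
\[\mathbb E[g(X_t)\mid \mathcal F_t^N] \longrightarrow \mathbb E[g(X_t)\mid \mathcal F_t^\infty]\qquad\hbox{a.s.}\]
Thus the whole problem reduces to showing that $g(X_t)$ is $\mathcal F_t^\infty$-measurable, for then the limiting conditional expectation collapses to $g(X_t)$ itself.

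Next I would identify $\mathcal F_t^\infty$ with the continuous-observation filtration $\mathcal F_t^Y=\sigma\{Y_s:s\le t\}$. Assuming the union of partition points is dense in $[0,t]$, the quantities $\bigcup_N\mathcal F_t^N$ determine the path of $Y$ on this dense set; since $Y$ has continuous paths, its values on a dense set determine $Y_s$ for every $s\le t$, so $\mathcal F_t^\infty=\mathcal F_t^Y$. To finish I would invoke the quadratic-variation computation from the stochastic-volatility chapter: for the Heston model $f(x)=\sqrt x$, so $\frac{d}{ds}[Y]_s=f^2(X_s)=X_s$, and the entire map $s\mapsto[Y]_s=\int_0^sX_u\,du$ on $[0,t]$ is $\mathcal F_t^Y$-measurable. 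Because $X$ is a continuous diffusion, differentiating this process at $s=t$ recovers $X_t$ as an $\mathcal F_t^Y=\mathcal F_t^\infty$-measurable random variable, whence $g(X_t)$ is $\mathcal F_t^\infty$-measurable and the claim follows.

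The main obstacle is the passage from the \emph{in-probability} convergence of the realized-variance sums $\sum_k(\Delta Y_{t_k})^2$ to genuine $\mathcal F_t^\infty$-measurability of the limiting quadratic variation. Each sum over the $N$-th partition is $\mathcal F_t^N$-measurable, but $[Y]_s$ is only an in-probability limit of these sums; I would extract an almost-surely convergent subsequence to place $\int_0^sX_u\,du$ in $\mathcal F_t^\infty$ (modulo a null set), and then control the null sets uniformly enough in $s$ to justify the $s$-differentiation that produces $X_t$ rather than merely the integral over $[0,t]$. The remaining technical point is integrability: L\'evy's theorem requires $g(X_t)\in L^1$, so the clean statement really wants $g$ bounded (or at least enough moments on $X_t$, which the Feller condition $\gamma\le\sqrt{2\kappa\bar X}$ on the CIR dynamics supplies).
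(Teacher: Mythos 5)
Your proof follows essentially the same route as the paper's: L\'evy's upward (0--1 law) convergence along the increasing filtrations, identification of $\bigvee_{N}\mathcal F_t^N$ with $\mathcal F_t^Y$ via path continuity of $Y$, and the quadratic-variation argument from the stochastic-volatility chapter that makes $X_t$ (hence $g(X_t)$) $\mathcal F_t^Y$-measurable. If anything you are more careful than the paper, which states the lemma for ``any function $g$'' and silently skips both the integrability needed for L\'evy's theorem and the passage from in-probability convergence of the realized-variance sums to genuine $\mathcal F_t^Y$-measurability of $X_t$ --- the two caveats you correctly flag.
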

\begin{proof}
Let $\mathcal F_t^Y=\sigma\{Y_s:s\leq t\}$. We assume that the filtrations are increasing with $N$, and they are certainly bounded by $\mathcal F_t^Y$,
\[\mathcal F_t^N\subset\mathcal F_t^{N+1}\subset\dots\dots\subseteq\mathcal F_t^Y.\]
Therefore there exists $\mathcal F_t^*$ such that $\bigvee_{N=1}^\infty\mathcal F_t^N=\mathcal F_t^*$. Now because the path of $Y_t$ is continuous on all sets of non-zero probability, the information contained in $\mathcal F_t^*$ is enough to measure the event $\{Y_t=y\}$ for any $y\in\mathbb R$ even if $t$ is not a partition point for any finite $N$. Therefore, $\mathcal F_t^* = \mathcal F_t^Y$ a.s., and by the L\'evy 0-1 law we have
\[\lim_N\mathbb E[g(X_t)|\mathcal F_t^N]=\mathbb E[g(X_t)|\mathcal F_t^*] = \mathbb E[g(X_t)|\mathcal F_t^Y]= g(X_t)\]
which proves the lemma.

\end{proof}
From lemma \ref{lemma:discreteToCont} we know that our posterior estimates are consistent as we refine the partition. Now we need to determine the filter for a specific partition. From here forward consider a specific finite partition of the time domain, and we only consider the filter at times when data has arrived. For ease in notation we let $X_n=X_{t_n}$, $Y_n=Y_{t_n}$ and $\mathcal F_n^Y=\mathcal F_{t_n}^Y$.
\begin{proposition} 
\label{prop:SVMfilter}
Let $(Y_t,X_t)$ be the price and volatility processes in the Heston model from (\ref{eq:heston}) and (\ref{eq:logReturns}), and assume the Feller condition $\gamma^2\leq 2\bar X\kappa$. Then there is a kernel that gives $X$'s transition density, $e^{Q^*\Delta t}(x|v)=\frac{d}{dx}\mathbb P(X_{t+\Delta t}\leq x|X_t=v)$ for any $x,v\in\mathbb R^+$, and the filtering distribution for $X_n$ at observation time $t_n=n\Delta t$ has a density. This density is given recursively as
\begin{align}
\nonumber
&\pi_n(x)\\
\label{eq:piDensity}
&=\frac{1}{c_n}\int \mathbb E\left[\mathbb L(y|(X_u)_{\{t_{n-1}\leq u\leq t_n\}},Y_{n-1})\Big|X_n=x,X_{n-1}=v,Y_{n-1}\right]e^{Q^*\Delta t}(x|v)\pi_{n-1}(dv)\Bigg|_{y=Y_n}\ ,
\end{align}
for almost-everywhere $x\in\mathbb R^+$, where $c_n$ is a normalizing constant, and $\mathbb L$ is the likelihood of the any path $(x_u)_{\{t_{n-1}\leq u\leq t_n\}}$ given observations $Y_n$ and $Y_{n-1}$, and is given by 
\[\mathbb L(y|(x_u)_{\{t_{n-1}\leq u\leq t_n\}},Y_{n-1}) =\frac{\exp\left\{ -\frac{1}{2}\left(\frac{ (y-Y_{n-1})-\left(\mu\Delta t-.5\int_{t_{n-1}}^{t_n}x_udu\right)-\rho\xi_n(x)}{\sqrt{(1-\rho^2)\int_{t_{n-1}}^{t_n}x_udu}}\right)^2 \right\}}{ \sqrt{(1-\rho^2)\int_{t_{n-1}}^{t_n}x_udu}}\]
with 
\[\xi_n(x) = \frac{1}{\gamma}\left\{ \Delta x_{n-1}-\kappa\left(\bar X\Delta t-\int_{t_{n-1}}^{t_n}x_udu\right)\right\}\ .\]
\end{proposition}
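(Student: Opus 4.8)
The plan is to treat this as a one-step Bayes recursion for the hidden-Markov pair, exactly as in the discrete-observation example of Chapter~1, but with the essential new feature that the correlation $\rho$ makes the likelihood of the increment $Y_n-Y_{n-1}$ depend on the \emph{entire} signal path $(X_u)_{t_{n-1}\le u\le t_n}$ rather than on the endpoints alone. First I would record the two standard facts I am entitled to assume: under the Feller condition $\gamma^2\le 2\bar X\kappa$ the CIR equation (\ref{eq:heston}) keeps $X_t>0$ and admits a transition density $e^{Q^*\Delta t}(x\mid v)$ (the non-central $\chi^2$ kernel), and, because (\ref{eq:heston}) is autonomous in $X$, the signal is Markov on its own, so that $\mathbb P(X_n\in dx\mid X_{n-1}=v,\mathcal F_{n-1}^Y)=e^{Q^*\Delta t}(x\mid v)\,dx$.

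The computational heart of the argument is the derivation of the path-conditional likelihood $\mathbb L$. Integrating (\ref{eq:logReturns}) over $[t_{n-1},t_n]$ gives
\[
Y_n-Y_{n-1}=\mu\Delta t-\tfrac12\!\int_{t_{n-1}}^{t_n}\!X_u\,du+\rho\!\int_{t_{n-1}}^{t_n}\!\sqrt{X_u}\,dB_u+\sqrt{1-\rho^2}\!\int_{t_{n-1}}^{t_n}\!\sqrt{X_u}\,dW_u.
\]
The point is that the $B$-integral is \emph{measurable with respect to the signal path}: integrating (\ref{eq:heston}) and solving for the stochastic integral yields $\int_{t_{n-1}}^{t_n}\sqrt{X_u}\,dB_u=\xi_n$, precisely the functional $\xi_n(x)$ in the statement. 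Hence, conditioning on the whole path $(X_u)_{t_{n-1}\le u\le t_n}$ freezes the drift and the $\rho$-term, and the only surviving randomness is the $W$-integral. Since $W\perp B$, conditioning on the path makes $\sqrt{X_u}$ an independent (effectively deterministic) integrand against $W$, so $\int_{t_{n-1}}^{t_n}\sqrt{X_u}\,dW_u$ is Gaussian with mean $0$ and variance $\int_{t_{n-1}}^{t_n}X_u\,du$; I would verify this through the conditional characteristic function. Reading off the resulting normal density of $Y_n-Y_{n-1}$ at $y=Y_n$ produces exactly $\mathbb L(y\mid(x_u),Y_{n-1})$, the path-independent factor $\sqrt{2\pi}$ being absorbed into $c_n$.

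With $\mathbb L$ in hand I would assemble the recursion by Bayes' rule. Writing $\pi_n(x)\propto p(Y_n,X_n=x\mid\mathcal F_{n-1}^Y)$ and inserting the intermediate variable $X_{n-1}=v$,
\[
p(Y_n,X_n=x\mid\mathcal F_{n-1}^Y)=\int p(Y_n\mid X_n=x,X_{n-1}=v,Y_{n-1})\,e^{Q^*\Delta t}(x\mid v)\,\pi_{n-1}(dv),
\]
using the autonomous-Markov fact for the middle factor and $\pi_{n-1}(dv)=\mathbb P(X_{n-1}\in dv\mid\mathcal F_{n-1}^Y)$ for the last. The endpoint-conditional likelihood $p(Y_n\mid X_n=x,X_{n-1}=v,Y_{n-1})$ is recovered from the path-conditional $\mathbb L$ by the tower property, averaging over the law of the $X$-bridge pinned at $v$ and $x$; this is exactly the inner expectation $\mathbb E[\,\mathbb L\mid X_n=x,X_{n-1}=v,Y_{n-1}\,]$. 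Normalizing by $c_n=p(Y_n\mid\mathcal F_{n-1}^Y)$, the integral of the numerator over $x$, then delivers (\ref{eq:piDensity}).

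The main obstacle is the rigorous handling of the infinite-dimensional bridge expectation. I must justify that conditioning first on the full path and then averaging over the $X$-bridge pinned at the two endpoints is legitimate (a regular-conditional-probability argument on path space), that this bridge expectation is finite, and that its interchange with the Bayes factorization is valid. The conditionally-Gaussian step and the identification $\int\sqrt{X_u}\,dB_u=\xi_n$ are the two clean ideas that make the formula transparent; everything else is bookkeeping around measurability and the positivity guaranteed by the Feller condition.
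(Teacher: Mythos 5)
Your proposal follows essentially the same route as the paper's proof: the same identification $\int_{t_{n-1}}^{t_n}\sqrt{X_u}\,dB_u=\xi_n$ obtained by solving the CIR equation for the stochastic integral, the same observation that conditioning on the signal path leaves only the independent $W$-integral so the increment is conditionally Gaussian with variance $(1-\rho^2)\int_{t_{n-1}}^{t_n}X_u\,du$, and the same assembly of the recursion by Bayes' rule with the endpoint-conditional likelihood recovered as a bridge (tower-property) average of $\mathbb L$.

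The one item the paper settles that you leave as an open obstacle is the existence claim itself, which is part of the proposition: you write $p(Y_n,X_n=x\mid\mathcal F_{n-1}^Y)$ and $\pi_n(x)$ as densities from the outset, but nothing in your argument shows the conditional law of $(Y_n,X_n)$ given the past is absolutely continuous. The paper does not prove this either; it cites the known result (Dr\u{a}gulescu--Yakovenko) that the Heston pair $(Y_t,X_t)$ has a smooth joint transition density $\mbox{P}^{\Delta t}(y,x\,|\,s,v)$ collecting no mass at the boundary, which simultaneously legitimizes the Bayes factorization with densities and shows $\pi_n$ has a density even when $\pi_{n-1}$ is not smooth (it is smoothed by convolution with the kernel in the $dv$-integral). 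Your plan would be complete once you add that citation or an equivalent argument; the rest of your "bookkeeping" concerns (regular conditional probabilities on path space, interchange of the $y$-derivative with the bridge expectation) are exactly the steps the paper carries out formally when it writes $\mbox{P}^{\Delta t}(y,x|Y_{n-1},v)=\frac{\partial}{\partial y}\mathbb E\left[\mathbb E\left\{\mathbf 1_{Y_n\leq y}\,\Big|\,(X_u),Y_{n-1}\right\}\Big|X_n=x,X_{n-1}=v,Y_n,Y_{n-1}\right]\Gamma^{\Delta t}(x|v)$ and pulls the derivative inside.
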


\begin{proof} Given the Feller condition, the CIR process $dX_t=\kappa(\bar X-X_t)dt+\gamma\sqrt{X_t}dB_t$ is well-known to have a transition density that can be written in terms of a modified Bessel function (see \cite{aitSahalia1999}), and so $\Gamma^{\Delta t}(\cdot|v)$ is a smooth density function for all $v\geq 0$. Furthermore, it was shown in \cite{DY2002} that $(Y_t,X_t)$ has a smooth transition density function, that is, 
\[\mbox P^{\Delta t}(y,x|s,v)\doteq \frac{\partial^2}{\partial y\partial x}\mathbb P(Y_n\leq y,X_n\leq x|Y_{n-1}=s,X_{n-1}=v)\]
is smooth for $x> 0$, $y> 0$, and $\Delta t>0$, and does not collect mass at $x=0$ or $y=0$. Hence, the filter has a density that can be written using Bayes rule:
\[\pi_n(x) = \frac{\int\mbox P^{\Delta t}(Y_n,x|Y_{n-1},v)\pi_{n-1}(v)dv}{\int\hbox{[numerator]}dx}\ ,\]
where we don't need to assume smoothness of $\pi_{n-1}$ because it is smoothed by its convolution with $\mbox P^{\Delta t}$ in the $dv$-integral.

Now, from equation (\ref{eq:logReturns}) we notice the following:
\begin{align*}
Y_n-Y_{n-1} &=  \mu\Delta t-\frac 12\int_{t_{n-1}}^{t_n}X_udu+\rho\int_{t_{n-1}}^{t_n}\sqrt{X_u}dB_u+\sqrt{1-\rho^2}\int_{t_{n-1}}^{t_n}\sqrt{X_u}d W_u\\
&=_d  \mu\Delta t-\frac 12\int_{t_{n-1}}^{t_n}X_udu+\rho\int_{t_{n-1}}^{t_n}\sqrt{X_u}dB_u+\sqrt{(1-\rho^2)\int_{t_{n-1}}^{t_n}X_udu}~~\mathcal Z
\end{align*}
where ``$=_d$" signifies equivalence in distribution, and $\mathcal Z$ is another independent standard normal random variable. This means that conditional on the path $(X_u)_{t_{n-1}\leq u\leq t_n}$ and $S_{n-1}$, 

\[\frac{Y_n-Y_{n-1} -\left(\mu\Delta t-\frac 12\int_{t_{n-1}}^{t_n}X_udu+\rho\int_{t_{n-1}}^{t_n}\sqrt{X_u}dB_u\right)}{\sqrt{(1-\rho^2)\int_{t_{n-1}}^{t_n}X_udu}}=_d\mathcal Z\ .\]
Then noticing $\xi_n$ evaluated at $(X_u)_{t_{n-1}\leq u\leq t_n}$ is the the same as $\xi_n\left(X\right)=\int_{t_{n-1}}^{t_n}\sqrt{X_u}dB_u$, it follows that
\[\frac{Y_n-Y_{n-1} -\left(\mu\Delta t-\frac 12\int_{t_{n-1}}^{t_n}X_udu+\rho\xi_n\left(X\right)\right)}{\sqrt{(1-\rho^2)\int_{t_{n-1}}^{t_n}X_udu}}=_d\mathcal Z\ .\]
This shows the likelihood of the path $(X_u)_{t_{n-1}\leq u\leq t_n}$ given $S_{n-1}$ and $S_n=y$ is in fact the function $\mathbb L$.

Finally, given Bayes rule for the density $\pi_n$, the expression in equation (\ref{eq:piDensity}) displays the filter using a probabilistic representation of the transition density:

\begin{align*}
&\mbox P^{\Delta t}(y,x|Y_{n-1},v)\\
&=\frac{\partial}{\partial y}\int_0^y\mbox P^{\Delta t}(z,x|Y_{n-1},v)dz \\
& =\frac{\partial}{\partial y}\mathbb P(Y_n\leq y|X_n=x,X_{n-1}=v,Y_{n-1})\Gamma^{\Delta t}(x|v)\\
& =\frac{\partial}{\partial y}\mathbb E\left\{\mathbf 1_{Y_n\leq y}\Big |X_n=x,X_{n-1}=v,Y_{n-1}\right\}\Gamma^{\Delta t}(x|v)\\
&=\frac{\partial}{\partial y}\mathbb E\left[\mathbb E\left\{\mathbf 1_{Y_n\leq y}\Big|( X_u)_{\{t_{n-1}\leq u\leq t_n\}},Y_{n-1}\right\}\Big| X_n=x, X_{n-1}=v,Y_n,Y_{n-1}\right]\Gamma^{\Delta t}(x|v)\\
&=\mathbb E\left[\frac{\partial}{\partial y}\mathbb E\left\{\mathbf 1_{Y_n\leq y}\Big|( X_u)_{\{t_{n-1}\leq u\leq t_n\}},Y_{n-1}\right\}\Big| X_n=x, X_{n-1}=v,Y_n,Y_{n-1}\right]\Gamma^{\Delta t}(x|v)\\
&\propto\mathbb E\left[\mathbb L(y|(X_u)_{\{t_{n-1}\leq u\leq t_n\}},Y_{n-1})\Big|X_n=x, X_{n-1}=v,Y_{n-1}\right]e^{Q^*\Delta t}(x|v)\ .
\end{align*}
Lastly, when computing the likelihood based on the time-$n$ observation, the last line is evaluated at $y=Y_n$. This completes the proof of the proposition.

\end{proof}
 At this point it seems that the filter is rather complicated, and would be difficult to implement in real-time. Often times, what one might do is consider a discrete scheme that approximates the SDEs:
\begin{eqnarray}
\label{eq:discreteSDEy}
\Delta Y_{n-1}&=& (\mu-.5X_{n-1})\Delta t+\sqrt{X_{n-1}}(\rho \Delta B_{n-1}+\sqrt{1-\rho^2}\Delta W_{k-1})\\
\label{eq:discreteSDEx}
\Delta X_{n-1} &=& \kappa(\bar X-X_{n-1})\Delta t+\gamma\sqrt{X_{n-1}}\Delta B_{n-1}
\end{eqnarray}
where $\Delta Y_{n-1}=Y_n-Y_{n-1}$ and $\Delta X_{n-1}=X_n-X_{n-1}$. Using (\ref{eq:discreteSDEy}) and (\ref{eq:discreteSDEx}), we can impute an approximate filtering density to the density given in Proposition \ref{prop:SVMfilter}:

\begin{equation}
\label{eq:volFilter}
\tilde\pi_n(x) = \frac{1}{c_n}\int \psi_n(x,v)e^{Q^*\Delta t}(x|v)\tilde \pi_{n-1}(dv)
\end{equation}
where $\psi_n(x,v)$ is the likelihood of $\{X_n=x,X_{n-1}=v\}$ given $\{Y_n,Y_{n-1}\}$,
\[\psi_n(x,v) = \frac{1}{\sqrt{v(1-\rho^2)\Delta t}}\exp\left\{-\frac{(\Delta Y_{n-1}-(\mu-.5v)\Delta t-\sqrt{v}\rho\Delta B_{n-1}(x,v))^2}{2v(1-\rho^2)\Delta t}\right\}\]

\[\hbox{with}\qquad \Delta B_{n-1}(x,v) = \frac{1}{\gamma\sqrt v}\left(x-v-\kappa(\bar X-v)\Delta t\right).\]
If the model simplification given by (\ref{eq:discreteSDEy}) and (\ref{eq:discreteSDEx}) can be considered `correct', then there is no need to dispute the validity of the filter given by (\ref{eq:volFilter}). But in general, if one knows apriori that the continuous-time SDEs are the correct model, then there needs to be some analysis to verify that the approximate filter (such as that in(\ref{eq:volFilter})) converges as $\Delta t\searrow 0$, that is
\[\left|\int g(x)\tilde\pi_n(dx)-\int g(x)\pi_n(dx)\right|\rightarrow 0\]
in a strong sense as $\Delta t\rightarrow 0$. As was mentioned earlier, it is well-known (see Kushner \cite{kushner08}) that approximate filters that are sometimes consistent, but the results in \cite{kushner08} do not apply to the Heston model.
\section{Extracting the Risk-Premium}
Under the physical measure, we have,

\[\mathbb EX_t = \mathbb EX_0e^{-\kappa t}+\bar X(1-e^{-\kappa t})\]
and so the expected value of realized variance ($RV_{[0,T]} \doteq \frac 1T\int_0^TX_sds$) is
\[\mathbb E_0RV_{[0,T]} = \bar X-\frac{\bar X-\mathbb E_0X_0}{\kappa T}\left(1-e^{-\kappa T}\right)\]
where we have (without loss of generality) considered the case at time 0, and we have denoted the posterior expectation as $\mathbb E_0^*[~\cdot~]=\mathbb E^*[~\cdot~|\mathcal F_0^Y]$ and $\mathbb E_0[~\cdot~]=\mathbb E[~\cdot~|\mathcal F_0^Y]$. Therefore, there is the following close-formula for the risk-premium 

\[\mathcal{RP}_0\doteq \mathbb E_0^*RV_{[0,T]}-\mathbb E_0RV_{[0,T]}\]


\[=\mathbb E_0^*RV_{[0,T]}-\bar X+\frac{\bar X-\mathbb E_0X_0}{\kappa T}\left(1-e^{-\kappa T}\right).\]
This expression for the risk-premium holds whenever volatility-squared is modeled with a mean-reverting SDE with drift term $\kappa(\bar X-X_t)$, not just the Heston model.

Under the risk-neutral measure, the market adds a risk-premium term to $dX_t$:
\[dX_t = \kappa(\bar X-X_t)dt-\Lambda_tX_tdt+\gamma\sqrt{X_t}dB_t^*\]
where $B_t^*$ Brownian motion under the market measure, and $\Lambda_t$ is the market price of volatility risk. We leave the modeling of $\Lambda_t$ open here because we will not delve deeply into its correlation structure. However, $\Lambda_t$ is most likely thought of as a mean-reverting process and could be modeled as such. Under the market's measure there is the following expectation of variance 
\[\mathbb E_0^*X_t =\mathbb E_0X_t-\int_0^t\mathbb E_0^*[X_s\Lambda_s]e^{-\kappa(t-s)}ds.\]
From this, we see that $\mathbb E_0^*RV_{[0,T]}$ can be written as follows,

\[\mathbb E_0^*RV_{[0,T]} = \frac{1}{T}\int_0^T\mathbb E_0^*X_tdt = \frac{1}{T}\int_0^T\mathbb E_0X_tds-\frac{1}{T}\int_0^T\int_0^t\mathbb E_0^*[X_s\Lambda_s]e^{-\kappa(t-s)}dsdt\]

\[=\mathbb E_0[RV_{0,T}] - \underbrace{\frac{1}{\kappa T}\int_0^T\left(1-e^{-\kappa(T-s)}\right)\mathbb E_0^*[X_s\Lambda_s]ds}_{\hbox{risk-premium}}\qquad\qquad(**)\ .\]
 In $(**)$, notice that if $X$ and $\Lambda$ are independent, then for $\kappa\gg 1$ the risk-premium simplifies to 
 \[\frac{1}{\kappa T}\int_0^T\left(1-e^{-\kappa(T-s)}\right)\mathbb E_0^*[X_s\Lambda_s]ds\sim -\frac{\bar X}{T}\mathbb E_0^*\Lambda_T\ .\]
%
\section{Filtering Average Volatility in Fast Time-Scales}
The purpose of fast time-scales in volatility modeling is to capture mean reverting effects that occur on the order of 2 to 3 days. In the Heston model, suppose we are in a fast time-scale where $\gamma\sim\sqrt\kappa$ for $\kappa$ large. Then, it can be shown that the distribution of $X_t$ settles into a $\Gamma$ distribution almost instantaneously,
\[X_t\Rightarrow\Gamma\left(2\bar X,\frac{1}{2}\right)\qquad\hbox{as }\kappa\nearrow\infty\]
for all $t>0$. Therefore, there is a fast-averaging of the realized variance,
\[\frac{1}{T}\int_0^TX_sds\rightarrow \bar X\qquad\hbox{in probability as }\kappa\nearrow\infty\]
and so the expected payoff of any contract that is a function of realized variance will be deterministic unless $\bar X$ is random and/or unknown. Thus, building a risk-premium into the dynamics of $X_t$ in the manner that we did in the previous section will not be meaningful in fast time-scales. An alternative idea would be to take $\bar X$ as a hidden regime-process that is governed by another Markov chain, and thus adds another dimension to the HMM. Then, we can take realized variance as our observations and write a filter to estimate the regime $\bar X_t$. We do this as follows:

Take $\bar X_t$ to be a Markov chain with generator $Q$, for which we assume the standard structure for changes; changes in $\bar X_t$ are governed by a Poisson jump process so that over a time interval of length $\Delta t$, the probability of $\bar X_t$ changing states more than once is $o(\Delta t)$. The new dynamics of $X_t$ are then
\[dX_t = \kappa(\bar X_t-X_t)dt+\gamma\sqrt{X_t}dB_t\]
and for such a model the realized variance is a random variable in fast time-scales,
\[\frac{1}{T}\int_0^TX_sds \sim \frac{1}{T} \int_0^T\bar X_sds\qquad\hbox{for $\kappa$ large.}\]

Let $(t_{n,\ell})_{n,\ell}$ be a partition of some finite time interval, say $10$ years, where $n$ denotes the $nth$ week, and $\ell$ denotes the $\ell th$ trade. Then
\[t_{n+1,\ell}-t_{n,\ell}=\Delta t=\hbox{1 week,}\qquad\qquad\hbox{for any $\ell$}\]
 \[t_{n,\ell+1}-t_{n,\ell}=\hbox{ time from $\ell th$ quote until the next trade during week $n$.}\]
We have observations on $Y_t$ at each $t_{n,\ell}$ (i.e. $Y_{n,\ell}=Y_{t_{n,\ell}}$ is the $\ell th$ observations on the $nth$ week). For any $n$ and $\ell$ let $\Delta Y_{n,\ell} = Y_{n,\ell+1} -Y_{n,\ell} $, and for simplicity let $t_n=t_{n,0}$. We then have the following model for weekly observations on realized variance,
\[Z_{n+1}\doteq\frac{1}{\Delta t}\sum_{\ell}(\Delta Y_{n,\ell})^2=\frac{1}{\Delta t}\int_{t_{n}}^{t_{n+1}}\bar X_sds+\epsilon_{n+1}\]
where $\epsilon_{n+1}$ is a noise process with $\mathbb E\epsilon_n\epsilon_{n'} = 0$ if $n\neq n'$. Clearly, $Z_n$ is observable  and $(Z_n,\bar X_n)$ is a Markov process. From here the goal is to estimate the the state-space and transition rates of $\bar X_t$, and then apply the nonlinear filtering results in estimating the variance risk-premium. Letting $\bar\pi_n(x_i) = \mathbb P(\bar X_{t_n}=x_i|Y_{0:n})$, we have  an estimate of the physical measure's expectation of realized variance:

\[\mathbb E_{t_n}\left[\frac{1}{T}\int_{t_n}^{t_n+T} X_sds\right]\approx\mathbb E_{t_n}\left[\frac{1}{T}\int_{t_n}^{t_n+T}\bar X_sds\right]=\frac{1}{T}\int_{t_n}^{t_n+T}\sum_ix_ie^{Q^*(s-t_n)}\bar\pi_n(x_i)ds\ ,\]
for $\kappa\sim \gamma^2\gg 1$.



\chapter{The Zakai Equation}

\noindent Let $X_t\in\mathcal S$ be a Markov process with generator $Q$. Let the domain of $Q$ be denoted by $\mathcal B(Q)$, and let $\mathcal B_b(Q)$ denote the subset of bounded functions in $\mathcal B(Q)$. For any function $g(x)\in\mathcal B_b(Q)$ we have the following limit: 
\[\frac{\mathbb E[g(X_{t+\Delta t})|X_t=x]-g(x)}{\Delta t} \rightarrow Qg(x)\]
as $\Delta t\searrow 0$, for any $x\in\mathcal S$. If $Q$ is densely-defined and its resolvent set includes all positive real numbers, then the Hille-Yosida theorem applies, allowing us to write the the distribution of $X_t$ with a contraction semi-group. In these notes we assume that such conditions hold and that the transition density/mass is generated by an operator semigroup denoted by $e^{Q^*t}$.

A standard nonlinear filtering problem in SDE theory assumes that $X_t$ is unobserved and that a process $Y_t$ is given by an SDE
\begin{eqnarray}
\label{eq:dY}
dY_t&=&h(t,X_t)dt+\gamma dW_t\qquad\hbox{observed}
\end{eqnarray}
where $W_t$ is an independent Wiener process, $\gamma>0$ and we assume that $h(t,\cdot)$ is bounded for all $t<\infty$.

The pair $(X_t,Y_t)$ is an HMM for which filtering can be used to find the posterior distribution. Let $\mathcal F_t^Y = \sigma\{Y_s:s\leq t\}$, and for any measurable function $g(x)$ let 
\[\hat g_t = \mathbb E[g(X_t)|\mathcal F_t^Y].\]
The posterior expectation of $\hat g_t$ is ultimately what is desired from filtering, but the methods for obtaining the posterior distribution are quite involved. The discrete Bayesian tools that we've used in earlier lectures cannot be used here because we are in a continuum that does not allow us to break apart the layers of the HMM. Instead, we will exploit well-known ideas from SDE theory to obtain the differentials for the filtering distribution. In particular, we will use the Girsanov theorem to obtain the Zakai equation.
\section{Discrete Motivation from Bayesian Perspective}
\label{eq:discreteZakai}
In their book, Karatzas and Shreve \cite{K-S} give a discrete motivation for how the Girsanov theorem works. In a similar fashion, we consider a discrete problem and then construct a change of measure from the ratio of the appropriate densities. We then show how it is analogous to its continuous-time counterpart, and leads to a discrete approximation of the Zakai equation.

To do so, we start by considering a discrete-time analogue of (\ref{eq:dY}),
\[Y_n=Y_{n-1}+h(t_{n-1},X_{n-1})\Delta t+\gamma\Delta W_{n-1},\]
and assume that the unconditional distribution of $X_t$ is a density for all $t\geq 0$ (the same idea will be applicable when $X_t$'s distribution has a mass function). We can easily apply Bayes theorem to obtain the filtering distribution on a Borel set $\mathcal A$:
\[\pi_n(\mathcal A) = \frac{1}{c_n}\int\psi_n(v)e^{Q^*\Delta t}(\mathcal A|v)\pi_{n-1}(dv)\]
where $e^{Q^*\Delta t}$ represents kernel of $X_t$'s transition densities, the likelihood function is
\[\psi_n(v) = \exp\left\{-.5\left(\frac{\Delta Y_{n-1}-h(t_{n-1},v)\Delta t}{\gamma\sqrt{\Delta t}}\right)^2\right\},\qquad\hbox{with }\Delta Y_{n-1}=Y_n-Y_{n-1},\]
and $c_n$ is a normalizing constant. The Lebesgue differentiation theorem can be applied to obtain the density of the posterior,
\[\frac{1}{|\mathcal A|}\pi_n(\mathcal A)\rightarrow \pi_n(dx) = \frac{1}{c_n}\int\psi_n(v)e^{Q^*\Delta t}(dx|v)\pi_{n-1}(dv)\]
when $\mathcal A$ shrinks nicely to $\{x\}$.

Keeping this discrete model and filter in mind, let's shift our attention to a joint density function of all observations and a possible path $(x_0,x_1,\dots,x_n)$ taken by $(X_0,X_1,\dots,X_n)$,
\[dp_n\doteq\mathbb P(Y_0,Y_1,\dots,Y_n;dx_0,dx_1,\dots,dx_n)\]

\[ = \mathbb P(Y_0,Y_1,\dots,Y_n|x_0,x_1,\dots,x_n)\mathbb P(dx_0,dx_1,\dots,dx_n)\]

\[=\left(\prod_{\ell=0}^{n-1}\psi_\ell(x_\ell)\right)\times \mathbb P(dx_0,dx_1,\dots,dx_n)=\left(\prod_{\ell=0}^{n-1}e^{-.5\left(\frac{\Delta Y_\ell-h(t_\ell,x_\ell)\Delta t}{\gamma\sqrt{\Delta t}}\right)^2}\right)\times \mathbb P(dx_0,dx_1,\dots,dx_n)\]

\[=e^{-.5\sum_{\ell=0}^{n-1}\left(\frac{\Delta Y_\ell-h(t_\ell,x_\ell)\Delta t}{\gamma\sqrt{\Delta t}}\right)^2}\times \mathbb P(dx_0,dx_1,\dots,dx_n)\]
where $\mathbb P(dx_0,dx_1,\dots,dx_n)$ can be obtained using the exponential of $Q^*$.

Next, consider an equivalent measure in $\tilde{\mathbb P}$ where $\Delta Y_n/\gamma$ is Brownian motion independent of $X_n$, and the law of $X_n$ remains the same. Under this new measure the joint density function is
\[d\tilde p_n\doteq\tilde{\mathbb P}(Y_0,Y_1,\dots,Y_n;dx_0,dx_1,\dots,dx_n) = \tilde{\mathbb P}(Y_0,Y_1,\dots,Y_n)\tilde{\mathbb P}(dx_0,dx_1,\dots,dx_n)\]

\[=\left(\prod_{\ell=0}^{n-1}e^{-.5\left(\frac{\Delta Y_\ell}{\gamma\sqrt{\Delta t}}\right)^2}\right)\times \mathbb P(dx_0,dx_1,\dots,dx_n)\]

\[=e^{-.5\sum_{\ell=0}^{n-1}\left(\frac{\Delta Y_\ell}{\gamma\sqrt{\Delta t}}\right)^2}\times \mathbb P(dx_0,dx_1,\dots,dx_n).\]
The ratio of these densities is written follows:
\[\frac{dp_n}{d\tilde p_n} \doteq M_n\Big|_{\mathcal F_n^Y} = \exp\left\{\sum_{\ell=0}^{n-1}\frac{h(t_\ell,x_\ell)\Delta Y_\ell}{\gamma^2}-\frac{1}{2}\sum_{\ell=0}^{n-1}\frac{h^2(t_\ell,x_\ell)\Delta t}{\gamma^2}\right\}\]
which is the likelihood ratio of any path for the discrete observation model. It is also the discrete analog of the exponential martingale that we use in the Girsanov theorem. Furthermore, we can use $M_n$ to rewrite the filtering expectation in terms of the alternative measure,
\[\mathbb E[g(X_n)|\mathcal F_n^Y] = \frac{\int g(x_n)dp_n}{\int dp_n}=\frac{\int g(x_n)M_nd\tilde p_n}{\int M_nd\tilde p_n}=\frac{\tilde{\mathbb E}[g(X_n)M_n|\mathcal F_n^Y]}{\tilde{\mathbb E}[M_n|\mathcal F_n^Y]},\]
and if we define $\phi_n[g] = \tilde{\mathbb E}[g(X_n)M_n|\mathcal F_n^Y]$ we can write the filtering expectation as
\[\hat g_n=\mathbb E[g(X_n)|\mathcal F_n^Y] = \frac{\phi_n[g]}{\phi_n[1]}\]
for any function $g(x)\in\mathcal B_b(Q)$.

It turns out to be advantageous to analyze under $\tilde{\mathbb P}$-measure because the dynamics $\phi_n$ are linear when we move to a continuum of observations. To get a sense of the linearity, consider the following discrete expansion for small $\Delta t$,
\[\phi_{n+1}[g] = \tilde{\mathbb E}[g(X_{n+1})M_{n+1}|\mathcal F_{n+1}^Y]\]

\[=\tilde{\mathbb E}\left[g(X_{n+1})\left(1+\frac{h(t_n,X_n)}{\gamma^2}\Delta Y_n\right)M_n\Big|\mathcal F_{n+1}^Y\right]+o(\Delta t)\]

\[=\tilde{\mathbb E}\left[g(X_{n+1})M_n\Big|\mathcal F_{n+1}^Y\right]+\tilde{\mathbb E}\left[g(X_{n+1})\left(\frac{h(t_n,X_n)}{\gamma^2}\right)M_n\Big|\mathcal F_{n+1}^Y\right]\Delta Y_n+o(\Delta t)\]
because $M_{n+1}=\left(1+\frac{h(t_n,X_n)}{\gamma^2}\Delta Y_n\right)M_n+o(\Delta t)$ by a discrete interpretation of It\^o's lemma, and $\Delta Y_n$ is $\mathcal F_{n+1}^Y$-measurable. Now, because $X$ is independent of $Y$ under $\tilde{\mathbb P}$, the conditioning up to time $n+1$ is superfluous and we can reduce it down to time $n$,
\[=\tilde{\mathbb E}\left[g(X_{n+1})M_n\Big|\mathcal F_n^Y\right]+\tilde{\mathbb E}\left[g(X_{n+1})\left(\frac{h(t_n,X_n)}{\gamma^2}\right)M_n\Big|\mathcal F_n^Y\right]\Delta Y_n+o(\Delta t).\]
Then we can again exploit the independence of $X$ from $Y$ under $\tilde{\mathbb P}$ and use an approximation of $X$'s backwards operator.

\[=\tilde{\mathbb E}\left\{\tilde{\mathbb E}\left[g(X_{n+1})M_n\Big|\mathcal F_n^Y\vee X_n\right]\Bigg|\mathcal F_n^Y\right\}\qquad\qquad\qquad\qquad\qquad\qquad\qquad\]

\[\qquad\qquad\qquad\qquad\qquad+\tilde{\mathbb E}\left\{\tilde{\mathbb E}\left[g(X_{n+1})\left(\frac{h(t_n,X_n)}{\gamma^2}\right)M_n\Big|\mathcal F_n^Y\vee X_n\right]\Bigg|\mathcal F_n^Y\right\}\Delta Y_n+o(\Delta t)\]

\[=\tilde{\mathbb E}\left\{(I+Q\Delta t)\tilde{\mathbb E}\left[g(X_n)M_n\Big|\mathcal F_n^Y\vee X_n\right]\Bigg|\mathcal F_n^Y\right\}\qquad\qquad\qquad\qquad\qquad\qquad\qquad\]

\[\qquad\qquad\qquad+\tilde{\mathbb E}\left\{(I+Q\Delta t)\tilde{\mathbb E}\left[g(X_n)\left(\frac{h(t_n,X_n)}{\gamma^2}\right)M_n\Big|\mathcal F_n^Y\vee X_n\right]\Bigg|\mathcal F_n^Y\right\}\Delta Y_n+o(\Delta t)\]

\[=\tilde{\mathbb E}\left[(I+Q\Delta t)g(X_n)M_n\Big|\mathcal F_n^Y\right]+\tilde{\mathbb E}\left[(I+Q\Delta t)g(X_n)\left(\frac{h(t_n,X_n)}{\gamma^2}\right)M_n\Big|\mathcal F_n^Y\right]\Delta Y_n+o(\Delta t)\]
and then using the fact that $\Delta Y_n\cdot \Delta t=o(\Delta t)$, we have

\[=\tilde{\mathbb E}[g(X_n)M_n|\mathcal F_n^Y]+\tilde{\mathbb E}[Qg(X_n)M_n|\mathcal F_n^Y]\Delta t+\tilde{\mathbb E}\left[g(X_n)\frac{h(t_n,X_n)}{\gamma^2}M_n\Big|\mathcal F_n^Y\right]\Delta Y_n+o(\Delta t)\]

\[=\phi_n[g]+\phi[Qg]\Delta t+\phi\left[g\frac{h}{\gamma^2}\right]\Delta Y_n+o(\Delta t)\]
which foreshadows the Zakai equation in a discrete setting,
\[\Delta \phi_n[g] = \phi[Qg]\Delta t+\phi\left[g\frac{h}{\gamma^2}\right]\Delta Y_n+o(\Delta t).\]

\section{Derivation of the Zakai Equation}
In this section we use the Girsanov theorem as the main tool in a formal derivation of the nonlinear filtering equations in continuous time. For ease in notation we let $h(t,x) = h(x)$, but this does change the results because the would merely need to be rewritten to include the time dependence in $h(~)$.

We start by considering the finite interval $[0,T]$ and defining the following exponential,
\[M_t \doteq \exp\left\{\frac{1}{2\gamma^2}\int_0^th^2(X_s)ds+\frac{1}{\gamma}\int_0^th(X_s)dW_s   \right\}\]

\[= \exp\left\{-\frac{1}{2\gamma^2}\int_0^th^2(X_s)ds+\frac{1}{\gamma^2}\int_0^th(X_s)dY_s   \right\}\]
for all $t\in[0,T]$. Since it was initially assumed that $W_t$ was independent, we can define an equivalent measure $\tilde{\mathbb P}$ by
\[d\tilde{\mathbb P} = M_T^{-1}d\mathbb P.\]
By the Girsanov theorem we know that 
\begin{enumerate}
\item $\tilde{\mathbb P}$ is a probability measure, and
\item $Y_t/\gamma$ is $\tilde{\mathbb P}$-Brownian motion for $t\in[0,T]$, conditioned on $X$.
\end{enumerate}
It is easy to show with moment generating functions that $\tilde{\mathbb P}(X_t\leq x) = \mathbb P(X_t\leq x)$. We can also easily show that $M_t$ is a true $\tilde{\mathbb P}$-martingale,
\[
\tilde{\mathbb E}M_t = \int M_t(\omega)d\tilde{\mathbb P}(\omega) =  \int M_t(\omega)M_T^{-1}(\omega)d\mathbb P(\omega)  = \mathbb E[M_t/M_T] 
\]
\[
= \mathbb E\left[\exp\left\{-\frac{1}{2\gamma^2}\int_t^Th^2(X_s)ds-\frac{1}{\gamma}\int_t^Th(X_s)dW_s   \right\} \right]
\]
\[=1\ .\]
Lastly, a simple lemma shows that $X$ and $Y$ are path-wise independent under $\tilde{\mathbb P}$:

\begin{lemma} $X$ and $Y$ are path-wise independent under $\tilde{\mathbb P}$.\end{lemma}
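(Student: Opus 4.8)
The plan is to establish pathwise independence by showing that the joint law of the trajectories $(X,Y)$ factorizes under $\tilde{\mathbb P}$, with the mechanism being conditioning on the entire signal path. Let $\mathcal F_T^X=\sigma\{X_s:s\le T\}$ denote the $\sigma$-algebra generated by the signal. It suffices to show that for every pair of bounded measurable functionals $F$ on the path space of $X$ and $G$ on the path space of $Y$,
\[\tilde{\mathbb E}[F(X)G(Y)] = \tilde{\mathbb E}[F(X)]\,\tilde{\mathbb E}[G(Y)].\]
Conditioning on $\mathcal F_T^X$ and pulling out the $\mathcal F_T^X$-measurable factor $F(X)$ gives $\tilde{\mathbb E}[F(X)G(Y)] = \tilde{\mathbb E}\big[F(X)\,\tilde{\mathbb E}[G(Y)\mid\mathcal F_T^X]\big]$, so the whole argument reduces to identifying the regular conditional law of $Y$ given $\mathcal F_T^X$.

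The key step is to show that this conditional law is Wiener measure scaled by $\gamma$, and in particular that it carries no dependence on the realization of the $X$-path. I would argue this by freezing the trajectory of $X$: once $(X_s)_{s\le T}$ is fixed, the integrand $h(X_s)$ becomes a deterministic function of time, and $M_T^{-1} = \exp\{-\frac{1}{\gamma}\int_0^T h(X_s)\,dW_s - \frac{1}{2\gamma^2}\int_0^T h^2(X_s)\,ds\}$ is exactly the Cameron–Martin–Girsanov density that removes the drift $\frac{1}{\gamma}h(X_s)$ from the $\mathbb P$-Brownian motion $W$. Since $Y_t/\gamma = W_t + \frac{1}{\gamma}\int_0^t h(X_s)\,ds$, the Girsanov conclusion already recorded above — that $Y/\gamma$ is a $\tilde{\mathbb P}$-Brownian motion conditioned on $X$ — says precisely that under the conditional measure $\tilde{\mathbb P}(\cdot\mid\mathcal F_T^X)$ the process $Y/\gamma$ is a standard Brownian motion. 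As Wiener measure is a fixed law with no dependence on the frozen drift, $\tilde{\mathbb E}[G(Y)\mid\mathcal F_T^X]$ equals the deterministic constant $\int G(\gamma w)\,\mathcal W(dw)$ almost surely, where $\mathcal W$ is Wiener measure.

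Substituting back gives $\tilde{\mathbb E}[F(X)G(Y)] = \big(\int G(\gamma w)\,\mathcal W(dw)\big)\,\tilde{\mathbb E}[F(X)]$; taking $F\equiv 1$ identifies the constant as $\tilde{\mathbb E}[G(Y)]$, and the desired factorization follows. I expect the main obstacle to be making the ``conditioning on $X$'' step rigorous: one must justify that freezing the path of $X$ converts the stochastic Girsanov density into a genuine deterministic-drift Cameron–Martin density, and that the resulting conditional law is honestly independent of the frozen path rather than merely a measurable function of it. A safe way to discharge this is to work at the level of characteristic functionals — compute $\tilde{\mathbb E}[\exp(i\int_0^T f_s\,dY_s)\mid\mathcal F_T^X]$ by re-expressing the conditional expectation under $\mathbb P$ via the density $M_T^{-1}$, and use that, given $X$, the product $\exp(i\int f\,dY)\,M_T^{-1}$ integrates over $W$ to $\exp(-\frac{\gamma^2}{2}\int_0^T f^2\,ds)$ for every test function $f$. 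This limiting expression has no $X$-dependence and is the characteristic functional of $\gamma$ times a Brownian motion, which pins down the conditional law and closes the argument.
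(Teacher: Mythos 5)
Your proposal is correct and follows essentially the same route as the paper's proof: both arguments condition on the path of $X$, invoke the Girsanov conclusion that the conditional law of $Y/\gamma$ given $X$ under $\tilde{\mathbb P}$ is Wiener measure (hence a deterministic constant independent of the frozen path), and then factorize the expectation of a product of path functionals via the tower property. Your additional remarks on making the ``freezing'' step rigorous through characteristic functionals go slightly beyond the paper's level of detail, but the underlying mechanism is identical.
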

\begin{proof}
For an arbitrary path-wise function $f_1$ we have
\[\tilde {\mathbb E}f_1(Y/\gamma) = \tilde {\mathbb E}[\tilde{\mathbb E}[f_1(Y/\gamma)|X]] = \tilde {\mathbb E}[\mathbb Ef_1(W)] = \mathbb Ef_1(W)\]
showing that $Y/\gamma$ is $\tilde{\mathbb P}$-Brownian motion unconditional on $X$. Then for another arbitrary path-wise function $f_2$ we have
\[\tilde {\mathbb E}f_1(Y/\gamma)f_2(X) = \tilde {\mathbb E}[f_2(X)\tilde {\mathbb E}[f_1(Y/\gamma)|X]]=\tilde {\mathbb E}[f_2(X)\mathbb Ef_1(W)]\]

\[=\mathbb Ef_1(W)\tilde{\mathbb  E}f_2(X)=\tilde{\mathbb E}f_1(Y/\gamma)\tilde{\mathbb E}f_2(X)\]
and so $X$ and $Y$ are $\tilde{\mathbb P}$-independent. 
\end{proof}

From here forward, define the measure $\phi_t$ on $\mathcal B_b(Q)$ as
\[\phi_t[g]\doteq\tilde{\mathbb E}[g(X_t)M_t|\mathcal F_t^Y].\]
With this new measure we can express another important result regarding the Girsanov change of measure, namely the Kallianpur-Streibel formula:
\begin{lemma}\textbf{Kallianpur-Streibel Formula:}
\[\mathbb E[g(X_t)|\mathcal F_t^Y] = \frac{\phi_t[g]}{\phi_t[1]}\]
for any $g(x)\in\mathcal B(Q)$.
\end{lemma}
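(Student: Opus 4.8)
The plan is to derive the formula from the abstract Bayes rule for changing measure inside a conditional expectation, using the fact that $M_T$ is the Radon--Nikodym derivative $d\mathbb P/d\tilde{\mathbb P}$ on $\mathcal F_T$. Writing $L=M_T$, the identity I want is
\[\mathbb E[g(X_t)|\mathcal F_t^Y]=\frac{\tilde{\mathbb E}[g(X_t)L|\mathcal F_t^Y]}{\tilde{\mathbb E}[L|\mathcal F_t^Y]},\]
valid whenever $g(X_t)$ is integrable, which is guaranteed for $g\in\mathcal B_b(Q)$ (and more generally whenever $g(X_t)M_T\in L^1(\tilde{\mathbb P})$). Once this is in hand, the whole task reduces to replacing $M_T$ by $M_t$ in both the numerator and the denominator so as to recognize $\phi_t[g]$ and $\phi_t[1]$.

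First I would establish the abstract Bayes step. Fix the sub-$\sigma$-algebra $\mathcal G=\mathcal F_t^Y\subset\mathcal F_T$ and claim that
\[\tilde{\mathbb E}[g(X_t)L|\mathcal G]=\mathbb E[g(X_t)|\mathcal G]\,\tilde{\mathbb E}[L|\mathcal G]\qquad\tilde{\mathbb P}\hbox{-a.s.}\]
To verify this I would test both sides against an arbitrary indicator $\mathbf 1_A$ with $A\in\mathcal G$: on one hand $\tilde{\mathbb E}[\mathbf 1_A g(X_t)L]=\mathbb E[\mathbf 1_A g(X_t)]=\mathbb E[\mathbf 1_A\mathbb E[g(X_t)|\mathcal G]]$, and on the other, since $\mathbf 1_A\mathbb E[g(X_t)|\mathcal G]$ is $\mathcal G$-measurable, $\mathbb E[\mathbf 1_A\mathbb E[g(X_t)|\mathcal G]]=\tilde{\mathbb E}[\mathbf 1_A\mathbb E[g(X_t)|\mathcal G]L]=\tilde{\mathbb E}[\mathbf 1_A\mathbb E[g(X_t)|\mathcal G]\tilde{\mathbb E}[L|\mathcal G]]$. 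Matching these for every $A\in\mathcal G$ gives the claim, and dividing by $\tilde{\mathbb E}[L|\mathcal G]$ (which is strictly positive a.s. because $\mathbb P$ and $\tilde{\mathbb P}$ are equivalent) yields the displayed Bayes formula.

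Next I would collapse $M_T$ to $M_t$. Let $\mathcal F_t=\mathcal F_t^Y\vee\sigma\{X_s:s\leq t\}$ be the joint filtration, so that $g(X_t)$ is $\mathcal F_t$-measurable and $M_t$ is, by the martingale computation already carried out above, a true $\tilde{\mathbb P}$-martingale with $\tilde{\mathbb E}[M_T|\mathcal F_t]=M_t$. For the denominator, the tower property over $\mathcal F_t^Y\subset\mathcal F_t$ gives $\tilde{\mathbb E}[M_T|\mathcal F_t^Y]=\tilde{\mathbb E}[\tilde{\mathbb E}[M_T|\mathcal F_t]|\mathcal F_t^Y]=\tilde{\mathbb E}[M_t|\mathcal F_t^Y]=\phi_t[1]$. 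For the numerator, pulling the $\mathcal F_t$-measurable factor $g(X_t)$ through the inner conditional expectation gives $\tilde{\mathbb E}[g(X_t)M_T|\mathcal F_t^Y]=\tilde{\mathbb E}[g(X_t)\tilde{\mathbb E}[M_T|\mathcal F_t]|\mathcal F_t^Y]=\tilde{\mathbb E}[g(X_t)M_t|\mathcal F_t^Y]=\phi_t[g]$. Substituting both into the Bayes formula finishes the proof.

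The main obstacle is purely the careful bookkeeping of the two measures and filtrations: one must make sure $M_T$ really is the density on $\mathcal F_T$ (so that the test-function argument with $d\mathbb P=L\,d\tilde{\mathbb P}$ is legitimate), that the martingale identity $\tilde{\mathbb E}[M_T|\mathcal F_t]=M_t$ is taken with respect to the joint filtration $\mathcal F_t$ rather than $\mathcal F_t^Y$ alone, and that $g(X_t)M_T$ is integrable so the conditional expectations are well-defined. None of these steps is computationally hard; the content lies entirely in correctly invoking the equivalence $d\mathbb P=M_T\,d\tilde{\mathbb P}$ together with the $\tilde{\mathbb P}$-martingale property verified earlier in the excerpt.
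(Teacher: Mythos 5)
Your proof is correct and follows essentially the same route as the paper's: both verify the identity by testing against an arbitrary $A\in\mathcal F_t^Y$ and invoking the change of measure $d\mathbb P=M_T\,d\tilde{\mathbb P}$ together with the $\tilde{\mathbb P}$-martingale property of $M_t$. If anything, your version is slightly more careful than the paper's, which writes $\mathbb E[\mathbf 1_Ag(X_t)]=\tilde{\mathbb E}[\mathbf 1_Ag(X_t)M_t]$ in one step and thereby leaves implicit exactly the collapse $\tilde{\mathbb E}[M_T|\mathcal F_t]=M_t$ over the joint filtration that you spell out.
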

\begin{proof} For any $ A\in\mathcal F_t^Y$ we have
\[\mathbb E\left\{\mathbf 1_A\mathbb E[g(X_t)|\mathcal F_t^Y]\right\}=\mathbb E[\mathbf 1_Ag(X_t)]=\tilde{\mathbb E}[\mathbf 1_Ag(X_t)M_t]\]

\[=\tilde{\mathbb E}\left\{\mathbf 1_A\tilde{\mathbb E}[g(X_t)M_t|\mathcal F_t^Y]\right\}=\tilde{\mathbb E}\left\{\mathbf 1_A\frac{\tilde{\mathbb E}[g(X_t)M_t|\mathcal F_t^Y]}{\tilde{\mathbb E}[M_t|\mathcal F_t^Y]}\tilde{\mathbb E}[M_t|\mathcal F_t^Y]\right\}=\tilde{\mathbb E}\left\{\mathbf 1_A\frac{\phi_t[g]}{\phi_t[1]}\tilde{\mathbb E}[M_t|\mathcal F_t^Y]\right\}\]

\[=\tilde{\mathbb E}\left\{\tilde{\mathbb E}\left[\mathbf 1_A\frac{\phi_t[g]}{\phi_t[1]}M_t\Big|\mathcal F_t^Y\right]\right\}=\tilde{\mathbb E}\left[\mathbf 1_A\frac{\phi_t[g]}{\phi_t[1]}M_t\right]=\mathbb E\left[\mathbf 1_A\frac{\phi_t[g]}{\phi_t[1]}\right]\]
and since $A$ was an arbitrary set, this shows that the result holds wp1.
\end{proof}

Now, for any $t\in[0,T]$, we use Fubini's theorem to bring the differential inside the expectation, and from there we apply It\^o's lemma, which gives us the following differential,
\[d\tilde{\mathbb E}[g(X_t)M_t|\mathcal F_T^Y] = \tilde{\mathbb E}[Qg(X_t)M_t|\mathcal F_T^Y]dt+ \tilde{\mathbb E}\left[g(X_t)\frac{h(X_t)}{\gamma^2}M_t\Big|\mathcal F_T^Y\right]dY_t\]
for any $g(x)\in\mathcal B_b(Q)$. From this we can construct the integrated form of the differential,
\[\tilde{\mathbb E}[g(X_t)M_t|\mathcal F_T^Y] \]

\[=\tilde{\mathbb E}[g(X_0)|\mathcal F_T^Y]+\int_0^t\tilde{\mathbb E}[Qg(X_s)M_s|\mathcal F_T^Y]ds+ \int_0^t\tilde{\mathbb E}\left[g(X_s)\frac{h(X_s)}{\gamma^2}M_s\Big|\mathcal F_T^Y\right]dY_s\]
and since $X$ is independent of $Y$ under the $\tilde{\mathbb P}$-measure, we can reduce the filtrations from $\mathcal F_T^Y$ to $\mathcal F_s^Y$ for all $s\leq T$, giving us,
\begin{equation}
\label{eq:intZakai}
\tilde{\mathbb E}[g(X_t)M_t|\mathcal F_t^Y] =\mathbb E[g(X_0)]+\int_0^t\tilde{\mathbb E}[Qg(X_s)M_s|\mathcal F_s^Y]ds+ \int_0^t\tilde{\mathbb E}\left[g(X_s)\frac{h(X_s)}{\gamma^2}M_s\Big|\mathcal F_s^Y\right]dY_s
\end{equation}
for all $t\leq T$. Inserting $\phi_s[\cdot]$ in (\ref{eq:intZakai}) wherever possible and then taking the differential with respect to $t$, we have \textbf{the Zakai equation:}
\begin{equation}
\label{eq:zakai}
d\phi_t[g] = \phi_t[Qg]dt+\phi_t\left[g\frac{h}{\gamma^2}\right]dY_t
\end{equation}
for all $t\leq T$. The Zakai equation can also be considered for general unbounded functions $g(x)\in\mathcal B(Q)$, but we have restricted ourselves to the bounded case in order to insure that $\phi_t[g]$ is finite almost surely. Existence of solutions to (\ref{eq:zakai}) is straight-forward because we have derived it by differentiating $\tilde{\mathbb E}[g(X_t)M_t|\mathcal F_t^Y]$. Uniqueness of measure-valued solutions to (\ref{eq:zakai}) has been shown by Kurtz and Ocone \cite{kurtzOcone1988} using a filtered martingale problem, and by Rozovsky \cite{rozovsky1992} using a Radon measure representation of $\phi_t$.
\subsection{The Adjoint Zakai Equation}
Depending on the nature of the filtering problem, the unnormalized probability measure $\phi_t[~\cdot~]$  may have a density/mass function. For example, suppose that $X_t\in\mathbb R$ is a diffusion process satisfying the SDE
\[dX_t= a(X_t)dt+\sigma dB_t\]
where $B_t\perp W_t$. Then the generator is $Q=\mathcal L = \frac{\sigma^2}{2}\frac{\partial^2}{\partial x^2}\cdot~+a(x)\frac{\partial}{\partial x}\cdot~$ and the Zakai equation is
\[d\phi_t[g] = \phi_t[\mathcal Lg]dt+\frac{1}{\gamma^2}\phi_t[gh]dY_t\]

\[=\frac{\sigma^2}{2}\phi_t\left[\frac{\partial^2}{\partial x^2}g\right]dt+\phi_t\left[a\frac{\partial}{\partial x}g\right]dt+\frac{1}{\gamma^2}\phi_t[gh]dY_t\]
for any bounded function $g(x)$ with a 2nd derivative. Depending on $a(x),\sigma$ and the initial conditions, $\tilde\pi_t$ may be a density so that 
\[\phi_t[g] = \int g(x)\tilde\pi_t(x)dx,\] 
and provided that certain regularity conditions are met, the adjoint of the Zakai equation gives us an SPDE for $\tilde\pi_t$,
\begin{equation}
\label{eq:adjointZakai}
d\tilde\pi_t(x) = \mathcal L^*\tilde\pi_t(x)dt+\frac{h(x)}{\gamma^2}\tilde\pi_t(x)dY_t
\end{equation}
with the initial condition $\tilde\pi_0(x)= \mathbb P(X_0\in dx)$, and with the adjoint operator given by $\mathcal L^* =\frac{\sigma^2}{2}\frac{\partial^2}{\partial x^2}\cdot~-\frac{\partial}{\partial x}\left(a(x)~\cdot~\right)$. Existence and uniqueness of such densities is beyond the scope of these notes. Readers who are interested in regularity of solutions should read the book by Pardoux \cite{pardoux}.

In the case of filtering distributions that are composed of mass functions, the adjoint equation is similar. For instance, if $X_t\in\{x_1,\dots,x_m\}$ is a finite-state Markov chain with generator $Q$, the adjoint equation holds without any regularity conditions, 
\[d\tilde\pi_t(x_i) = Q^*\tilde\pi_t(x_i)dt+\frac{h(x_i)}{\gamma^2}\tilde\pi_t(x_i)dY_t\]

\[= \sum_{j=1}^mQ_{ji}\tilde\pi_t(x_j)dt+\frac{h(x_i)}{\gamma^2}\tilde\pi_t(x_i)dY_t.\]
General existence and uniqueness for $\tilde{\pi}_t$ in this discrete case was shown by Rozovsky \cite{rozovsky1972}.

\subsection{Kushner-Stratonovich Equation}
Using the Zakai equation of (\ref{eq:zakai}) and observing that our assumption that $h$ is bounded implies $\mathbb P(\phi_t[1]<\infty)=1$ for all $t\in [0,T]$, we can apply It\^o's lemma to obtain
\begin{equation}
\label{eq:kushner}
d\hat g_t = d\left(\frac{\phi_t[g]}{\phi_t[1]}\right) = \widehat{Qg}_tdt+ \frac{\widehat{gh}_t-\hat g_t\hat h_t}{\gamma^2}\left(dY_t-\hat h_tdt\right).
\end{equation}
However, it should be mentioned that (\ref{eq:kushner}) was originally obtain a few years before the Zakai equation using other methods. Under the appropriate regularity conditions, the Kushner-Stratonovich equation is the adjoint (\ref{eq:kushner}) and is the nonlinear equation for the filtering distribution,
\begin{equation}
\nonumber
d\pi_t(x) = Q^*\pi_t(x)dt+\frac{h(x)-\hat h_t}{\gamma^2}\pi_t(x)\left(dY_t-\hat h_tdt\right)
\end{equation}
where $\pi_t(x)$ is either a density or a mass function (depending on the type of problem).
\subsection{Smoothing}
The smoothing filter has been derived in \cite{bobrovskyZeitouni}, but in the case of regularized processes where the adjoint Zakai equation holds. In this section we derive a similar results but for the general case of functions in $\mathcal B_b(\mathcal S)$, and we'll also derive the backward SPDE for smoothing in the regularized case.

Consider the times $\tau$ and $t$ such that $0\leq \tau\leq t$. The filtering expectation of $g(X_\tau)$ is
\[\mathbb E[g(X_\tau)|\mathcal F_t^Y] = \frac{\tilde{\mathbb E}[M_tg(X_\tau)|\mathcal F_t^Y] }{\tilde{\mathbb E}[M_t|\mathcal F_t^Y] }= \frac{\tilde{\mathbb E}[M_tg(X_\tau)|\mathcal F_t^Y] }{\phi_t[1]}.\]
For $\tau$ fixed and for $t$ increasing, the Zakai equation is
\begin{eqnarray*}
d\tilde{\mathbb E}[M_tg(X_\tau)|\mathcal F_t^Y] &=&\frac{1}{\gamma^2}\tilde{\mathbb E}[M_th(X_t)g(X_\tau)|\mathcal F_t^Y] dY_t\qquad\hbox{for }t>\tau,\\
\tilde{\mathbb E}[M_\tau g(X_\tau)|\mathcal F_\tau^Y] &=&\phi_\tau[g].
\end{eqnarray*}
This equation can be solved to obtain the smoothing distribution, but lacks a differential formula for changes in $\tau$. However, in the regular case there is a backward SPDE that will provide a differential for changes in $\tau$. This backward SDE will provide improved efficiency for coding and analysis.

Suppose there is sufficient regularity so that $\tilde\pi_t(x)\doteq\phi_t[\delta_x] $ satisfies the adjoint Zakai equation (see equation (\ref{eq:adjointZakai})). Given $\mathcal F_t^Y$, the smoothing filter for any time $\tau\in[0,t]$ is
\[\frac{d}{dx}\mathbb P(X_\tau\leq x|\mathcal F_t^Y)= \frac{\tilde{\mathbb E}[M_t\delta_x(X_\tau)|\mathcal F_t^Y]}{\tilde{\mathbb E}[M_t|\mathcal F_t^Y]}\]

\[= \frac{\tilde{\mathbb E}[M_\tau\delta_x(X_\tau)\tilde{\mathbb E}[M_t/M_\tau|\mathcal F_t^Y\vee\mathcal F_\tau^X]|\mathcal F_t^Y]}{\tilde{\mathbb E}[M_t|\mathcal F_t^Y]}\]

\[= \frac{\tilde{\mathbb E}[M_\tau\delta_x(X_\tau)\tilde{\mathbb E}[M_t/M_\tau|\mathcal F_t^Y\vee \{X_\tau=x\}]|\mathcal F_t^Y]}{\tilde{\mathbb E}[M_t|\mathcal F_t^Y]}\]

\[=\frac{\tilde{\mathbb E}[M_\tau\delta_x(X_\tau)|\mathcal F_t^Y] \cdot\tilde{\mathbb E}[M_t/M_\tau|\mathcal F_t^Y\vee \{X_\tau=x\}]}{\tilde{\mathbb E}[M_t|\mathcal F_t^Y]}\]

\[=\frac{\tilde{\mathbb E}[M_\tau\delta_x(X_\tau)|\mathcal F_\tau^Y]\cdot \tilde{\mathbb E}[M_t/M_\tau|\mathcal F_t^Y\vee \{X_\tau=x\}]}{\tilde{\mathbb E}[M_t|\mathcal F_t^Y]}\]

\[=\frac{\tilde\pi_\tau(x)\alpha_{\tau,t}(x)}{\phi_t[1]},\]
where $\alpha_{\tau,t}$ is define as 

\[\alpha_{\tau,t}(x) \doteq \tilde{\mathbb E}\left[M_t/M_\tau\Big|\mathcal F_t^Y\vee\{X_\tau=x\}\right]\] 

\[=\tilde{\mathbb E}\left[\exp\left\{-\frac{1}{2\gamma^2}\int_\tau^th^2(X_s)ds+\frac{1}{\gamma^2}\int_\tau^th(X_s)dY_s   \right\}\Big|\mathcal F_t^Y\vee\{X_\tau=x\}\right].\]
The function $\alpha_{\tau,t}(x)$ is the smoothing component and satisfies the following backward SPDE
\begin{eqnarray*}
d\alpha_{\tau,t}(x) &=& Q\alpha_{\tau,t}(x)d\tau-\frac{1}{\gamma^2} h(x)\alpha_{\tau,t}(x)dY_\tau\qquad\hbox{for }\tau\leq t,\\
\alpha_{t,t}&\equiv&1,
\end{eqnarray*}
or in integrated form
\[\alpha_{\tau,t}(x) = 1-\int_\tau^tQ\alpha_{s,t}(x)ds + \frac{1}{\gamma^2}\int_\tau^t h(x)\alpha_{s,t}(x)dY_s.\]

\chapter{The Innovations Approach}

\noindent Let $X_t\in\mathcal S$ be a Markov process with generator $Q$. Let the domain of $Q$ be denoted by $\mathcal B(Q)$, and let $\mathcal B_b(Q)$ denote the subset of bounded functions in $\mathcal B(Q)$. For any function $g(x)\in\mathcal B_b(Q)$ we have the following limit: 
\[\frac{\mathbb E[g(X_{t+\Delta t})|X_t=x]-g(x)}{\Delta t} \rightarrow Qg(x)\]
as $\Delta t\searrow 0$, for any $x\in\mathcal S$. If $Q$ is densely-defined and its resolvent set includes all positive real numbers, then the Hille-Yosida theorem applies, allowing us to write the the distribution of $X_t$ with a contraction semi-group. In these notes we assume that such conditions hold and that the transition density/mass is generated by an operator semigroup denoted by $e^{Q^*t}$.

A standard nonlinear filtering problem in SDE theory assumes that $X_t$ is unobserved and that a process $Y_t$ is given by an SDE
\begin{eqnarray}
\label{eq:dY}
dY_t&=&h(X_t)dt+\gamma dW_t\qquad\hbox{observed}
\end{eqnarray}
where $W_t$ is an independent Wiener process, $\gamma>0$ and we assume that the $h$ is bounded. Let the filtration $\mathcal F_t^Y=\sigma\{Y_s:s\leq t\}$ so that for an integrable function $g(x)$ we have
\[\hat g_t\doteq \mathbb E[g(X_t)|\mathcal F_t^Y]\]
for all $t\in[0,T]$.

\section{Innovations Brownian Motion}
Let $\nu_t$ denote the \textit{innovations process} whose differential is given as follows
\[d\nu_t = dY_t - \hat h_tdt.\]
with $\nu_0=0$, and where $\hat h_t = \mathbb E[h(X_t)|\mathcal F_t^Y]$. 

\begin{proposition} The process $\nu_t/\gamma$ is an $\mathcal F_t^Y$ Brownian motion. \end{proposition}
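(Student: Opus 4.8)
The plan is to verify that $\nu_t/\gamma$ satisfies the hypotheses of L\'evy's characterization of Brownian motion: it suffices to show that $\nu_t/\gamma$ is a continuous $\mathcal F_t^Y$-martingale started at $0$ with quadratic variation $\langle\nu/\gamma\rangle_t=t$. To set up, I would first rewrite the innovations process by substituting the observation dynamics $dY_t=h(X_t)dt+\gamma dW_t$, giving $d\nu_t=(h(X_t)-\hat h_t)dt+\gamma dW_t$ and hence $\nu_t=\int_0^t(h(X_u)-\hat h_u)du+\gamma W_t$. Since $Y$ is continuous and $\hat h$ is bounded (as $h$ is bounded), the finite-variation term $\int_0^t\hat h_u\,du$ is Lipschitz in $t$, so $\nu_t$ is continuous; it is also $\mathcal F_t^Y$-adapted because both $Y_t$ and $\int_0^t\hat h_u\,du$ are.

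For the martingale property, I would fix $s<t$ and compute $\mathbb E[\nu_t-\nu_s\mid\mathcal F_s^Y]$ by treating the drift and the Brownian pieces separately. For the drift, Fubini reduces the claim to showing $\mathbb E[h(X_u)-\hat h_u\mid\mathcal F_s^Y]=0$ for each $u>s$; this follows from the tower property together with $\mathcal F_s^Y\subset\mathcal F_u^Y$, since $\mathbb E[h(X_u)-\hat h_u\mid\mathcal F_u^Y]=0$ by the very definition $\hat h_u=\mathbb E[h(X_u)\mid\mathcal F_u^Y]$. For the Brownian piece, I would argue that $\gamma\,\mathbb E[W_t-W_s\mid\mathcal F_s^Y]=0$: because $X$ is independent of the driving Wiener process $W$ and $W$ has independent increments, the increment $W_t-W_s$ is independent of the full history $\sigma(X_u,W_u:u\le s)$, hence of the smaller field $\mathcal F_s^Y$, and its mean is zero. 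Combining the two pieces gives $\mathbb E[\nu_t-\nu_s\mid\mathcal F_s^Y]=0$, establishing the $\mathcal F_t^Y$-martingale property.

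It remains to compute the quadratic variation. Since the finite-variation drift $\int_0^t(h(X_u)-\hat h_u)du$ contributes nothing, I would use $[\nu]_t=[\gamma W]_t=\gamma^2 t$, so that $\langle\nu/\gamma\rangle_t=t$; this is an intrinsic, pathwise quantity and does not depend on the choice of filtration. With continuity, the martingale property, and unit quadratic variation in hand, L\'evy's characterization then yields that $\nu_t/\gamma$ is an $\mathcal F_t^Y$-Brownian motion.

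The main obstacle I expect is the careful handling of the Brownian increment under conditioning: establishing $\mathbb E[W_t-W_s\mid\mathcal F_s^Y]=0$ rigorously requires justifying that future increments of $W$ remain independent of the observation filtration $\mathcal F_t^Y$, which is precisely where the assumed independence of $X$ and $W$ (and the independent-increments property) must be invoked cleanly, noting that $\mathcal F_t^Y$ is a sub-field of the full history. A secondary technical point is securing a progressively measurable version of $\hat h_t$ so that the time-integral and the Fubini interchange are well defined; the boundedness of $h$ makes the integrability bookkeeping routine.
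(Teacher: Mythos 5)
Your proposal is correct and follows essentially the same route as the paper: both decompose $\nu_t=\int_0^t(h(X_u)-\hat h_u)du+\gamma W_t$, establish the $\mathcal F_t^Y$-martingale property by killing the drift term via the tower property (using $\hat h_u=\mathbb E[h(X_u)\mid\mathcal F_u^Y]$) and the Brownian increment via independence, then note the quadratic variation equals that of $\gamma W_t$ and invoke L\'evy's characterization. Your write-up is in fact somewhat more careful than the paper's (which contains minor notational slips, such as writing $d\nu_\tau$ where $d\tau$ is meant), but the mathematical content is identical.
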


\begin{proof} Is is clear that $\nu_t$ is (i) $\mathcal F_t^Y$-measurable, continuous and square integrable on $[0,T]$. To show that it is a local martingale, we take expectations for any $s\leq t$ as follows
\[\mathbb E[\nu_t|\mathcal F_s^Y]-\nu_s=\mathbb E\left[Y_t-\int_0^t\hat h_\tau d\tau\Big|\mathcal F_s^Y\right]-\left(Y_s-\int_0^s\hat h_\tau d\nu_\tau\right)\]

\[=\mathbb E\left[\gamma(W_t-W_s)+\int_s^th(X_\tau)d\tau\Big|\mathcal F_s^Y\right]-\mathbb E\left[\int_0^t\hat h_\tau d\tau\Big|\mathcal F_s^Y\right]+\int_0^s\hat h_\tau d\nu_\tau\]

\[=\mathbb E\left[\gamma(W_t-W_s)+\int_s^th(X_\tau)d\tau\Big|\mathcal F_s^Y\right]-\mathbb E\left[\int_s^t\hat h_\tau d\nu_\tau\Big|\mathcal F_s^Y\right]\]

\[=\gamma\mathbb E\left[W_t-W_s\Big|\mathcal F_s^Y\right]+\mathbb E\left[\int_s^th(X_\tau)d\tau-\int_s^t\hat h_\tau d\nu_\tau\Big|\mathcal F_s^Y\right]=0.\]
Furthermore, the cross-variation of $\nu_t/\gamma$ is the same as the cross-variation of $W_t$, and so by the L\'evy characterisation of Brownian motion (see Karatzas and Shreve \cite{K-S}), $\nu_t/\gamma$ is also $\mathcal F_t^Y$ Brownian motion.

\end{proof}

Given that $\nu_t$ is $\mathcal F_t^Y$-Brownian motion, it may seem obvious that any $L^2$-integrable and $\mathcal F_T^Y$ measurable random variable has an integrated representation in terms of $\nu$, but it is not easily seen that $\mathcal F_t^Y = \sigma\{\nu_s:s\leq t\}$. The following proposition provides a proof to verify that it is indeed true. 
\begin{proposition}
\label{prop:martRep}
Every square integrable random variable $N$ that is $\mathcal F_T^Y$-measurable, has a representation of the form
\[N=\mathbb EN+\frac{1}{\gamma}\int_0^T f_sd\nu_s\]
where $f = \{f_s:s\leq T\}$ is progressively measurable and $\mathcal F_t^Y$-adapted and $\mathbb E\left[\int_0^Tf_s^2ds\right]<\infty$.
\end{proposition}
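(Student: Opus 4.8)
The plan is to establish the martingale representation via the standard two-step strategy: first prove a representation theorem with respect to the original observation Brownian motion $W$ (or equivalently, with respect to the innovations process in a preliminary form), and then transfer it to the innovations process $\nu$. The key conceptual point is that since $\nu_t/\gamma$ is an $\mathcal F_t^Y$-Brownian motion (by the preceding proposition), the classical It\^o martingale representation theorem applies \emph{provided} we know that $\mathcal F_t^Y$ coincides with the filtration generated by $\nu$. But that coincidence is precisely what is \emph{not} obvious, so I would avoid assuming it and instead prove the representation directly.

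\textbf{First} I would reduce to representing $\mathcal F_t^Y$-martingales. Given the square-integrable $\mathcal F_T^Y$-measurable $N$, define the martingale $N_t \doteq \mathbb E[N|\mathcal F_t^Y]$, which is continuous (since $\mathcal F_t^Y$ is generated by the continuous process $Y$, the filtration is a Brownian-type filtration and all its martingales admit continuous versions). The goal becomes showing that every square-integrable $\mathcal F_t^Y$-martingale $N_t$ has the form $N_t = \mathbb E N + \frac{1}{\gamma}\int_0^t f_s\, d\nu_s$. \textbf{Second}, the core technical step: I would show that the innovations Brownian motion $\nu$ has the \emph{predictable representation property} for the filtration $\mathcal F_t^Y$. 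The cleanest route is to verify that the exponential martingales
\[
\mathcal E_t(\lambda) = \exp\left\{\frac{\lambda}{\gamma}\int_0^t d\nu_s - \frac{\lambda^2}{2\gamma^2}\int_0^t ds\right\}
\]
generate a total set in $L^2(\mathcal F_T^Y)$, which combined with the fact that each $\mathcal E_t(\lambda)$ solves a linear SDE driven by $d\nu$ (so already has the desired integral representation) yields the representation for a dense class, and then for all of $L^2$ by closing up the It\^o-isometry-bounded linear maps.

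\textbf{The hard part} will be the totality/density claim — showing that the linear span of $\{\mathcal E_T(\lambda):\lambda\in\mathbb R\}$ is dense in $L^2(\mathcal F_T^Y,\mathbb P)$, equivalently that no nonzero $L^2$ random variable is orthogonal to all of them. This is where one must genuinely use that $\mathcal F_T^Y$ is generated by $Y$, and that under the Girsanov/Kallianpur--Streibel change of measure $\tilde{\mathbb P}$ (available from the previous chapter, where $Y/\gamma$ is a $\tilde{\mathbb P}$-Brownian motion with $\mathcal F_t^Y = \sigma\{Y_s:s\leq t\}$), the corresponding exponentials in $dY$ \emph{do} form a total set by the classical Wiener-space argument. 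I would therefore transfer the problem to $\tilde{\mathbb P}$: under $\tilde{\mathbb P}$, $Y/\gamma$ is Brownian motion generating $\mathcal F_t^Y$, so the standard representation theorem gives $N = \tilde{\mathbb E} N + \frac{1}{\gamma}\int_0^T g_s\, dY_s$ for some adapted $g$; then substituting $dY_s = d\nu_s + \hat h_s\, ds$ and carefully absorbing the drift term back (using that $\int_0^\cdot \hat h_s\, ds$ is itself $\mathcal F^Y$-adapted and can be expanded, via the $\mathbb P$-martingale property of $\nu$) converts this into a $d\nu$-representation under $\mathbb P$, with the density $M_T$ relating the two measures handled by the Kallianpur--Streibel identity. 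The delicate bookkeeping is ensuring the resulting integrand $f$ is $\mathcal F_t^Y$-adapted and square-integrable with respect to $\nu$; the measure change and the replacement of $dY$ by $d\nu$ must be reconciled so that the final representation holds under the physical measure $\mathbb P$ with the stated $L^2$ bound.
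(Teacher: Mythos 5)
Your high-level strategy (a Girsanov change of measure making $Y/\gamma$ an $\mathcal F_t^Y$-Brownian motion, the classical martingale representation theorem under the new measure, then conversion back to an innovations integral under $\mathbb P$) is exactly the strategy of the paper's proof, and your instinct not to assume $\mathcal F_t^Y=\sigma\{\nu_s:s\leq t\}$ is correct. But the conversion step, as you describe it, has a genuine gap. If you represent $N$ itself under $\tilde{\mathbb P}$ as $N=\tilde{\mathbb E}N+\frac{1}{\gamma}\int_0^Tg_s\,dY_s$ and then substitute $dY_s=d\nu_s+\hat h_s\,ds$, you are left with
\[
N=\tilde{\mathbb E}N+\frac{1}{\gamma}\int_0^Tg_s\,d\nu_s+\frac{1}{\gamma}\int_0^Tg_s\hat h_s\,ds,
\]
which fails to be the desired form for two reasons: the constant is $\tilde{\mathbb E}N$, which differs from $\mathbb EN$ (the two measures do not agree on $\mathcal F_T^Y$), and the Lebesgue term $\frac{1}{\gamma}\int_0^Tg_s\hat h_s\,ds$ is a generic $\mathcal F_T^Y$-measurable random variable. ``Absorbing'' it means writing it as a constant plus a $d\nu$-integral --- which is precisely the proposition being proved, so the argument is circular. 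The $\mathbb P$-martingale property of $\nu$ does not produce such a representation; it only tells you that the conditional expectation of a future $d\nu$-integral vanishes, and indeed $\mathbb E[N|\mathcal F_t^Y]$ differs from the $\tilde{\mathbb P}$-martingale $\tilde{\mathbb E}N+\frac{1}{\gamma}\int_0^tg_s\,dY_s$ by exactly the conditional expectation of the leftover drift.

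The paper closes this gap with one extra idea, together with a different density from the one you propose to import from the Zakai chapter. First, the density must be $\mathcal F^Y$-adapted: the paper uses $Z_t=\exp\bigl\{-\frac{1}{\gamma^2}\int_0^t\hat h_s\,d\nu_s-\frac{1}{2\gamma^2}\int_0^t\hat h_s^2\,ds\bigr\}$, built from the filter $\hat h$ and the innovations, whereas the Zakai-chapter density involves the unobserved $h(X_s)$ and is not $\mathcal F_T^Y$-measurable (so the manipulation below would leave the $\mathcal F^Y$-world). Second --- the key trick --- the representation theorem under $\tilde{\mathbb P}$ is applied not to $N$ but to $Z_T^{-1}N$, which is still $\mathcal F_T^Y$-measurable, giving $Z_T^{-1}N=\tilde{\mathbb E}[Z_T^{-1}N]+\frac{1}{\gamma}\int_0^Tq_s\,dY_s$; note that $\tilde{\mathbb E}[Z_T^{-1}N]=\mathbb E[Z_TZ_T^{-1}N]=\mathbb EN$, so the constant comes out right automatically. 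Then, setting $\tilde N_t=\tilde{\mathbb E}[Z_T^{-1}N|\mathcal F_t^Y]$ and applying It\^o's product rule to $\tilde N_tZ_t$ (whose terminal value is $N$), the $dt$ term produced by $d\tilde N_t=\frac{1}{\gamma}q_t(d\nu_t+\hat h_t\,dt)$ cancels exactly against the cross-variation $d\langle\tilde N,Z\rangle_t=-\frac{1}{\gamma}Z_t\hat h_tq_t\,dt$, leaving
\[
N=\tilde N_TZ_T=\mathbb EN+\frac{1}{\gamma}\int_0^T\Bigl(q_s-\tfrac{1}{\gamma}\tilde N_s\hat h_s\Bigr)Z_s\,d\nu_s,
\]
i.e.\ the claimed representation with $f_s=\bigl(q_s-\frac{1}{\gamma}\tilde N_s\hat h_s\bigr)Z_s$. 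Without this product-rule cancellation, the drift term you flagged cannot be removed, so your proposal as written does not go through.
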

\begin{proof} For all $t\in[0,T]$, define $Z_t\doteq \exp\left\{ -\frac{1}{\gamma^2}\int_0^t\hat h_sd\nu_s-\frac{1}{2\gamma^2}\int_0^t\hat h_s^2ds\right\}$. Clearly, $Z_t$ is an $\mathcal F_t^Y$-martingale, and so we can define an equivalent measure $\tilde{\mathbb P}$ with the following Radon-Nikodym derivative,
\[\frac{d\tilde{\mathbb P}}{d\mathbb P} = Z_T.\]
As a consequence of the Girsanov theorem, $Y_t/\gamma$ is a $\tilde{\mathbb P}$-Brownian motion. Then apply the martingale representation theorem,
\[Z_T^{-1}N = \tilde{\mathbb E}[Z_T^{-1}N]+\frac{1}{\gamma}\int_0^Tq_sdY_s\]

\[=\tilde{\mathbb E}[Z_T^{-1}N]+\frac{1}{\gamma}\int_0^Tq_sd\nu_s+\frac{1}{\gamma}\int_0^Tq_s\hat h_sds\]
where $q = \{q_s:s\leq T\}$ is adapted and $\tilde{\mathbb P}\left(\int_0^tq_s^2<\infty\right)=\mathbb P\left(\int_0^tq_s^2<\infty\right)=1$. From here we can construct a $\tilde{\mathbb P}$-martingale from $Z_T^{-1}N$, 
\[\tilde N_t \doteq \tilde{\mathbb E}[Z_T^{-1}N|\mathcal F_t^Y]\]
and applying It\^o's lemma to $\tilde N_tZ_t$ we have
\[d\left( \tilde N_tZ_t \right)=-\frac{1}{\gamma^2}Z_t\tilde N_t\hat h_td\nu_t+\frac{1}{\gamma}Z_tq_td\nu_t+\frac{1}{\gamma}Z_t\hat h_tq_tdt-\frac{1}{\gamma}Z_t\hat h_tq_tdt\]

\[=-\frac{1}{\gamma^2}Z_t\tilde N_t\hat h_td\nu_t+\frac{1}{\gamma}Z_tq_td\nu_t.\]
Integrating from $0$ to $t$, we have
\[\tilde N_tZ_t = \mathbb EN+\frac{1}{\gamma}\int_0^t\left(q_s-\frac{1}{\gamma}\tilde N_s\hat h_s\right)Z_sd\nu_s,\]
and therefore, setting $f_t =\left ( q_t-\frac{1}{\gamma}\tilde N_t\hat h_t\right)Z_t$ for all $t\in [0,T]$, we have a unique $\mathcal F_t^Y$-adapted representation in terms of $\nu$, and the proposition is proved.

\end{proof}
\begin{remark} For a more general proof of proposition \ref{prop:martRep}, see proposition 2.31 on page 34 of Bain and Crisan \cite{bainCrisan}. They present the proof for generalized systems on unbounded time intervals, with merely the conditions that $\mathbb E\left[\int_0^th^2(X_s)ds\right]<\infty$ and $\mathbb P\left(\int_0^t\hat h_s^2ds<\infty\right)=1$ for all $t<\infty$.
\end{remark}

\section{The Nonlinear Filter}
In deriving the nonlinear filter with the innovations Brownian motion, it will be important to use the following martingale for any given $g\in \mathcal B_b(Q)$:
\[N_t\doteq \hat g_t - \int_0^t\widehat{Q g}_sds\]
for all $t\in [0,T]$.

\begin{lemma} $N_t$ is an $\mathcal F_t^Y$-adapted martingale.
\end{lemma}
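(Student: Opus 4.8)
The plan is to show that $N_t = \hat g_t - \int_0^t \widehat{Qg}_s\,ds$ is an $\mathcal F_t^Y$-martingale by verifying the defining property directly: for any $0 \le s \le t \le T$ and any bounded $\mathcal F_s^Y$-measurable test random variable (equivalently, testing against indicators $\mathbf 1_A$ for $A \in \mathcal F_s^Y$), I will show $\mathbb E[\mathbf 1_A(N_t - N_s)] = 0$. Adaptedness and integrability are routine: $\hat g_t$ is a conditional expectation of the bounded function $g$, hence bounded, and $\int_0^t \widehat{Qg}_s\,ds$ is an integral of the bounded (by $g \in \mathcal B_b(Q)$) conditional expectations $\widehat{Qg}_s$, so $N_t \in L^1$ and is $\mathcal F_t^Y$-adapted by construction.

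First I would write out the increment using the tower property and the definition of $\hat g_t = \mathbb E[g(X_t)\mid \mathcal F_t^Y]$:
\[
\mathbb E[\mathbf 1_A(\hat g_t - \hat g_s)] = \mathbb E[\mathbf 1_A(g(X_t) - g(X_s))],
\]
since $\mathbf 1_A$ is $\mathcal F_s^Y \subseteq \mathcal F_t^Y$-measurable and conditioning can be dropped inside the unconditional expectation. Next I would handle the integral term similarly, using that $\widehat{Qg}_u = \mathbb E[Qg(X_u)\mid \mathcal F_u^Y]$ and Fubini to get
\[
\mathbb E\!\left[\mathbf 1_A \int_s^t \widehat{Qg}_u\,du\right] = \int_s^t \mathbb E[\mathbf 1_A\, Qg(X_u)]\,du.
\]
So the martingale property reduces to the claim
\[
\mathbb E[\mathbf 1_A(g(X_t) - g(X_s))] = \int_s^t \mathbb E[\mathbf 1_A\, Qg(X_u)]\,du,
\]
which is precisely the statement that $M_t^g \doteq g(X_t) - \int_0^t Qg(X_u)\,du$ is an $\mathcal F_t^{X}$-martingale (the Dynkin martingale associated with the generator $Q$), tested against the $\mathcal F_s^Y$-measurable set $A$.

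The key step, and the main obstacle, is justifying that $M_t^g$ is a martingale with respect to a filtration large enough to contain $\mathcal F_t^Y$, so that testing against $A \in \mathcal F_s^Y$ is legitimate. The clean way is to invoke the generator's defining backward-equation limit from the excerpt, namely $\frac{\mathbb E[g(X_{u+\Delta u})\mid X_u]-g(X_u)}{\Delta u} \to Qg(X_u)$, to establish that $M_t^g$ is a martingale in the joint filtration $\mathcal F_t^{X,W} = \sigma\{X_s, W_s : s \le t\}$; since $g \in \mathcal B_b(Q)$ the convergence is dominated and the Dynkin-formula argument goes through. Because $\mathcal F_t^Y \subseteq \mathcal F_t^{X,W}$ and $A \in \mathcal F_s^Y$, the optional-sampling/tower argument gives $\mathbb E[M_t^g - M_s^g \mid \mathcal F_s^{X,W}] = 0$, and conditioning down to $\mathbf 1_A$ yields the displayed identity. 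Finally, combining the two computations gives $\mathbb E[\mathbf 1_A(N_t - N_s)] = 0$ for all $A \in \mathcal F_s^Y$, which is exactly $\mathbb E[N_t \mid \mathcal F_s^Y] = N_s$, completing the proof. I would expect the delicate point to be the interchange of the $du$-integral with the expectation and the careful bookkeeping of which filtration each martingale property lives in; everything else is a direct application of the generator definition already granted in the excerpt.
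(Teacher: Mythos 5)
Your proposal is correct, and it takes a genuinely different --- and strictly stronger --- route than the paper. The paper's own proof asserts that ``it suffices to show $\mathbb E N_t = N_0$'' and then only verifies that the unconditional expectation of $N_t$ is constant in $t$; that is not, by itself, the martingale property (an adapted integrable process with constant mean need not be a martingale, e.g.\ $\int_0^t B_s\,ds$ for a Brownian motion $B$), so the paper's argument really only checks the Dynkin identity $\mathbb E g(X_t)=\mathbb E g(X_0)+\int_0^t\mathbb E\,Qg(X_s)\,ds$ in the mean. You instead verify the defining property: test the increment $N_t-N_s$ against an arbitrary $A\in\mathcal F_s^Y$, strip the conditional expectations off $\hat g$ and $\widehat{Qg}$ by the tower property and Fubini, and reduce the claim to the statement that the Dynkin martingale $M_t^g = g(X_t)-\int_0^t Qg(X_u)\,du$ is a martingale with respect to the joint filtration $\mathcal F_t^{X,W}$, which contains $\mathcal F_t^Y$ since $Y$ is a functional of $(X,W)$. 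Your observation that the martingale property of $M^g$ must hold in this enlarged filtration (which uses the independence of $W$ from $X$), and not merely in $\mathcal F^X$, is precisely the filtration bookkeeping that the paper's proof skips; it is also the standard treatment (cf.\ Bain and Crisan, the paper's own reference). The only minor caveats are the ones you already flag: the Fubini interchange requires $Qg$ bounded or integrable, which the paper implicitly assumes for $g\in\mathcal B_b(Q)$, and the passage from the generator limit to the Dynkin formula should be done with dominated convergence. In short, same decomposition of $N_t$, but your argument closes a real gap in the paper's proof rather than reproducing it.
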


\begin{proof}
It suffices to show that $\mathbb EN_t = N_0$ for any $t\in [0,T]$, which we do as follows:
\[\mathbb EN_t = \mathbb E\left[\hat g_t- \int_0^t\widehat{Q g}_sds\right]=\mathbb Eg_t-\int_0^t\mathbb E\left[\mathbb E[Qg(X_s)|\mathcal F_s]\right]ds\]

\[=\mathbb Eg_t-\int_0^t\mathbb EQg(X_s)ds=\mathbb E\left[g_t-\int_0^tQg(X_s)ds\right]=\mathbb Eg(X_0)=N_0.\]
\end{proof}
Knowing that $\hat g_t-\int_0^t\widehat{Qg}_sds$ is an $\mathcal F_t^Y$-martingale, we will apply proposition \ref{prop:martRep} as follows
\[\hat g_t-\int_0^t\widehat{Qg}_sds =\hat g_0+\frac{1}{\gamma} \int_0^tf_sd\nu_s\]
where $f_t$ is an $\mathcal F_t^Y$-predicable process for any $t\in [0,T]$. From here, the main point in the derivation of the nonlinear filter is in finding the function $f_t$ in terms of quantities that are more readily computable.

\begin{theorem} \textbf{The Nonlinear Filter.} \label{thm:nlf}
For any function $g\in \mathcal B_b(Q)$, the nonlinear filter is given by the following SDE
\[d\hat g_t = \widehat{Qg}_tdt+ \frac{\widehat{gh}_t-\hat g_t\hat h_t}{\gamma^2}d\nu_t\]
for all $t\in [0,T]$.
\end{theorem}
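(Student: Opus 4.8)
The plan is to build on the three facts already established: that $\nu_t/\gamma$ is an $\mathcal F_t^Y$-Brownian motion, that $N_t = \hat g_t - \int_0^t \widehat{Qg}_s\,ds$ is an $\mathcal F_t^Y$-martingale, and that by Proposition \ref{prop:martRep} this martingale admits the representation
\[N_t = \hat g_0 + \frac{1}{\gamma}\int_0^t f_s\,d\nu_s\]
for some $\mathcal F_t^Y$-adapted integrand $f$. Since differentiating this identity already yields $d\hat g_t = \widehat{Qg}_t\,dt + \tfrac{1}{\gamma} f_t\,d\nu_t$, the entire theorem reduces to identifying the integrand as $f_t = \tfrac{1}{\gamma}(\widehat{gh}_t - \hat g_t\hat h_t)$. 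I would carry out this identification by a correlation argument: fix an arbitrary bounded $\mathcal F_t^Y$-predictable process $\{\beta_s\}$, set $U_t = \int_0^t \beta_s\,d\nu_s$, and compute $\mathbb E[N_t U_t]$ in two independent ways.

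First, using the representation together with $d\langle\nu\rangle_s = \gamma^2\,ds$, the It\^o isometry gives $\mathbb E[N_t U_t] = \gamma\,\mathbb E\big[\int_0^t f_s\beta_s\,ds\big]$, the $\hat g_0 U_t$ term dropping because $U$ is a mean-zero $\mathcal F^Y$-martingale. Second, I expand $\mathbb E[N_t U_t]$ through the definition of $N_t$. Since $U_t$ is $\mathcal F_t^Y$-measurable, the tower property replaces $\hat g_t$ by $g(X_t)$, while a Fubini argument combined with the martingale property of $U$ rewrites $\mathbb E\big[\int_0^t \widehat{Qg}_s\,ds\,U_t\big]$ as $\int_0^t \mathbb E[Qg(X_s)U_s]\,ds$; hence $\mathbb E[N_tU_t] = \mathbb E[g(X_t)U_t] - \int_0^t \mathbb E[Qg(X_s)U_s]\,ds$.

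The crux is then to evaluate $\mathbb E[g(X_t)U_t]$ on the original probability space. I would apply It\^o's product rule to $g(X_t)U_t$, using the Dynkin martingale $g(X_t) = g(X_0) + \int_0^t Qg(X_s)\,ds + \bar M_t^g$ and the innovations dynamics $dU_s = \beta_s(h(X_s)-\hat h_s)\,ds + \gamma\beta_s\,dW_s$. Three things happen: the genuine martingale terms integrate to zero in expectation (boundedness of $g$ and $h$ securing integrability); the $\int_0^t U_s\,Qg(X_s)\,ds$ piece exactly cancels the subtracted term from the previous paragraph; and the covariation $\langle \bar M^g, W\rangle$ vanishes because the noise driving $X$ is independent of the observation noise $W$. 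What survives is the drift term, which conditions down via the tower property to $\mathbb E\big[\int_0^t \beta_s(\widehat{gh}_s - \hat g_s\hat h_s)\,ds\big]$. Equating this with $\gamma\,\mathbb E\big[\int_0^t f_s\beta_s\,ds\big]$ for every bounded predictable $\beta$ forces $\gamma f_s = \widehat{gh}_s - \hat g_s\hat h_s$, and substituting back completes the proof.

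I expect the main obstacle to be the covariation step and its integrability bookkeeping: one must verify that $g(X_t) - \int_0^t Qg(X_s)\,ds$ is a true martingale orthogonal to $W$, so that the entire gain term $\widehat{gh}_t - \hat g_t\hat h_t$ emerges from the innovations drift rather than from a quadratic covariation, and that each dropped stochastic integral is a genuine, not merely local, martingale — both resting on the standing assumption that $g\in\mathcal B_b(Q)$ and that $h$ is bounded. The final passage from the integrated identity to pointwise equality of integrands is the standard $L^2(\Omega\times[0,T])$ test-function argument and is routine once the two expressions for $\mathbb E[N_tU_t]$ are in hand.
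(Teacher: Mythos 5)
Your proof is correct, and it reaches the gain term by a genuinely different route from the paper. Both arguments share the same skeleton: apply Proposition \ref{prop:martRep} to the martingale $N_t=\hat g_t-\int_0^t\widehat{Qg}_s\,ds$ and then identify the integrand $f_t$. The paper does the identification with a family of complex exponential test processes $\xi_t$ solving $d\xi_t=\frac{i}{\gamma}\xi_t\psi_t\,dY_t$: it applies It\^o's lemma to $\hat g_t\xi_t$ and to $\mathbb E[g(X_t)\xi_t]$, subtracts, and concludes $f_t=(\widehat{gh}_t-\hat g_t\hat h_t)/\gamma$ because ``$\xi_t$ belongs to a complete set'' --- i.e.\ it leans on a totality result for this exponential family which it does not prove. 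You instead pair $N_t$ against an arbitrary test integral $U_t=\int_0^t\beta_s\,d\nu_s$, compute $\mathbb E[N_tU_t]$ once by the It\^o isometry (giving $\gamma\,\mathbb E\int_0^tf_s\beta_s\,ds$) and once via the It\^o product rule on $g(X_t)U_t$ using the Dynkin martingale of $X$ and the innovations dynamics $dU_s=\beta_s(h(X_s)-\hat h_s)\,ds+\gamma\beta_s\,dW_s$, and then conclude by the arbitrariness of $\beta$. This is essentially the classical Fujisaki--Kallianpur--Kunita orthogonality argument. What your route buys: it is self-contained (no completeness theorem for exponential families, no complex-valued processes), and it makes explicit both where the gain comes from (the innovations drift, not a covariation) and exactly where independence of the state and observation noises enters (the vanishing of $\langle\bar M^g,W\rangle$) --- which is precisely the term that becomes $\gamma\hat\rho_t^g$ in the paper's correlated-noise subsection. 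What the paper's route buys: by working only with expectations of products with bounded $\xi_t$, it sidesteps the integrability bookkeeping you rightly flag (whether $\int_0^tU_s\,d\bar M_s^g$ and the other stochastic integrals are true martingales rather than local ones), pushing all the technical weight onto the one totality claim. Your accounting of those integrability issues --- boundedness of $g$, $Qg$ and $h$ making the dropped integrals genuine martingales --- is the right patch, so the proposal stands as a complete alternative proof.
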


\begin{proof} We can apply proposition \ref{prop:martRep} to $N_t$ and we get
\[N_t = \mathbb EN_0+\frac{1}{\gamma}\int_0^tf_sd\nu_s=\hat g_0+\frac{1}{\gamma}\int_0^tf_sd\nu_s,\]
thus defining the conditional expectation at time $t$ as
\[\hat g_t = \hat g_0+\int_0^t\widehat{Qg}_sds+\frac{1}{\gamma}\int_0^tf_sd\nu_s.\]
From here, to complete the proof only requires us to identify $f_t$ explicitly. For some process $\psi\in L^\infty[0,T]$, define $\xi_t$ such that 
\[d\xi_t = \frac{i}{\gamma}\xi_t\psi_tdY_t\]
with $\xi_0=1$. We then apply It\^o's lemma to the following
\begin{eqnarray}
\label{eq:d1}
d\left(\hat g_t\xi_t \right)&=& \widehat{Qg}_t\xi_tdt+\frac{1}{\gamma}f_t\xi_td\nu_t+\frac{i}{\gamma}\hat g_t\xi_t\psi_t\left(d\nu_t+\hat h_tdt\right)+i\xi_t\psi_tf_tdt\\
\label{eq:d2}
d\mathbb E[g(X_t)\xi_t] &=& \mathbb E\left[Qg(X_t)\xi_t\right]dt+\frac{i}{\gamma}\mathbb E\left[g(X_t)\xi_t\psi_th(X_t)\right]dt.
\end{eqnarray}
If we integrate the integrands in (\ref{eq:d1}) and (\ref{eq:d2}) from time $0$ to time $t$, take expectations, multiply both sides by $\gamma$, and then subtract one from the other, and we are left with 
\[\int_0^ti\psi_s\mathbb E\left[\xi_s\left(\gamma f_s-g(X_s)h(X_s)+\hat g_s\hat h_s\right)\right]ds=0.\]
Hence, for almost every $t\in[0,T]$, we have 
\[\mathbb E\left[\xi_t\left(\gamma f_t-g(X_t)h(X_t)+\hat g_t\hat h_t\right)\right]=0=\mathbb E\left[\xi_t\left(\gamma f_t-\mathbb E[g(X_t)h(X_t)|\mathcal F_t^Y]+\hat g_t\hat h_t\right)\right]\]
and since $\xi_t$ belongs to a complete set, must have
\[f_t =\frac{ \mathbb E[g(X_t)h(X_t)|\mathcal F_t^Y]-\hat g_t\hat h_t}{\gamma}=\frac{\widehat {gh}_t-\hat g_t\hat h_t}{\gamma}\]
which proves the theorem.
\end{proof}
Existence of the solutions to the filtering SDE in theorem \ref{thm:nlf} is consequence of the fact that $\hat g_t$ is on such solution. The uniqueness of solutions to the filtering SDE can be grouped in with proofs for uniqueness of Zakai equation (see Kurtz and Ocone \cite{kurtzOcone1988} or Rozovsky \cite{rozovsky1992}) because there is a one-to-one relationship between measure-valued solutions of the two SDEs.
\subsection{Correlated Noise Filtering}
Suppose that $W_t$ and $X_t$ are correlated so that for any function $g\in \mathcal B_b(Q)$ we have

\[\frac{1}{\gamma}\sum_n\Delta Y_{t_n}\Delta g( X_{t_n})\stackrel{p}{\longrightarrow}\left<W,g(X)\right>_t\doteq\int_0^t \rho_s^gds\]
where $0=t_0<t_1<\dots<t_n=t$ and the limit is taken as $\sup_n(t_{n+1}-t_n)\rightarrow 0$. Then there is an added term in equation (\ref{eq:d2}), 

\[d\mathbb E[g(X_t)\xi_t] = \mathbb E\left[Qg(X_t)\xi_t\right]dt+\frac{i}{\gamma}\mathbb E\left[g(X_t)\xi_t\psi_th(X_t)\right]dt+i\mathbb E\left[\xi_t\psi_t\rho_t^g \right]dt\]
and so for almost every $t\in[0,T]$ we have
\[\mathbb E\left[\xi_t\left(\gamma f_t-\mathbb E[g(X_t)h(X_t)|\mathcal F_t^Y]+\hat g_t\hat h_t-\gamma\hat \rho_t^g\right)\right]=0\]
and so $f_t = \frac{1}{\gamma}(\widehat{gh}_t-\hat g_t\hat h_t+\gamma\hat \rho_t^g)$, and the nonlinear filter is

\[d\hat g_t = \widehat{Qg}_tdt+\frac{\widehat{gh}_t-\hat g_t\hat h_t+\gamma\hat \rho_t^g}{\gamma^2}d\nu_t.\]

\subsection{The Kalman-Bucy Filter}
Another big advantage to the innovations approach is in linear filtering. In particular, the case when the filtering problem consists of a system of linear SDEs. In this case, one needs to take some steps to verify that the filtering distribution is normal, but after doing so it is straight-forward to derive equations for the posterior's first and second moments.

Consider a non-degenerate linear observations model such that $h(x)=h\cdot x$ and $\gamma>0$, with $X_0$ being Gaussian distributed, and with state-space generator
\[Q = \frac{\sigma^2}{2}\frac{\partial^2}{\partial x^2}~\cdot~+ax\frac{\partial}{\partial x}~\cdot~\]
for all functions $g\in C^2(\mathbb R)$, where $\sigma $ and $a$ are constant coefficients. The process
\[N_t \doteq \widehat X_t - \int_0^t\widehat{QX}_sds = \widehat X_t-a\int_0^t\widehat X_sds \]
is square-integrable and a martingale, and if we extend proposition \ref{prop:martRep} for $h(x) = h\cdot x$ (see proposition 2.31 on page 34 of Bain and Crisan \cite{bainCrisan}), we can then write $N_t$ using innovations Brownian motion,
\[N_t =\widehat X_0+\frac{1}{\gamma}\int_0^tf_sd\nu_s\]
where $f_t\in L^2[0,T]$ and is $\mathcal F_t^Y$-adapted. Now, by simple stochastic calculus we can verify that $X_t$ is given by
\[X_t = e^{at}X_0+\sigma\int_0^te^{a(t-s)}dB_s\]
where $B_t\perp W_t$ (i.e. the state-space noise is independent of the observation noise). Defining the estimation error $\epsilon_t$ as
\[\epsilon_t\doteq X_t-\widehat X_t.\]
and applying It\^o's lemma, we have
\[d\epsilon_t = a\epsilon_tdt+\sigma dB_t-\frac{1}{\gamma}f_td\nu_t,\]
which has a solution given by the integrating factor,
\[\epsilon_t =e^{at}\epsilon_0+\sigma\int_0^te^{a(t-s)}dB_s-\frac{1}{\gamma}\int_0^te^{a(t-s)}f_sd\nu_s,\]
which is Gaussian distributed and uncorrelated with $Y$
\[\mathbb E[Y_s\epsilon_t] =\mathbb E[Y_s(X_t-\widehat X_t)] =\mathbb E[Y_s\mathbb E[X_t-\widehat X_t|\mathcal F_t^Y]]=0\]
for all $s\leq t$. Now observe the following:
\begin{itemize}
\item $(X_t,Y_s)$ are jointly Gaussian for any $s\leq t$,
\item $\widehat X_t$ is Gaussian and a linear function of $\{Y_s\}_{s\leq t}$,
\item therefore, $(X_t,\widehat X_t,Y_s)$ are jointly Gaussian for any $s\leq t$
\item in particular $(\epsilon_t,Y_s)$ are jointly Gaussian and uncorrelated for any $s\leq t$.
\end{itemize}
Therefore, $\epsilon_t$ is independent of $\mathcal F_t^Y$ (for a more detailed discussion see \cite{bainCrisan,oxendale}), and so the filter is Gaussian . 

If we apply the filter in theorem \ref{thm:nlf} with $g(x)=x$, it yields $f_t = \frac{1}{\gamma}\mathbb E[\epsilon_t^2|\mathcal F_t^Y]=\frac{1}{\gamma} \mathbb E\epsilon_t^2$, and an SDE for the evolution of the first filtering moment
\begin{eqnarray}
\label{eq:KF}
d\widehat X_t&=&a\widehat X_tdt+\frac{h\cdot\mathbb E\epsilon_t^2}{\gamma^2}d\nu_t.
\end{eqnarray}
We can also apply It\^o's lemma and then take expectations, which will result in a Riccati equation for the evolution of the filter's covariance,
\begin{eqnarray}
\nonumber
d\mathbb E[\epsilon_t^2]&=&2a\mathbb E[\epsilon_t^2]dt+\sigma^2dt+\mathbb E[f_t^2]dt\\
\nonumber
&&\\
\label{eq:Riccati}
&=&2a\mathbb E[\epsilon_t^2]dt+\sigma^2dt-\frac{h^2\cdot\mathbb E[\epsilon_t^2]^2}{\gamma^2}dt
\end{eqnarray}
where we have used the fact that $d\nu_t\cdot dB_t=0$, and $\mathbb E[f_t^2]=\frac{1}{\gamma^2}\mathbb E\left[\mathbb E[\epsilon_t^2]\mathbb E[ \epsilon_t^2]\right]=\frac{1}{\gamma^2}\mathbb E[\epsilon_t^2]^2$. Equations (\ref{eq:KF}) and (\ref{eq:Riccati}) are the Kalman filter.

\chapter{Numerical Methods for Approximating Nonlinear Filters}

\noindent In practice, the exact  filter can only be computed for models that are completely discrete, or for discrete-time linear Gaussian models in which the Kalman filter applies. In the other cases, the consistency of approximating schemes can be relatively trivial, while for others there is a fair amount of analysis required. In this lecture we present the general theory of Kushner \cite{kushner08} regarding the consistency of approximating filters, and also the Markov chain approximation methods of Dupuis and Kushner \cite{dupuisKushner}. But first we present the following simple result regarding filter approximations:\\

\begin{example} Let $X_t$ be an unobserved Markov process, and let the observation process $Y_t$ be given by an SDE
\begin{eqnarray}
\nonumber
dY_t&=&h(t,X_t)dt+\gamma dW_t
\end{eqnarray}
where $W_t$ is an independent Wiener process, and $\gamma>0$. For any partition of the interval $[0,t]$ into $N$-many points, let $\mathcal F_t^N = \sigma\{Y_{t_n}:n\leq N\}$, and let the filtration generated by the continuum of observations be denoted by $\mathcal F_t^Y = \sigma\{Y_s:s\leq t\}$. Assuming that
\[\mathcal F_t^N\subset\mathcal F_t^{N+1}\subset\dots\dots\subseteq\mathcal F_t^Y,\]
then from the continuity of $Y_t$ we know that $\bigvee_{N=1}^\infty\mathcal F_t^N = \mathcal F_t^Y$. Then, by L\'evy's 0-1 law we know that the conditional expectations converge,

\[\lim_N\mathbb E[g(X_t)|\mathcal F_t^N]=\mathbb E\left[g(X_t)\Bigg|\bigvee_{N=1}^\infty\mathcal F_t^N\right]=\mathbb E[g(X_t)|\mathcal F_t^Y].\]
for any integrable function $g(x)$. This clearly shows that the filter with discrete observations can be a consistent estimator of the filter with a continuum of observations. However, computing the filter with discrete observations may still require some approximations.
\end{example}

\section{Approximation Theorem for Nonlinear Filters}
In this section we present a proof of a theorem that essentially says: filtering expectations of bounded functions can be approximated by filters derived from models who's hidden state converges weakly to the true state. The theorem is presented in the context of continuous-time process with a continuum of observations, but can be reapplied in other cases with relatively minor changes.

For some $T<\infty$ and any $t\in[0,T]$, let $X_t$ be an unobserved Markov process with generator $Q$ with domain $\mathcal B(Q)$, and let $\mathcal B_b(Q)$ denote the subset of bounded functions in $\mathcal B(Q)$. For any function $g\in\mathcal B_b(Q)$, we have the following limit
\[\frac{\mathbb E[g(X_{t+\Delta t})|X_t=x]-g(x)}{\Delta t} \rightarrow Qg(x)\]
as $\Delta t\searrow 0$. The observed process is $Y_t$ is given by an SDE
\begin{eqnarray}
\label{eq:dY}
dY_t&=&h(X_t)dt+\gamma dW_t
\end{eqnarray}
where $W_t$ is an independent Wiener process, $\gamma>0$ and $h$ is a bounded function. For any time $t\in [0,T]$, let $\mathcal F_t^Y = \sigma\{Y_s:s\leq t\}$. From our study of Zakai equation we know that we can write the filtering expectation as
\begin{eqnarray}
\label{eq:contFilter}
\mathbb E[g(X_t)|\mathcal F_t^Y] &=& \frac{\mathbb E[g(\tilde X_t)\tilde M_t|\mathcal F_t^Y]}{\mathbb E[\tilde M_t|\mathcal F_t^Y]}
\end{eqnarray}
where the paths of $\tilde X$ have the same law as those of $X$ but are independent of $(X,Y)$, and with $\tilde M_t$ being the likelihood ratio
\[\tilde M_t \doteq \exp\left\{\frac{1}{\gamma^2}\int_0^th(\tilde X_s)dY_s-\frac{1}{2\gamma^2}\int_0^th^2(\tilde X_s)ds\right\}.\]
The Zakai equation can provide us with an SDE for filtering expectations, but direct numerical quadrature methods to compute the Zakai equation may be difficult to justify.

\subsection{Weak Convergence}

In order to approximate the nonlinear filter, we will look to approximate $X$ with a family of process $\{X^n\}$ which converge weakly to $X$. Let $\{\mathbb P^n\}$ be a family of measures on a metric space $\mathcal D$.
\begin{definition} \textbf{Weak Convergence.} $\mathbb P^n$ is said to converge weakly to $\mathbb P$ if 
\[\int f(x)d\mathbb P^n(x)\rightarrow \int f(x)d\mathbb P(x)\qquad\hbox{as }n\rightarrow \infty\]
for any bounded continuous function $f:\mathcal D\rightarrow \mathbb R$. For the induced processes $\{X^n\}_n$, we denote weak convergence by writing $X^n\Rightarrow X$.
\end{definition}

\begin{definition}\textbf{Tightness.} We say that the family of measures $\{\mathbb P^n\}_n$ is tight if for any $\epsilon>0$ there exists a compact set $K_\epsilon\subset \mathcal D$ such that
\[\inf_n\mathbb P^n(K_\epsilon)\geq 1-\epsilon.\]
If so we also say that the induced processes $\{X^n\}_n$ are tight. 
\end{definition}
\noindent An important result regarding tightness is Prokhorov's theorem:
\begin{theorem}\textbf{Prokhorov.} If $\mathcal D$ is a complete and separable metric space, then the family $\{\mathbb P^n\}_n$ contained in the space of all probability measure on $\mathcal D$ is relatively compact in the topology of weak convergence iff it is tight.
\end{theorem}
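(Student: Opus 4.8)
The statement combines two implications requiring different hypotheses, so I would treat them separately. The easy direction, that tightness implies relative compactness, needs only separability; the reverse implication, that relative compactness forces tightness, is where the completeness of $\mathcal D$ becomes essential. Since $\mathcal D$ is separable the space of probability measures with the weak topology is metrizable, so I interpret relative compactness sequentially: every sequence drawn from $\{\mathbb P^n\}_n$ admits a weakly convergent subsequence whose limit is again a probability measure on $\mathcal D$.

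For the forward direction I would exploit tightness to reduce everything to compact sets. Choosing $\epsilon = 1/m$ gives compacts $K_m$ with $\inf_n \mathbb P^n(K_m) \geq 1 - 1/m$, and I may take them nested. On each compact metric space $K_m$ the restricted sub-probability measures lie in the closed unit ball of the dual of $C(K_m)$; since $C(K_m)$ is separable, Banach--Alaoglu yields weak-star sequential compactness, so along a subsequence $\mathbb P^{n}|_{K_m}$ converges. A diagonal extraction over $m$ produces a single subsequence $\mathbb P^{n_k}$ converging on every $K_m$ to a limit $\mu_m$; the $\mu_m$ are consistent, and their increasing limit defines a Borel measure $\mu$ on $\mathcal D$. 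The tightness bound $\mu(K_m) \geq 1 - 1/m$ forces $\mu(\mathcal D) = 1$, so no mass escapes and $\mu$ is a genuine probability measure. Finally, for any bounded continuous $f$ I would split $\int f\, d\mathbb P^{n_k}$ into its part over $K_m$ and a remainder bounded by $\|f\|_\infty/m$, and let $k\to\infty$ then $m\to\infty$ to conclude $\mathbb P^{n_k}\Rightarrow\mu$.

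For the reverse direction I would argue by contradiction through a covering lemma. The claim to establish is that relative compactness implies: for every $\epsilon>0$ and $\delta>0$ there are finitely many $\delta$-balls whose union $A$ satisfies $\inf_n \mathbb P^n(A) > 1-\epsilon$. By separability the $\delta$-balls can be enumerated $B_1,B_2,\dots$; if no finite union worked, I could select $n_j$ with $\mathbb P^{n_j}(B_1\cup\cdots\cup B_j)\leq 1-\epsilon$, pass to a weakly convergent subsequence $\mathbb P^{n_{j_l}}\Rightarrow\mu$, and apply the Portmanteau inequality $\mu(G)\leq\liminf_l \mathbb P^{n_{j_l}}(G)$ for the open set $G=B_1\cup\cdots\cup B_{j_0}$ to get $\mu(G)\leq 1-\epsilon$ for every $j_0$; letting $j_0\to\infty$ gives $\mu(\mathcal D)\leq 1-\epsilon<1$, contradicting that $\mu$ is a probability measure. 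Granting the lemma, I would apply it with $\delta=1/m$ and $\epsilon/2^m$ to obtain finite unions $A_m$, set $K=\bigcap_m\overline{A_m}$, and note that $K$ is closed and totally bounded, hence compact because $\mathcal D$ is complete. Countable subadditivity then yields $\mathbb P^n(K^c)\leq\sum_m\epsilon/2^m=\epsilon$ uniformly in $n$, which is exactly tightness.

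I expect the reverse direction to be the main obstacle: it is the only place the Polish structure is genuinely used, and the covering lemma, which converts a qualitative compactness hypothesis into a uniform quantitative mass bound on finitely many small balls, is the crux. The delicate points are invoking the Portmanteau inequality in the correct open-set, $\liminf$ direction, and using completeness precisely at the step where total boundedness of $\bigcap_m\overline{A_m}$ is upgraded to compactness; in the forward direction the only comparable subtlety is confirming via tightness that the extracted limit loses no mass.
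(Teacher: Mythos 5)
The paper never proves this theorem: it is stated as background for the consistency results and the reader is referred to the literature (Ethier and Kurtz), so there is no in-paper argument to compare yours against. On its own merits, your proof is the standard textbook argument for Prokhorov's theorem and is essentially correct: the sufficiency direction via Banach--Alaoglu on $C(K_m)^*$ plus a diagonal extraction, and the necessity direction via the covering lemma proved by contradiction with the open-set/$\liminf$ form of Portmanteau, with completeness entering exactly where total boundedness of $\bigcap_m \overline{A_m}$ is upgraded to compactness. The one step you state too glibly is the claim that ``the $\mu_m$ are consistent'': weak-star limits of restrictions do not automatically restrict compatibly, because mass near $K_{m'}\setminus K_m$ can shift in the limit. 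The standard patch is to view each $\mu_m$ as a measure on $\mathcal D$ supported in $K_m$, note that for nonnegative $f\in C(K_{m'})$ one has $\int f\,d\mathbb P^{n_k}|_{K_m}\leq \int f\,d\mathbb P^{n_k}|_{K_{m'}}$, pass to the limit, and upgrade (via Urysohn functions, inner regularity on compacts, and outer regularity of finite Borel measures on metric spaces) to the setwise inequality $\mu_m\leq\mu_{m'}$; then $\mu\doteq\lim_m\mu_m$ is a genuine Borel measure, and your tightness bound $\mu(K_m)=\lim_k\mathbb P^{n_k}(K_m)\geq 1-1/m$ gives $\mu(\mathcal D)=1$ as you claim. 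With that repair, both directions go through as you outlined.
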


For the purposes of our study, we will consider processes $X$ which are right continuous with left-hand limits (c\'adl\'ag). Let $D[0,T]$ denote the space of c\'adl\'ag functions from $[0,T]$ to $\mathbb R$, equipped with Skorohod topology. If a family of probability measures $\{\mathbb P^n\}_n$ is tight, then weak convergence of $\mathbb P^n$ to the measure on $X$ can be shown by verifying that the laws of the induced processes $\{X^n\}_n$ converge to the law of solutions to the associated martingale problem
\[\mathbb E^n\left[g(X_t^n)-g(X_s^n)-\int_s^tQg(X_\tau^n)d\tau\bigg|\mathcal F_s\right]\rightarrow 0\]
in probability as $n\rightarrow \infty$ for any $g(x)\in\mathcal B_b(Q)$, and for any $s\leq t$. It can be shown that the law of the solution to this martingale problem is unique.

Given a family of measure $\{\mathbb P^n\}_n$, an extremely useful tool is the Skorohod Representation Theorem, which says the following:
\begin{theorem}\textbf{Skorohod Representation.}
Let $\{\mathbb P^n\}_n$ be a family of measures on a complete and separable metric space. If $\mathbb P^n$ converges weakly to $\mathbb P$, then there is a probability space $(\tilde\Omega,\tilde{\mathcal F},\tilde {\mathbb P})$ with random variables $\tilde X^n$ and $\tilde X$ for which
\begin{itemize}
\item $\tilde {\mathbb P}(X^n\in A) = \mathbb P^n(A)$ for all $n$ and any set $A\subset D[0,T]$,
\item $\tilde {\mathbb P}(X\in A) = \mathbb P(A)$ for any set $A\subset D[0,T]$,
\item and $\tilde X^n\rightarrow \tilde X$ $\qquad\tilde{\mathbb P}$-a.s. as $n\rightarrow \infty$.
\end{itemize}
\end{theorem}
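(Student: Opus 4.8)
The plan is to realize all the $\tilde X^n$ and the limit $\tilde X$ on the single canonical space $(\tilde\Omega,\tilde{\mathcal F},\tilde{\mathbb P})=([0,1],\mathcal B([0,1]),\lambda)$, with $\lambda$ Lebesgue measure, by constructing measurable maps from $[0,1]$ into $D[0,T]$ whose push-forwards are the prescribed laws and which are coupled tightly enough to converge almost surely. Since $D[0,T]$ under the Skorohod metric is itself complete and separable, it suffices to prove the statement for a general Polish space $S$; I write $\mathbb P^n,\mathbb P$ for the laws on $S$ and use separability and completeness in essential ways.

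First I would build, for each resolution level $k\ge 1$, a countable Borel partition $\{A_i^k\}_{i\ge 0}$ of $S$ with the properties that each $A_i^k$ has diameter at most $2^{-k}$, each is a $\mathbb P$-continuity set (its boundary is $\mathbb P$-null), and the level-$(k+1)$ partition refines the level-$k$ one. Separability lets me cover $S$ by countably many small balls, and the fact that for a fixed centre only countably many radii give a positive-measure sphere lets me shrink each ball to a continuity set; disjointifying then yields the partition. By the Portmanteau characterisation of weak convergence, continuity sets satisfy $\mathbb P^n(A_i^k)\to\mathbb P(A_i^k)$ as $n\to\infty$ for every $k,i$.

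Next I would transfer these partitions to $[0,1]$. At level $k$ I split $[0,1]$ into consecutive subintervals $I_i^k$ of lengths $\mathbb P(A_i^k)$, arranged so that level $k+1$ refines level $k$ consistently with the refinement of the $A_i^k$, and I fix once and for all a representative point $a_i^k\in A_i^k$. Defining $\phi_k(\omega)=a_i^k$ when $\omega\in I_i^k$ gives measurable maps whose push-forward of $\lambda$ is the discretisation of $\mathbb P$; because the diameters shrink, $(\phi_k)_k$ is $\lambda$-a.s. Cauchy in $S$, and completeness supplies a measurable limit $\tilde X=\lim_k\phi_k$ with $\tilde X_*\lambda=\mathbb P$. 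For each fixed $n$ I run the identical recipe using the masses $\mathbb P^n(A_i^k)$ in place of $\mathbb P(A_i^k)$ and the same representatives $a_i^k$, producing $\phi_k^n$ and then $\tilde X^n=\lim_k\phi_k^n$ with $\tilde X^n_*\lambda=\mathbb P^n$; keeping the cells in the same order and reusing the $a_i^k$ is precisely what couples the two constructions.

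The main obstacle, and the step that forces careful bookkeeping, is upgrading convergence in distribution to genuine almost-sure convergence of $\tilde X^n$ to $\tilde X$. At level $k$ the intervals $I_i^{k,n}$ built from $\mathbb P^n(A_i^k)$ have endpoints converging to those of $I_i^k$ as $n\to\infty$, so $\phi_k^n$ and $\phi_k$ agree except on a set of $\lambda$-measure tending to $0$; on the agreement set both maps land in the same cell $A_i^k$, hence within $2^{-k}$ of one another. To make this survive the double limit in $k$ and $n$ I would choose an increasing sequence $N_k$ so that for $n\ge N_k$ the total mismatch at level $k$ is below $2^{-k}$, and then apply Borel–Cantelli to conclude that for $\lambda$-a.e.\ $\omega$ one has $d(\tilde X^n(\omega),\tilde X(\omega))\to 0$. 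Assembling the measurability of the limiting maps, the exactness of the push-forward laws, and this almost-sure statement completes the proof.
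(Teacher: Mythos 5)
The paper itself does not prove this theorem; it states it and defers to Ethier and Kurtz \cite{ethierKurtz}, so your argument must be judged on its own. Your construction is the classical nested-partition proof, and almost all of it is sound: separability plus the countable-radii trick produces the partitions into $\mathbb P$-continuity sets of diameter at most $2^{-k}$, intersecting successive levels preserves both properties, the Portmanteau theorem gives $\mathbb P^n(A_i^k)\to\mathbb P(A_i^k)$, and nestedness makes $(\phi_k)_k$ and $(\phi_k^n)_k$ a.s.\ Cauchy. (To pin down $\tilde X_*\lambda=\mathbb P$ exactly you should add one line: the law of $\phi_k$ is the discretization $\sum_i \mathbb P(A_i^k)\delta_{a_i^k}$, which converges weakly to $\mathbb P$, while $\phi_k\to\tilde X$ a.s.\ forces the laws of $\phi_k$ to converge weakly to $\tilde X_*\lambda$; same for each $n$.)

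The genuine gap is the final Borel--Cantelli step. With $N_k$ chosen so that $\lambda(\phi_k^n\neq\phi_k)<2^{-k}$ for all $n\ge N_k$, your bad sets are $B_n=\{\phi_{k(n)}^n\neq\phi_{k(n)}\}$, where $k(n)$ indexes the block $[N_k,N_{k+1})$ containing $n$. Borel--Cantelli needs $\sum_n\lambda(B_n)<\infty$, but each block can contain arbitrarily many integers, each contributing up to $2^{-k}$, so the sum is of order $\sum_k (N_{k+1}-N_k)2^{-k}$ and need not converge; you have no control over the block lengths, since $N_k$ is dictated by how slowly $\mathbb P^n(A_i^k)\to\mathbb P(A_i^k)$. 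Unlike the randomized block coupling (Billingsley's proof), where all $n$ in one block share a single small bad event, your sets $B_n$ genuinely vary with $n$, and what your mismatch bounds directly yield is only convergence in $\lambda$-measure. Fortunately your construction does not need Borel--Cantelli at all. For fixed $k$, disagreement $\{\phi_k^n\neq\phi_k\}$ can only occur when $\omega$ lies in some $I_i^k\setminus I_i^{k,n}$, and each endpoint of $I_i^{k,n}$ equals (inductively) the parent interval's left endpoint plus a finite sum of sibling masses $\mathbb P^n(A_j^k)$, hence converges to the corresponding endpoint of $I_i^k$. So if $\omega$ is not one of the countably many interval endpoints (over all $k,i$), then for every $k$ there is $N(\omega,k)$ with $\phi_k^n(\omega)=\phi_k(\omega)$ for all $n\ge N(\omega,k)$, and therefore $d(\tilde X^n(\omega),\tilde X(\omega))\le 2^{-k+2}$ for all such $n$. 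This pointwise argument gives the a.s.\ convergence with no summability requirement; with it in place of Borel--Cantelli, your proof is complete.
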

\noindent In particular, if $\tilde X^n\rightarrow \tilde X\in C[0,T]$ a.s. in the Skorohod topology, then the convergence holds uniformly in $t$. For a detailed treatment of weak convergence, the Skorohod topology, martingale problems, and Skorohod representations, see the book by Ethier and Kurtz \cite{ethierKurtz} and the book by Yin and Zhang \cite{yinZhangCont}.

%

\subsection{Consistency Theorem}
Let $\{X^n\}_n$ be a family of random variables on the same probability space as $(X,Y)$, which also converge weakly to $X$. Let $\tilde X^n$ denote a copy of $X^n$ that is independent of $(X,Y)$, let the approximated likelihood be denoted by $\tilde M_t^n$,
\[\tilde M_t^n \doteq \exp\left\{\frac{1}{\gamma^2}\int_0^th( \tilde X_s^n)dY_s-\frac{1}{2\gamma^2}\int_0^th^2(\tilde X_s^n)ds\right\}\]
and define the approximated filtering expectation 
\begin{equation}
\label{eq:approxFilter}
\mathcal E_t^n\left[g(X_t)\right] \doteq \frac{\mathbb E[g(\tilde X_t^n)\tilde M_t^n|\mathcal F_t^Y]}{\mathbb E[\tilde M_t^n|\mathcal F_t^Y]}\qquad\forall t\in[0,T]
\end{equation}
for any function $g(x)\in\mathcal B_b(Q)\cap C(\mathcal S)$. We then have the following theorem:
\begin{theorem} 
\label{thm:mainThm}
Given a family of process $\{X^n\}$ taking values in $D[0,T]$ which converge weakly in the Skorohod topology to $X\in C[0,T]$, the approximated filter in (\ref{eq:approxFilter}) will converge uniformly
\[\sup_{t\leq T}\left|\mathcal E_t^n\left[g(X_t)\right] -\mathbb E[g(X_t)|\mathcal F_t^Y]   \right|\rightarrow 0,\qquad\hbox{for }X_t\in C[0,T]\]
in probability and in mean as $n\rightarrow\infty$, for any function $g(x)\in\mathcal B_b(Q)\cap C(\mathcal S)$.
\end{theorem}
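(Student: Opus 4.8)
The plan is to exploit the ratio structure that both the true filter and its approximation already possess, and to prove convergence of the unnormalized numerator and denominator separately before recombining. Write
\[
\mathrm{num}^n_t = \mathbb E[g(\tilde X^n_t)\tilde M^n_t\mid\mathcal F^Y_t],\qquad \mathrm{den}^n_t = \mathbb E[\tilde M^n_t\mid\mathcal F^Y_t],
\]
and let $\mathrm{num}_t,\mathrm{den}_t$ denote the limiting objects built from $\tilde X$ and $\tilde M_t$, so that $\mathcal E^n_t[g(X_t)]=\mathrm{num}^n_t/\mathrm{den}^n_t$ and $\hat g_t=\mathrm{num}_t/\mathrm{den}_t$. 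Two reductions render the ratio harmless. First, since $\tilde M^n_t\ge 0$ the approximated filter is a weighted average of values of $g$, so $|\mathcal E^n_t[g(X_t)]|\le\|g\|_\infty$ for every $n$ and $t$; hence once uniform convergence in probability is shown, convergence in mean follows from bounded convergence applied to the bounded quantity $\sup_{t\le T}|\mathcal E^n_t[g]-\hat g_t|$. Second, the limiting denominator $\mathrm{den}_t=\phi_t[1]$ solves $d\phi_t[1]=\phi_t[h/\gamma^2]\,dY_t$, so it is continuous and a.s.\ strictly positive; on the high-probability event $\{\inf_{t\le T}\mathrm{den}_t\ge\delta\}$ the map $(a,b)\mapsto a/b$ is Lipschitz, so it suffices to prove $\mathrm{num}^n\to\mathrm{num}$ and $\mathrm{den}^n\to\mathrm{den}$ uniformly on $[0,T]$ in probability.

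Next I would set up the limit through weak convergence. Because $\tilde X^n$ is independent of $(X,Y)$, the pair $(\tilde X^n,Y)$ has product law, so $X^n\Rightarrow X$ yields joint weak convergence $(\tilde X^n,Y)\Rightarrow(\tilde X,Y)$. The Skorohod Representation Theorem can then be arranged with the $Y$-coordinate held fixed, producing copies with $\hat X^n\to\hat X$ almost surely; since the limit lies in $C[0,T]$ the convergence is uniform in $t$, while $Y$ is common to all $n$. Boundedness of $h$ gives uniform $L^p$-bounds on $\{\tilde M^n_t\}$, hence uniform integrability of $\{g(\tilde X^n_t)\tilde M^n_t\}$; together with a Doob/maximal-inequality estimate on the increments of $\tilde M^n$ (whose bracket is controlled by $\|h\|_\infty^2$ and the fixed modulus of continuity of $Y$) this gives tightness of $\mathrm{num}^n,\mathrm{den}^n$ in $C[0,T]$, upgrading pointwise-in-$t$ convergence to uniform convergence.

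The hard part is passing to the limit inside the likelihood $\tilde M^n_t$, which contains the It\^o integral $\int_0^t h(\tilde X^n_s)\,dY_s$. This is \emph{not} a continuous functional of the path in the Skorohod topology—the stochastic integral is only an in-probability limit—so the almost-sure uniform convergence $\hat X^n\to\hat X$ cannot be fed directly through the continuous mapping theorem. I would resolve this in one of two equivalent ways. The robust route is to integrate by parts (taking $Y_0=0$ for simplicity and using the independence of $\tilde X^n$ and $Y$ to annihilate the covariation term),
\[
\int_0^t h(\tilde X^n_s)\,dY_s = h(\tilde X^n_t)\,Y_t-\int_0^t Y_s\,d\bigl(h(\tilde X^n_s)\bigr),
\]
which for $h\in C^1_b$ depends on $Y$ only through its values and is stable under the almost-sure uniform path convergence, so the continuous mapping theorem applies pathwise. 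The alternative route is a Kurtz--Protter convergence-of-stochastic-integrals theorem: since the integrator $Y$ is the \emph{same} fixed semimartingale for every $n$, the uniform-tightness hypothesis is automatic, and $h(\tilde X^n)\to h(\tilde X)$ uniformly forces $\int_0^\cdot h(\tilde X^n_s)\,dY_s\Rightarrow\int_0^\cdot h(\tilde X_s)\,dY_s$ jointly with $\tilde X^n\Rightarrow\tilde X$.

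Having established $\tilde M^n_t\to\tilde M_t$ uniformly in $t$ in probability, and $g(\tilde X^n_t)\to g(\tilde X_t)$ from the continuity of $g$ and the uniform path convergence, I would take conditional expectations given $\mathcal F^Y_t$, invoke the uniform integrability of the second step to interchange the limit with $\mathbb E[\,\cdot\mid\mathcal F^Y_t]$, and conclude that $\mathrm{num}^n\to\mathrm{num}$ and $\mathrm{den}^n\to\mathrm{den}$ uniformly on $[0,T]$ in probability. Combined with the two reductions of the first paragraph, this gives uniform convergence of $\mathcal E^n_t[g(X_t)]$ to $\hat g_t$ both in probability and in mean, proving the theorem.
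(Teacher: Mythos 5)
Your proposal is correct in substance, but in its crucial step it takes a genuinely different route from the paper. Both arguments share the same skeleton: reduce to convergence of the unnormalized quantities $\mathbb E[g(\tilde X_t^n)\tilde M_t^n\mid\mathcal F_t^Y]$ (the denominator being the case $g\equiv 1$), and use the Skorokhod representation together with continuity of the limit path to get $\sup_{t\le T}|\tilde X_t^n-\tilde X_t|\to 0$ a.s.\ while keeping each $\tilde X^n$ independent of $(X,Y)$ --- your product-space construction ``holding the $Y$-coordinate fixed'' is exactly the paper's W.L.O.G.\ step, and your handling of the ratio (a.s.\ positivity of $\phi_t[1]$, boundedness of both filters by $\|g\|_\infty$, bounded convergence for the mean) is actually more explicit than the paper, which passes over the denominator silently. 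The divergence is in how $\tilde M^n\to\tilde M$ is established. The paper never confronts the (correctly identified by you) failure of path-continuity of $x\mapsto\int_0^\cdot h(x_s)\,dY_s$; instead it proves the elementary moment estimate
\[
\mathbb E\sup_{t\le T}\left|\exp\left\{\int_0^t\zeta_s^1\,dW_s\right\}-\exp\left\{\int_0^t\zeta_s^2\,dW_s\right\}\right|^2\le C\,\mathbb E\int_0^T|\zeta_s^1-\zeta_s^2|^2\,ds
\]
via $|e^a-e^b|\le|a-b|(e^a+e^b)$, Cauchy--Schwarz, Doob's submartingale inequality and the It\^o isometry, which reduces the whole problem to $\mathbb E\int_0^T|h(\tilde X_s^n)-h(\tilde X_s)|^2\,ds\to 0$, i.e.\ to ordinary dominated convergence. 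That buys a self-contained proof that bounds $\mathbb E\sup_{t\le T}|\cdot|$ directly, so uniform convergence in mean (and an explicit rate in terms of the $L^2$ distance of $h(\tilde X^n)$ to $h(\tilde X)$) comes for free, with no separate tightness argument for the conditional-expectation processes. Your Kurtz--Protter/stochastic-dominated-convergence route buys modularity and generality --- a fixed integrator trivially satisfies uniform tightness, and $|h(\tilde X^n)|\le\|h\|_\infty$ dominates --- but it delivers ucp convergence in probability first and then needs your uniform-integrability and tightness steps to recover convergence in mean; you should also say explicitly that the whole family $\{\tilde X^n\}_n\cup\{\tilde X\}$ is constructed jointly independent of $(X,Y)$, so that all integrands are adapted to one common filtration under which $Y$ remains a semimartingale.

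One of your two alternatives, however, has a genuine gap: the integration-by-parts identity
\[
\int_0^t h(\tilde X_s^n)\,dY_s=h(\tilde X_t^n)Y_t-\int_0^t Y_s\,d\bigl(h(\tilde X_s^n)\bigr)
\]
presupposes that $h(\tilde X^n)$ is a semimartingale (otherwise neither $d\bigl(h(\tilde X_s^n)\bigr)$ nor the discarded covariation term is defined), and weak convergence of $X^n$ in $D[0,T]$ gives no such structure; moreover the theorem does not assume $h\in C_b^1$. This route is legitimate for the paper's pure-jump Markov chain approximations, where $h(\tilde X^n)$ is piecewise constant of finite variation, but not in the stated generality. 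Since your second route suffices on its own, the proof stands, but the integration-by-parts argument should be demoted to a remark valid under extra hypotheses.
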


\begin{proof} (taken from \cite{kushner08}) For the purposes of this proof, we can neglect the $h^2$ terms in $M_t$ and $M_t^n$. Let $\zeta^1$ and $\zeta^2$ be bounded processes independent of $W$, and consider the following estimate,

\begin{equation}
\label{eq:V}
V = \mathbb E\sup_{t\leq T}\left| \exp\left\{\int_0^t\zeta_s^1dW_s \right\}-\exp\left\{\int_0^t\zeta_s^2dW_s \right\}    \right|.
\end{equation}
We will use the following inequality that holds for real numbers $a$ and $b$,
\begin{equation}
\label{eq:expIneq}
|e^a-e^b|\leq |a-b|(e^a+e^b).
\end{equation}
For any real-valued sub-martingale $N_t$, we have the following inequality (see \cite{K-S}),
\begin{equation}
\label{eq:subMartIneq}
\mathbb E\sup_{t\leq T}N_t^2\leq4\mathbb EN_T^2.
\end{equation}
Inequality (\ref{eq:expIneq}) and a Schwarz inequality applied to (\ref{eq:V}) yields,
\begin{equation}
\label{eq:V2}
V^2\leq \mathbb E\sup_{t\leq T}\left| \int_0^t(\zeta_s^1-\zeta_s^2)dW_s  \right|^2\times \mathbb E\sup_{t\leq T}\left| \exp\left\{\int_0^t\zeta_s^1dW_s \right\}+\exp\left\{\int_0^t\zeta_s^2dW_s \right\}    \right|^2
\end{equation}
By (\ref{eq:subMartIneq}) the first term in (\ref{eq:V2}) is bounded by $4\mathbb E \int_0^T|\zeta_s^1-\zeta_s^2|^2ds$. To bound the second term we use the fact that
\[\mathbb E\exp\left\{\int_0^t\zeta_s^idW_s \right\} = \mathbb E \exp\left\{\frac{1}{2}\int_0^t|\zeta_s^i|^2ds \right\}\qquad\hbox{for }i=1,2\]
along with (\ref{eq:subMartIneq}) and the fact that $\exp\left\{\frac{1}{2}\int_0^t\zeta_s^idW_s \right\}$ are bounded sub-martingales. Using these facts we can find a constant $C$ that depends on $T$, $\zeta^1$ and $\zeta^2$, such that
\begin{equation}
\label{eq:Vbound}
V^2\leq C\cdot\mathbb E\int_0^T|\zeta_s^1-\zeta_s^2|^2ds.
\end{equation}

To prove the theorem is suffices to show that 
\[\sup_{t\leq T}\left|\mathbb E[g(\tilde X_t^n)\tilde M_t^n|\mathcal F_t^Y]-\mathbb E[g(\tilde X_t)\tilde M_t|\mathcal F_t^Y]\right|\rightarrow 0\]
in probability as $n\nearrow \infty$. From the boundedness of $h$ and $T$, we know that 
\begin{equation}
\label{eq:Mbound}
\mathbb E\sup_{t\leq T}\left(M_t^n\right)^2+\mathbb E\sup_{t\leq T}\left(M_t\right)^2<\infty.
\end{equation}
Let $\tilde X$ be a copy of $X$ that is independent of $(X,Y)$. By the Skorokhod representation theorem we can assume W.L.O.G. that $\{\tilde X^n\}_n$ are defined on the same probability space as $(\tilde X, X,Y)$, that each $\tilde X^n$ is independent of $(X,Y)$, and that $X^n\rightarrow X$ a.s. In fact, because we have assume that $X^n$ converges to a continuous function on the Skorohod topology, we know that the convergence is uniform,
\[\sup_{t\leq T}|\tilde X_t^n-\tilde X_t|\rightarrow 0,\qquad\hbox{a.s.}\]
as $n\rightarrow\infty$. In particular, $\sup_{t\leq T}|g(\tilde X_t^n)-g(\tilde X_t)|\tilde M_t^n\rightarrow 0$ a.s.

Taking expectations, we have
\[\mathbb E\sup_{t\leq T}\left|\mathbb E[g(\tilde X_t^n)\tilde M_t^n|\mathcal F_t^Y]-\mathbb E[g(\tilde X_t)\tilde M_t|\mathcal F_t^Y]\right|\]
\[\leq\mathbb E\sup_{t\leq T}\mathbb E\left[|g(\tilde X_t^n)-g(\tilde X_t)|\cdot\tilde M_t^n\Big|\mathcal F_t^Y\right]+\|g\|_\infty\mathbb E\sup_{t\leq T}\mathbb E\left[|\tilde M_t^n-\tilde M_t|\Big|\mathcal F_t^Y\right]\]

\[\leq\mathbb E\left[\mathbb E\left[\sup_{t\leq T}\left(|g(\tilde X_t^n)-g(\tilde X_t)|\cdot\tilde M_t^n\right)\Big|\mathcal F_t^Y\right]\right]+\|g\|_\infty\mathbb E\left[\mathbb E\left[\sup_{t\leq T}|\tilde M_t^n-\tilde M_t|\Big|\mathcal F_t^Y\right]\right]\]
\begin{equation}
\label{eq:limit}
\leq\mathbb E\left[\sup_{t\leq T}\left(|g(\tilde X_t^n)-g(\tilde X_t)|\cdot\tilde M_t^n\right)\right]+C\|g\|_\infty\mathbb E\left[\int_0^T\left|h(\tilde X_s^n)-h(X_s)\right|^2ds\right].
\end{equation}
The first term in (\ref{eq:limit}) goes to zero as $n\rightarrow\infty$ after applying the bound in (\ref{eq:Mbound}) and then by calling dominated convergence. The second term in (\ref{eq:limit}) goes to zero because of the bound in (\ref{eq:Vbound}). Therefore, the approximated filter converges in mean, and in probability as well.
\end{proof}

\begin{corollary} To generalize theorem \ref{thm:mainThm}  for any $X\in D[0,T]$ such that $X^n\Rightarrow X$, we simply need to rework the end of the proof to show that the limit holds pointwise,
\[\left|\mathcal E_t^n\left[g(X_t)\right] -\mathbb E[g(X_t)|\mathcal F_t^Y]   \right|\rightarrow 0,\qquad\hbox{for }X_t\in D[0,T]\]
in probability and in mean as $n\rightarrow\infty$, almost everywhere $t\in [0,T]$, and for any function $g(x)\in\mathcal B_b(Q)\cap C(\mathcal S)$.

\end{corollary}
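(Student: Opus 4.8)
The plan is to retrace the proof of Theorem \ref{thm:mainThm} verbatim up through the estimate (\ref{eq:limit}), changing only the final limiting argument so as to replace the uniform conclusion by a pointwise one at Lebesgue-almost-every $t$. The essential point is this: when $X\in C[0,T]$, Skorohod convergence $\tilde X^n\to\tilde X$ to a continuous limit automatically upgrades to uniform convergence $\sup_{t\le T}|\tilde X^n_t-\tilde X_t|\to 0$ a.s., and it was exactly this uniformity that let us carry the $\sup_{t\le T}$ through the estimate. For a general c\`adl\`ag limit $X\in D[0,T]$ this upgrade fails — near a jump of $X$ the approximating paths may be displaced by a vanishing time-shift, so that $\sup_{t\le T}|\tilde X^n_t-\tilde X_t|$ need not tend to $0$ — and a uniform conclusion genuinely cannot be expected. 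We therefore settle for convergence at each continuity point of $X$.

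First I would fix a time $t$ that is a continuity point of $X$. Invoking the Skorohod representation theorem as before, we may assume $\tilde X^n\to\tilde X$ almost surely in the Skorohod topology on $D[0,T]$; by the definition of that topology, at every continuity point $t$ of the limit one has the genuine pointwise convergence $\tilde X^n_t\to\tilde X_t$ a.s. Since a c\`adl\`ag path has at most countably many discontinuities, the set of such $t$ has full Lebesgue measure in $[0,T]$, which is the source of the ``almost everywhere $t$'' in the statement.

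With $t$ so fixed, I would reproduce the chain of inequalities leading to (\ref{eq:limit}) but without the outer $\sup_{t\le T}$, bounding
\[\left|\mathbb E[g(\tilde X^n_t)\tilde M^n_t|\mathcal F_t^Y]-\mathbb E[g(\tilde X_t)\tilde M_t|\mathcal F_t^Y]\right|\]
in mean by the two terms $\mathbb E\big[|g(\tilde X^n_t)-g(\tilde X_t)|\,\tilde M^n_t\big]$ and $\|g\|_\infty\,\mathbb E\big[|\tilde M^n_t-\tilde M_t|\big]$. For the first term the integrand tends to $0$ a.s. by continuity of $g$ together with $\tilde X^n_t\to\tilde X_t$, and is uniformly integrable since $g$ is bounded and $\{\tilde M^n_t\}_n$ is $L^2$-bounded by (\ref{eq:Mbound}); a Vitali argument then sends the expectation to $0$. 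For the second term I would apply the exponential inequality (\ref{eq:expIneq}) and the Schwarz inequality together with the It\^o isometry exactly as in the derivation of (\ref{eq:Vbound}) — now at the fixed endpoint $t$, so that Doob's maximal inequality (\ref{eq:subMartIneq}) is not even needed — to obtain a bound of the form $C\big(\mathbb E\int_0^t|h(\tilde X^n_s)-h(\tilde X_s)|^2ds\big)^{1/2}$. Since $h$ is bounded and continuous and $\tilde X^n_s\to\tilde X_s$ for a.e.\ $s$ (all continuity points of $\tilde X$), the integrand tends to $0$ a.e.\ and is dominated by $4\|h\|_\infty^2$, so dominated convergence forces the integral, and hence this term, to $0$.

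The only real obstacle is the one already flagged: the loss of uniform convergence under the Skorohod topology when the limit is discontinuous. Once it is accepted that the conclusion must be weakened from ``uniformly in $t$'' to ``for almost every $t$'', the adaptation is mechanical, since conditioning on $\mathcal F_t^Y$, the tower property, and the $L^2$ bounds on the likelihood ratios are all insensitive to whether $X$ is continuous. Assembling the two limits shows $\left|\mathcal E_t^n[g(X_t)]-\mathbb E[g(X_t)|\mathcal F_t^Y]\right|\to 0$ in mean, and convergence in probability follows.
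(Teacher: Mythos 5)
Your proposal is correct and is exactly the reworking the paper has in mind: the paper gives no separate proof of the corollary beyond instructing the reader to redo the end of Theorem \ref{thm:mainThm} pointwise, and your argument---Skorohod convergence yields a.s.\ convergence at continuity points of the limit, the pointwise estimate then closes via uniform integrability for the $g$-term and the $L^2$ bound plus dominated convergence for the likelihood term---is precisely that reworking. The one refinement worth recording is that ``$t$ is a continuity point'' is a random event, so the almost-every-$t$ set should be produced by Fubini: $\int_0^T\mathbb P(\Delta\tilde X_t\neq 0)\,dt=\mathbb E\bigl[\mathrm{Leb}\{t:\Delta\tilde X_t\neq 0\}\bigr]=0$ since each path has countably many jumps, hence for Lebesgue-a.e.\ fixed $t$ the path is a.s.\ continuous at $t$ and your argument applies verbatim.
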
 

\section{Markov Chain Approximations}
Theorem \ref{thm:mainThm} applies directly when observations are available as often as needed. In addition, a continuum of observations allows us a certain amount of flexibility in our choice of approximation scheme. 
\subsection{Approximation of Filters for Contiuous-Time Markov Chains}
Let $X_t$ be a finite-state Markov chain. Consider a time step $\Delta t=\frac{1}{n}$. For $n$ finite, take $k=0,1,2,3,4,\dots,T/\Delta t-1$ and denote the Markov chain $\xi_k^n$ with transition probabilities 
\[\mathbb P(\xi_{k+1}^n=i|\xi_k^n=j) = \left[e^{Q^*\Delta t}\right]_{ij}\]
for any $i,j\in\{1,\dots,m\}$. The Markov chain $X_k^n$ is discrete but could be extended to $D[0,T]$ by taking $X_t^n = \sum_k\xi_k^n\mathbf 1_{t\in[t_k,t_{k+1})}$, \textbf{but this is not a continuous-time Markov chain.} The non-Markov structure of $X_t^n$ does not prevent us from applying theorem \ref{thm:mainThm},  but showing tightness and convergence of the martingale problem will be easier if we can find a Markovian approximation.

Let $k=1,2,3,4,\dots\dots$, take $\nu_k\sim iid\exp(1)$, and set $\tau_k = \tau_{k-1}+\nu_k\Delta t$ with $\tau_0=0$. Now let $\xi_k^n$ be the discrete Markov chain that we have already defined, but now consider all $k$ up until $\tau_{k+1}\geq T$. A continuous-time approximation of $X_t$ is then 
\[X_t^n = \sum_{k:\tau_k< T}\mathbf 1_{t\in [\tau_k,\tau_{k+1})}\xi_k^n.\]
To show that $\{ X^n\}_n$ is compact in $D[0,T]$ we proceed as follows:\\

\noindent Take any $f\in D[0,T]$ and for any $i = 01,2,3,4,\dots$ let $\mathcal J_i$ denote the time of the $ith$ jump,
\[\mathcal J_{i+1}(f) = \inf\{t>\mathcal J_i(f):f_t^n\neq f_{\mathcal J_i}^n\}\wedge T,\]
with $\mathcal J_0(f) = 0$. For general $f\in D[0,T]$ these $\mathcal J_i$'s may by infinitesimally small, but they will be informative for processes which approximated continuous-time Markov chains. 

Next, for any $\delta >0$ define the set
\[A_\delta = \left\{f\in \{1,\dots,m\}~~s.t.~~~|f_{\mathcal J_{i+1}}-f_{\mathcal J_i}|\geq \delta ~~\forall \mathcal J_{i+1}\leq T\right\}.\]
We can easily check that 
\[\mathbb P(X^n\in A_\delta^c)\leq 1- e^{\delta\max_jQ_{jj}}\leq -\delta\max_jQ_{jj}\ll1\]
for any $\delta $ small enough. Furthermore, for any sequence $\{f^n\}_n\subset A_\delta$ we can find a subsequence such that for any $i$ we have
\[\mathcal J_i(f^{n_\ell})\rightarrow \mathcal T_i\in [i\delta,T]\]
and
\[f_{\mathcal J_i(f^{n_\ell})}^{n_\ell}\rightarrow a_i\in \{1,\dots,m\}\]
as $\ell\rightarrow\infty$. Therefore, $f_s^{n_\ell}\rightarrow a_i$ for $s\in [\mathcal T_i,\mathcal T_{i+1})$, which shows that there is subsequence that converges point-wise. Therefore, since $D[0,T]$ is equipped with the point-wise metric it follows that $\bar A_\delta$ is  compact.

Now, if we look at the martingale problem, we have
\[\mathbb Eg( X_t^n) = \sum_{k:k\Delta t\leq t}\mathbb E\Delta g( X_{t_k})+\mathbb Eg(\bar X_0)\]

\[= \sum_{k:k\Delta t\leq t}\mathbb EQ g(X_{t_k})\Delta t+\mathbb Eg(X_0)+O(\Delta t)\]
which converges to the martingale problem as $\Delta t\searrow 0$. Therefore, $ X^n\Rightarrow X$ weakly in $D[0,T]$.

Then for any $s,t\in[0,T]$ with $s<t$, the change in the likelihood ratio for any path is given by
\[M_{s,t}^n = \exp\Bigg\{\frac{1}{\gamma^2}\sum_{k:s<\tau_k\leq t}h(X_{(\tau_{k-1}\vee s)}^n)(Y_{\tau_k}-Y_{(\tau_{k-1}\vee s)})\qquad\qquad\]

\[\qquad\qquad\qquad-\frac{1}{2\gamma^2}\sum_{k:s<\tau_k\leq t}h^2(X_{\tau_{k-1}\vee s})(\tau_k-(\tau_{k-1}\vee s))\Bigg\}.\]
At time $t$, let the approximating filtering mass function be denoted by $\omega_t = (\omega_t^1,\dots,\omega_t^m)$. Given $\omega_s$ we have the following recursion for the filtering mass:
\[\omega_t^i = \frac{\sum_j\mathbb E\left[\mathbf 1_{\tilde X_t=i}\tilde M_{s,t}^n\Big|\mathcal F_t^Y\vee\{\tilde X_s=j\}\right]\omega_s^j}{\sum_j\mathbb E\left[\tilde M_{s,t}^n\Big|\mathcal F_t^Y\vee\{\tilde X_s=j\}\right]\omega_s^j}.\]
\subsection{Approximation of Filter for System of SDEs}
Let the filtering problem be as follows
\begin{eqnarray}
\nonumber
dX_t&=&a(X_t)dt+\sigma(x) dB_t\\
\nonumber
dY_t&=&h(X_t)dt+\gamma dW_t
\end{eqnarray}
with $h$ bounded, $\gamma>0$, and $W_t\perp B_t$. If $\sigma^2(x)\geq \frac{1}{n}|a(x)|$ for all $x\in\mathbb R$, we can approximate $X$ with the a Markov chain $X^n$ taking paths in $D[0,T]$ and $X_t^n$ taking values in $\mathcal S^n=\{0,\pm\frac{1}{n},\pm\frac{2}{n},\dots\}$, defined as follows: $\Delta t^n(x) = \frac{1}{n^2\sigma^2(x)}$, and given $X_t^n=x$ the conditional probability distribution at time $t+\Delta  t^n(x)$ is given by
\[\mathbb P\left(X_{t+\Delta t^n(x)}^n=x\pm\frac{1}{n}\bigg|X_t^n=x\right) = \frac{\sigma^2(x)\pm \frac{1}{n}|a(x)|}{2\sigma^2(x)}.\]
We can construct a continuous-time Markov chain from this discrete-time Markov chain by using exponential arrivals as we did in with finite-state Markov chains. The subsequent process $\bar X_t^n$ has the following differential,

\[\bar X_t^n = X_0 +\int_0^t a(\bar X_s^n)ds+\int_0^t\sigma(\bar X_s^n)d\omega_s+\epsilon^n(t)\]
where $\omega_t$ is an independent Wiener process and $\epsilon^n(t)$ is a semi-martingale such that 
\[\mathbb E\sup_{t\leq T}(\epsilon^n(t))^2\rightarrow 0\]
 as $n\rightarrow\infty$, and by theorem 2.7b on page 27 of Ethier and Kurtz \cite{ethierKurtz}, $\{\bar X^n\}_n$ are tight for all $T<\infty$.
This process is a locally consistent approximation to $X$ on $[0,T]$. 

\subsection{Discrete-Time Obsevations}
For general discrete observations models, theorem \ref{thm:mainThm} applies for all $\{t_n\}_n\subset[0,T]$ for which each $t_n$ is an observation time. For simplicity, suppose that $Y_t$ is unobserved for $t\in (0,1)$ and that the only observations are available at times $t=0$ and $t=1$,
\[Y_1= Y_0+\int_0^1h(X_s)ds+W_1.\]
and assume that $X_t$ is a Markov chain on a finite state-space. We write the filtering mass recursively as
\[\pi_1(x)=\mathbb P(X_1=x|Y_0,Y_1)\]

\[= \frac{1}{c}\sum_{v\in\mathcal S}\mathbb E\left[\mathbb P(Y_1|Y_0,\{\tilde X_s\}_{s\leq 1})\Big|\mathcal F_1^Y\vee\{\tilde X_1=x,\tilde X_0=v\}\right]e^{Q^*1}(x|v)\pi_0(v)\]
where $c$ is a normalizing constant, $\mathcal F_1^Y=\sigma\{Y_0,Y_1\}$, the process $\tilde X$ is a copy of $X$ that is independent from $(X,Y)$, $e^{Q^*t}(~\cdot|~\cdot~)$ is the transition kernel of $X$, and the likelihood function is
\[\mathbb P(Y_1|Y_0,\{x_s\}_{s\leq 1}) = \exp\left\{-\frac{1}{2}\left(\frac{Y_1-Y_0-\int_0^1h(x_s)ds}{\gamma}\right)^2  \right\}\]
for any path $\{x_s\}_{s\leq 1}$. We approximate $X$ with a discrete-time Markov Chain $X^n$ such that
\[\mathbb P(X_{k+1}^n=x|X_k^n=x') = e^{Q^*/n}(x|x')\]
and in this case, theorem \ref{thm:mainThm} applies at time $t=1$ because
\[ \frac{1}{n}\sum_kh(X_k^n)\Rightarrow \int_0^1h(X_s)ds\]
as $n\rightarrow\infty$.

In theory, the approximated filter will converge, however there are still computational issues because as $n$ gets smaller we will need to devise a method to compute the expected likelihood
\[\mathbb E\left[\exp\left\{-\frac{1}{2}\left(\frac{Y_1-Y_0-\frac{1}{n}\sum_{k=0}^{n-1}h(\tilde X_k^n)}{\gamma}\right)^2  \right\}\bigg|\mathcal F_1^Y\vee\{\tilde X_n^n=x,\tilde X_0^n=v\}\right]\]
where $\tilde X^n$ is a copy of $X^n$ that is independent of $(X,Y)$. Monte Carlo methods can also be used, but the conditioning of the expectation on $X_n^n$ might slow the convergence. 

\chapter{Linear Filtering}

\noindent Filtering in general linear models is perhaps the most widely applied branch of filtering, but in the context of linearity the term `filtering' refers to something that is fundamentally different from the probabilistic models and equations that comprise what mathematicians refer to as `filtering theory.' The methods are not Bayesian, and probability's involvement can be minimal at times, but some of the most important ideas in filtering theory, such as the use of \textit{innovations}, can be traced back to their pragmatic roots in signal processing and `linear' filtering.

\section{General Linear Filters}
Let the integer $n\in\{0,2,\dots,N-1\}$ denote a time index. The simplest way to present a filtering problem is to identify a given measurement as a signal plus noise,
\[Y_n=X_n+W_n\]
where $W$ is a noise component with positive covariance $R$,
\[R_\ell\doteq \mathbb EW_{n\pm\ell}W_n\]
for some integer-valued lag $\ell$. The noise can be considered idiosyncratic, essentially meaning that it is orthogonal to $X$,

\[\mathbb EX_{n+\ell}W_n= 0\]
for any  time $n$ and any lag $\ell$. For any linear filter $H:\mathbb R^N\rightarrow \mathbb R^M$, the impulse response is its convolution with the measurement
\begin{eqnarray}
\nonumber
(H*Y)_k&=&(H*X)_k+(H*W)_k
\end{eqnarray}
where the convolution is a function of a shift $k$, 
\[(H*Y)_k = \sum_{n=0}^{M-1}X_nH_{n-k}.\]
If $k=0$, the convolution can be thought of as the inner-product. For the filter $H$, the signal-to-noise ratio (SNR) is defined as ratio of the signal response over the noise response,
\[SNR_H \doteq \frac{\|H*X\|^2}{\mathbb E\|H*W\|^2}.\]
The goal of linear filtering is to estimate $X$ with a projection of $Y$,
\[\widehat X \doteq H*Y,\]
or at least raise SNR so we are in a position better suited to make an estimate. We can consider such an estimate to be `optimal' if we have chosen $H$ for which SNR is maximized. If $\widehat X$ is an unbiased estimator, then the SNR of the optimal estimator will be greater than the SNR of the raw measurement,
\[SNR_Y = \frac{\|X\|^2}{\mathbb E\|W\|^2}<\frac{\|X\|^2}{\mathbb E\|\widehat X_n-X_n\|^2}\doteq SNR_{\widehat X}\]
and clearly, the major obstacles will be in finding the optimal linear filter. Obviously, if $\mathbb EX_nW_m=0$ for all $m,n$, then it would make sense to take $H$ to be some function that is known to have non-zero inner-product with $X$ but is also known to be orthogonal to $W$. It might be difficult find (let alone to invert) such a filter. More importantly, one should notice that maximizing the $SNR_{\widehat X}$ is the same as minimizing mean-square error (MSE),

\[MSE(\widehat X) = \mathbb E\|X-\widehat X\|^2\approx \frac{1}{N}\sum_n|X_n-\widehat X_n|^2.\]

\subsection{Reed/Matched Filters}
The Reed filter looks for the linear filter $H$ that maps $X\mapsto \mathbb R$ with optimal SNR. Since $R$ is positive-definite, there is an invertible matrix $A$ such that
\[R= AA'\]
which we can use along with the Cauchy-Schwarz inequality to get the bound on SNR for the linear filter,
\[SNR_H = \frac{|H'X|^2}{H'RH}=\frac{|H'AA^{-1}X|^2}{H'AA'H}=\frac{|(A'H)'(A^{-1}X)|^2}{(A'H)'(A'H)}\]

\[\leq \frac{|(A'H)'(A'H)|\cdot |(A^{-1}X)'(A^{-1}X)|}{(A'H)'(A'H)}\]

\[=X'(AA')^{-1}X=X'R^{-1}X\]
where $H'$, $X'$ and $A'$ are the transpose of there respective matrix/vector. The linear filter that achieves this upper bound is
\[H^{reed} = R^{-1}X,\]
yielding an optimal estimate as
\[\widehat X^{reed} = \arg\max_x\left(\frac{x'H^{reed}}{(Y-x)'H^{reed}}\right),\]
but this will require a search over the signal domain. However, if we can parameterize the domain of the signal, it will be possible to compress our search into a simpler procedure that requires us to merely test the SNR of relatively few parameters. For instance, if we know a priori that $X$ will have a significant response with a only a few of the Fourier basis functions, we can reduce an algorithm's search-time simply by searching over the domain of a few Fourier coefficients.
\subsection{Fourier Transforms and Bandwidth Filters}
\label{sec:bandwidth}
Fourier transforms and fast-Fourier transform (FFT) algorithms can easily be used as linear filters. The Fourier basis functions are can be used for the spectral decomposition of periodic functions, but we can without loss of generality extend an observed finite vector $(Y_0,\dots,Y_{N-1})$ into a periodic function simply by concatenating a backwards copy. With periodicity in hand, we can use the FFT to filter-out frequencies which we have determined apriori to not be part of the signal. In other words, we apply an FFT to the observed data and then reconstruct the signal by only considering the inverse FFT of the coefficients that are within a bandwidth known apriori to be where the signal resides.

Let $Y^*$ denote the Fourier transform of $Y$, defined as
\[ Y_k^* = \sum_{n=0}^{N-1} e^{\frac{2\pi i}{N}kn}Y_n\]
and the inverse Fourier transform
\[Y_n =\frac{1}{N} \sum_{k=0}^{N-1} e^{-\frac{2\pi i}{N}kn} Y_k^*\]
which allows us to reconstruct the measurement. The central idea in a bandwidth filter is the notion that the signal lives in specific range of frequencies. For instance, from linearity we have
\[Y^* = X^*+W^*\]
and if we know that the support of $X$'s Fourier coefficients is contained in a set $K_0\subset \{0,\dots,N-1\}$ such that
\[X_n=\frac{1}{N}\sum_{k\in K_0} e^{-\frac{2\pi i}{N}kn} X_k^*,\]
then we can construct an estimate based on the pertinent bandwidth(s).
\[\widehat X = \frac{1}{N} \sum_{k\in K_0} e^{-\frac{2\pi i}{N}kn} Y_k^*.\]
\begin{figure}[htbp] 
   \centering
   \includegraphics[width=5in]{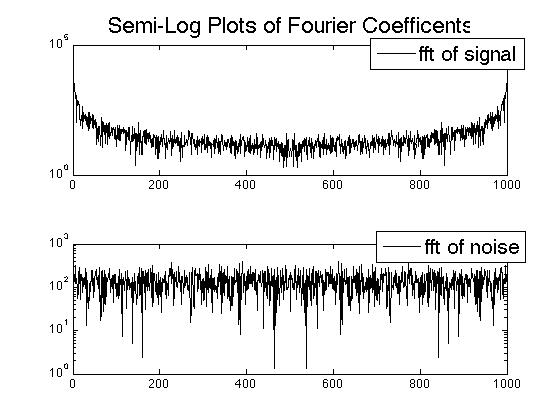} 
   \caption{\small The FFT of a random-walk considered to be the signal, and the FFT of an iid random variable that is considered to be the noise. The noise occupies the mid-level frequencies whereas the signal does not. Therefore, we construct and bandwidth filter simply by inverting the FFT without the mid-level coefficients.}
   \label{fig:bandwidths}
\end{figure}
\begin{figure}[htbp] 
   \centering
   \includegraphics[width=5in]{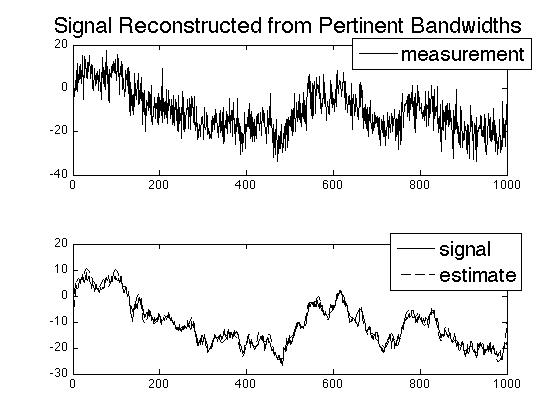} 
   \caption{\small The bandwidth filter where the mid-level frequencies are removed. The estimated signal is clearer than the raw measurement, and there is an increase is SNR, from $SNR_Y = \|X\|^2/\frac{1}{N}\sum_nW_n^2 = 7.53$, to $SNR_{\widehat X} = \|X\|^2/\frac{1}{N}\sum_n(X_n-\widehat X_n)^2=55.10$, and with $MSE = \frac{1}{N}\sum_n(X_n-\widehat X_n)^2=.0594$.}
   \label{fig:bandwidthFilter}
\end{figure}

For instance, suppose $X$ is a random-walk and $Y$ equals $X$ plus a considerable amount of noise,
\begin{eqnarray}
\nonumber
X_n&=&X_{n-1}+B_n\\
\nonumber
Y_n&=&X_n+\gamma W_n
\end{eqnarray}
where $ B_n$ and $W_n$ are independent white noises, and $\gamma>0$. We should try to identify the bandwidth(s) that contain the support of $X$'s Fourier coefficients simply by looking at the support of the FFT of a random-walk and comparing it to the FFT of noise. From figure \ref{fig:bandwidths} we see that the dominant Fourier coefficients of a random-walk are either in an extremely high or an extremely low bandwidth, whereas the Fourier coefficients of the noise are evenly distributed across all bandwidths. If we take $K_0$ to be the high and low frequencies that only contain noise, then the reconstructed signal will have a higher SNR,
\[SNR_Y=\frac{\|X^*\|^2}{\mathbb E\|W^*\|^2} =\frac{\|X^*\|^2}{\mathbb E\|Y^*-X^*\|^2} = \frac{\|X^*\|^2}{\sum_{k=0}^{n-1}\mathbb E|Y_k^*-X_k^*|^2} \]

\[= \frac{\|X^*\|^2}{\sum_{k\in K_0}\mathbb E|Y_k^*-X_k^*|^2+\sum_{k\notin K_0}\mathbb E|Y_k^*|^2}<\frac{\|X^*\|^2}{\sum_{k\in K_0}\mathbb E|Y_k^*-X_k^*|^2} \]

\[=\frac{\|X^*\|^2}{\sum_{k\in K_0}\mathbb E|\widehat X_k^*-X_k^*|^2} =\frac{\|X^*\|^2}{\mathbb E\|\widehat X^*-X^*\|^2}=SNR_{\widehat X}. \]
Indeed, as can be seen in figure \ref{fig:bandwidthFilter}, the signal becomes clearer as we eliminate the mid-level frequencies, and there is an increase in SNR from the 7.53 of the raw measurement, to 55.10 given by the bandwidth-filtered estimate, with a MSE=.0594.
\subsection{Wavelet Filters}
\label{sec:wavelets}
Wavelets are a tool that is useful in identifying the local behavior of a noise-corrupted signal. Measurements are often times contain adequate information for someone to decipher the underlying signal, usually because they can ignore noise and identify a movement in the measurement is caused by signal. This is precisely how a wavelet works: each wavelet represents a movement that the signal is capable of making, and any piece of the signal who's cross-product resonates with the wavelet is removed and placed in its respective spot as part of a noiseless reconstruction of the underlying signal.

A wavelet basis consists of a set of self-similar functions 
\[\psi_n^{k\ell} =2^{k/2}\psi_{2^{k}(n-\ell) }\]
for integers $k$ and $\ell$, where the unindexed function $\psi$ is the `mother-wavelet'. A useful wavelet has support that is small relative the length of the signal (e.g. $supp(\psi)\ll N$), and by construction should sum to zero and have norm 1,
\[\sum_{n\in supp(\psi)}\psi_n = 0,\quad\qquad\sum_{n\in supp(\psi)}|\psi_n|^2 = 1,\]
with $supp(\psi)$ denoting the support of the wavelet. The wavelets are indexed by $k$ and $\ell$ where $k$ is a dilation and $\ell$ is a translation. The indices of the wavelets are chosen to form an orthonormal basis,
\[\sum_{n=0}^{N-1}\psi_n^{k\ell}\psi_n^{k'\ell'}= \mathbf 1_{k=k'}\mathbf 1_{\ell=\ell'},\]
and like any other spectral method we can reconstruct a function from its wavelet transform,
\[Y _n= \sum_{k,\ell}\left<Y,\psi^{k\ell}\right>\psi_n^{k\ell}\]
for all $n\leq N-1$, with $\left<\cdot,\cdot\right>$ denoting inner-product. Not all wavelets can be used to form an orthonormal basis (e.g. the Mexican hat), but such wavelets should not be considered useless. Rather, a wavelet without an orthonormal basis simply requires that one use methods other than spectral decomposition.

For a signal processing problem, a particular wavelet is chosen for its generic resemblance to a local behavior of which the signal is capable. Essentially, we are taking a convolution of the function with wavelets of different thickness $k$, so that at each shift $\ell$ the local shape of the wavelet is given a chance to match itself to the input function. The wavelet-family that one uses will depend on the nature of the signal. Some possible wavelet to use in the construction of a discrete and orthogonal basis are symlets, coiflets, Daubechies, and Haar. The `mother wavelets' for symlet24, a Daubechies24, a coiflet5, and a Haar, are shown in figure \ref{fig:wavelets}. 
\begin{figure}[htbp] 
   \centering
   \includegraphics[width=5in]{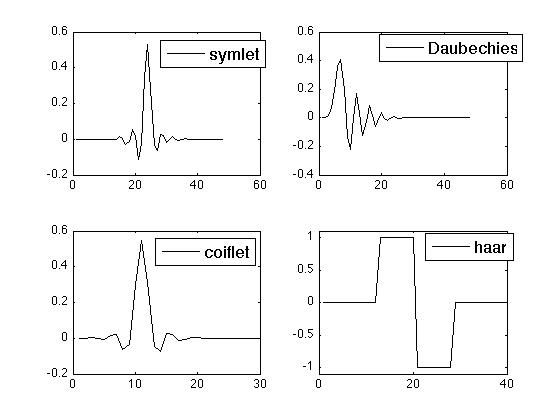} 
   \caption{\small Some examples of wavelets that can be used to construct an orthonormal basis.}
   \label{fig:wavelets}
\end{figure}
Symlet, coiflet and Daubechie wavelets can be defined with their respective degree of differentiability. For instance, a family of symlet-5 wavelets are generated by a mother-wavelet that has at least 6 derivatives and so three first 5 wavelet moments are vanishing,
\[\sum_{n\in supp(\psi)}\psi_nn^p=0\]
for $p=0,1,2,3,4,5$. In general, a wavelet that is $p+1$-times differentiability with fast enough decay in its tails will $p$-many vanishing moments.

When using wavelets to de-noise the random-walk example from section \ref{sec:bandwidth}, there are numerous choices to make such as which wavelets to use and at what parameter values will we be fitting noise and not the signal. Figure \ref{fig:denoising} shows the wavelets' ability to extract the signal from the noisy measurements, and table \ref{tab:waveletDenoising} shows how any of these wavelets does a better job de-noising the bandwidth filter from section \ref{sec:bandwidth}.
\begin{figure}[htbp] 
   \centering
   \includegraphics[width=5in]{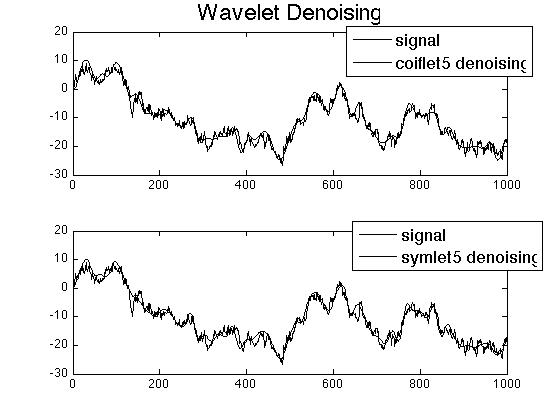} 
   \caption{\small The wavelets' de-noising of the measurement to uncover the signal is effective if we use the appropriate wavelet-family, and if we know what levels of the basis are not associated with the signal.}
   \label{fig:denoising}
\end{figure}

\begin{table}[!h!b!p]
\label{tab:waveletDenoising}
\caption{\textbf{Wavelet Denoising in Random-Walk Example.}}
\begin{center}
\begin{tabular}{c|c|c}
wavelet&SNR&MSE\\
\hline
coif2&64.90& 0.0549\\
sym2&57.89&0.0581\\
db2&57.89&0.0581\\
\hline
coif5&59.32&0.0573\\
sym5&58.38&0.0578\\
db5&57.78&0.0581\\
\hline
haar&56.09&0.0589
\end{tabular}
\end{center}
\end{table}


\section{Linear Gaussian Models}
A special case is when the impulse response is a linear model with Gaussian noise,
\[Y_n = X_n+W_n\]
where $R\doteq\mathbb EWW' $ is the variance/covariance matrix of a mean-zero Gaussian noise. We give ourselves a greater ability to infer the state of the signal simply by assuming that the noise is Gaussian. In the simplest case, just knowing the covariance properties of the system is enough to make a projection onto a basis of orthogonal basis, a projection that may even be a posterior expectation if the model can be shown to have a jointly-Gaussian structure. If we further assume that the signal evolves according to an independent Gaussian model we can apply a Kalman filter, which is extremely effective for tracking hidden Markov processes, particularly ones of multiple dimension.

\subsection{The Wiener Filter}
\label{sec:wiener}
Given the data $Y = (Y_0,\dots, Y_N)'$, the signal $X = (X_0,\dots, X_N)'$ combines with a noise $W = (W_0,\dots, W_N)'$ so that
\[Y= X+W\] 
where $\mathbb EW=0$, the covariance matrix of the noise is
\[R\doteq\mathbb EWW',\]
and the covariance matrix of $X$ is
\[Q \doteq  \mathbb E[(X-\mathbb EX)X'].\]
The information introduce by $Y$ can be encapsulated in \textbf{the innovation},
\[V \doteq Y-\mathbb EX,\]
and the optimal linear estimate of $X$ is its projection,
\[\mathcal P_YX  \doteq\mathbb EX + GV\]
where the matrix $G$ is defined apriori in such a way as to make the projection error orthogonal to the posterior information:

\[0=\mathbb E[(X-\mathcal P_YX)Y'] =\mathbb E\left[(X-\mathbb EX-GV)Y'\right] \]

\[=\mathbb E\left[(X-\mathbb EX)X'\right]+\mathbb EXW'-\mathbb E\left [G(Y-\mathbb EX)Y'\right] \]

\[=Q-G( Q+R)\qquad\qquad\qquad\qquad(*)\]
where we have assumed that $\mathbb EXW'=0$ because noise by construction should be independent of the signal. If we solve $(*)$ we get
\begin{equation}
\label{eq:wienerGain}
G = Q(Q+R)^{-1}
\end{equation}
which is the optimal projection matrix. The Wiener filter is essentially a linear projection using the matrix in (\ref{eq:wienerGain}). Notice that we have made minimal assumptions about the distributions of the random variables; all we have assumed is that we know the mean and covariance structure of $X$ and the driving noise in $Y$.

If we assume that $(X,Y)$ are jointly Gaussian, then any random variable with the same distribution as $X$ is equal in distribution to a random variable that is a linear sum of $Y$ and another Gaussian component that is independent of $Y$,
\[X =_d \mathbb EX + F_1(Y-\mathbb EX) + F_2Z \]
where $F_1$ and $F_2$ are non-random matrices of coefficients, and $Z$ is mean-zero Gaussian and independent of $Y$. With this representation we have
\[\mathbb E[X|Y] =_d \mathbb EX+F_1V.\]
Now, by independence of $Y$ and $Z$, we must have
\[0=\mathbb E\left[(X-\mathbb E[X|Y])Y'\right]\]
which leads to the solution $F_1=G$ where $G$ is the projection matrix given by (\ref{eq:wienerGain}). In general, the MSE is bounded below by that of the posterior mean,
\[\mathbb E\|X-\mathbb E[X|Y]\|^2\leq\mathbb E\|X-\mathcal P_YX\|^2.\]
But if $Y$ and $X$ are not jointly Gaussian, it can be shown that the MSE of the projection will be strictly greater than that of the posterior mean. In figure \ref{fig:wiener} the Wiener filter is used to track the random-walk example that was in section \ref{sec:bandwidth}. For the random-walk example, the matrices are

\[Q = \left[\begin{array}{cccccc}
1&1&1&1&\dots&1\\
1&2&2&2&\dots&2\\
1&2&3&3&\dots&3\\
1&2&3&4&\dots&4\\
\vdots&\vdots&\vdots&\vdots&\ddots&\vdots\\
1&2&3&4&\dots&N
\end{array}\right] ,\qquad\qquad R = I_{N\times N}\]
which are ill-conditioned, but the round-off error is not significant for $N=1000$. Indeed, the SNR and MSE of the Wiener filter is 85.79 and  .0479, both are better than the best results among the wavelet and bandwidth filters (the best was the coiflet with vanishing moments which had SNR = 64.90 and MSE =  0.0549). This example illustrates how the Wiener filter is the optimal among all posterior estimators.
\begin{figure}[htbp] 
   \centering
   \includegraphics[width=5in]{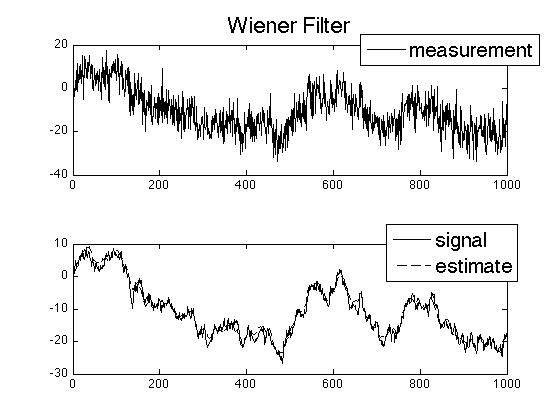} 
   \caption{\small The Wiener filter applied to a random-walk. The SNR = 85.69 and MSE = .0479. The SNR/MSE is higher/lower than it was for the bandwidth filter since the Wiener filter is the optimal posterior estimator.}
   \label{fig:wiener}
\end{figure}

\begin{remark} The resemblance of the Wiener filter to a penalized and weighted least-squares problem is clear from the first order conditions of the following minimization,
\[\min_X\left(\frac{1}{2}X'Q^{-1}X-X'(Q+R)^{-1}Y\right).\]
\end{remark}

\begin{remark} When $X = (X_0,X_2,\dots,X_{N-1})$ is a realization from a a jointly Gaussian HMM, the Wiener filter returns not only the posterior mean, but also the path of $X$ that is the maximum likelihood. In such cases, the estimator $\widehat X_n$ for $n<N-1$ can be consider a \textit{smoothing} rather than a filtering because it is an estimate of the state's past value,
\[\widehat X_n = \mathbb E[X_n|Y_{0:N-1}]\qquad\qquad\hbox{for }n<N-1.\]
\end{remark}

\begin{remark} The numerical linear algebra for computing the Wienfer filter requires no inversion of matrices, but only to solve two linear systems. Observe, $\widehat X$ is the solution to a linear system,
\[(Q+R)\underbrace{Q^{-1}\widehat X}_{=Z}=Y\]
so first we solve a linear system for $Z$
\[(Q+R)Z=Y\]
and then we solve for the filter,
\[\widehat X = QZ.\]

\end{remark}
\subsection{The Kalman Filter}
The Kalman filter can be thought of as a generalization of the Wiener filter, but for a model with a slightly more specific model for the signal. In fact, the Kalman filter is a filter for an HMM whose dynamics are Gaussian and fully linear,

\begin{eqnarray}
\nonumber
X_n &=& AX_{n-1}+B_n\\
\nonumber
Y_n&=&HX_n+W_n
\end{eqnarray}
where $W$ and $B$ are independent Gaussian random variables with covariance matrices
\begin{eqnarray}
\nonumber
Q &=&\mathbb EB_nB_n'\\
R &=&\mathbb EW_nW_n'
\end{eqnarray}
both of which are positive-definite, and the distribution of $(X_0,Y_0)$ is a joint Gaussian. We can re-write this two equations as one linear system,

\begin{equation}
\label{eq:linearSystem}
\left(\begin{array}{c}
X_n\\
Y_n
\end{array}\right)=\left(
\begin{array}{cc}
A&0\\
HA&0
\end{array}\right)\left(\begin{array}{c}
X_{n-1}\\
Y_{n-1}
\end{array}\right)+\left(
\begin{array}{cc}
\sqrt Q&0\\
H&\sqrt R
\end{array}\right)\left(\begin{array}{c}
B_n\\
W_n
\end{array}\right)
\end{equation}
which is clearly a non-degenerate Gaussian system. In fact (\ref{eq:linearSystem}) has a stationary mean if the number $1$ is not included in the spectrum of $A$.

Given the data $Y_{0:n} = (Y_0,\dots, Y_n)$, the Kalman filter will find the optimal projection of $X_n$ onto the Gaussian sub-space spanned by $Y$ by iteratively refining the optimal projection of $X_{n-1}$ onto the space spanned by $Y_{0:n-1}$. Furthermore, the optimal projection will be equivalent to the posterior mean because $(X,Y)$ are jointly Gaussian.

In this case we can identify a sequence of Gaussian random variables that are \textbf{the innovations}
\[V_n \doteq Y_n-HA\widehat X_{n-1},\]
but the idea is essentially the same as it was in the Wiener filter. Initially, letting $\widehat X_0 = \mathbb E[X_0|Y_0]$ we use a Wiener filter to get
\[\widehat X_0=\mathbb EX_0 + G_0(Y_0-H\mathbb EX_0)\]
where $G_0 = var(X_0)\left(R+var(X_0)\right)^{-1}$. Clearly, $X_0-\widehat X_0\perp Y_0$, and we can easily check that $(X_0,\widehat X_0,Y_0)$ is jointly Gaussian, and so it follows that $X_0-\widehat X_0$ is independent of $Y_0$, and so posterior covariance is not a random variable,
\[\Sigma_0\doteq \mathbb E\left[(X_0-\widehat X_0)X_0'\Big|Y_0\right]=\mathbb E(X_0-\widehat X_0)X_0'.\]
Now we proceed inductively to identify the filter of $X_n$ given $Y_{0:n}$. Suppose we have obtain the filter up to time $n-1$ with posterior mean
\[\widehat X_{n-1} \doteq \mathbb E[X_{n-1}|Y_{0:n-1}],\]
for which $X_{n-1}-\widehat X_{n-1}$ is independent of $Y_{0:n-1}$, and with covariance matrix 
\[\Sigma_{n-1}\doteq \mathbb E\left[(X_{n-1}-\widehat X_{n-1})X_{n-1}'\Big|Y_{0:n-1}\right]=\mathbb E\left[(X_{n-1}-\widehat X_{n-1})X_{n-1}'\right].\]

When the observation $Y_n$ arrives, the optimal projection will be
\[\mathcal P_nX_n \doteq A\widehat X_{n-1}+G_nV_n\]
where $G_n$ is a projection matrix that is known at time $n-1$; it's known before $Y_n$ has been observed. We can verify that
\begin{itemize}
\item the distribution of $(X_n,Y_n)$ conditioned on $Y_{0:n-1}$ is jointly Gaussian, and 
\item that $V_n$ is independent of $Y_{0:n-1}$, 
\end{itemize}
and therefore it follows that $\widehat X_n= \mathcal P_nX_n$. We can also write the following expression for the prediction covariance matrix,

\[\Sigma_{n|n-1}\doteq\mathbb E\left[(X_n - A\widehat X_{n-1})X_n'\Big|Y_{0:n-1}\right]\]

\[=\mathbb E\left[(AX_{n-1}+B_n - A\widehat X_{n-1})(AX_{n-1}+B_n)'\right]\]

\[=A\Sigma_{n-1}A'+Q\]
and from the orthogonality of the projection residual to the data, we should have a projection matrix that satisfies the following equation,
\[0=\mathbb E\left[(X_n-\widehat X_n)Y_n'\Big|Y_{0:n-1}\right]\]

\[=\mathbb E\left[(X_n - A\widehat X_{n-1})X_n'H'\Big|Y_{0:n-1}\right]-\mathbb E\left[G_n\left(Y_n-HA\widehat X_{n-1}\right)Y_n'\Big|Y_{0:n-1}\right]\]

\[=\Sigma_{n|n-1}H'-G_n\left(H\Sigma_{n|n-1}H'+R\right).\]
We solve this equation to obtain the optimal projection matrix, also known as \textbf{the Kalman filter Gain matrix}
\begin{equation}
\label{eq:kfGain}
G_n=\Sigma_{n|n-1}H'\left(H\Sigma_{n|n-1}H'+R\right)^{-1}
\end{equation}
and using the gain matrix we can write the posterior mean as a recursive function of the innovation and the previous time's posterior mean
\begin{equation}
\label{eq:kfMean}
\widehat X_n=A\widehat X_{n-1}+G_nV_n.
\end{equation}
Furthermore, we can verify that the conditional distribution of $(X_n,\widehat X_n,Y_n)$ is jointly Gaussian, and since $\mathbb E[(X_n-\widehat X_n)Y_m']=0$ for all $m\leq n$, it follows that $X_n-\widehat X_n$ is independent of $Y_{0:n}$. Therefore, the covariance matrix is not a function of the data
\[\Sigma_n=\mathbb E\left[(X_n - \widehat X_n)X_n'\right]=\mathbb E\left[(X_n - \widehat X_n)X_n'\Big|Y_{0:n-1}\right]\]

\[=\mathbb E\left[(X_n - A\widehat X_{n-1})X_n'\Big|Y_{0:n-1}\right]-G_n\mathbb E\left[(Y_n-HA\widehat X_{n-1})X_n'\Big|Y_{0:n-1}\right]\]

\[=A\Sigma_{n-1}A'+Q -G_nH\left(A\Sigma_{n-1}A'+Q\right)=(I-G_nH)\Sigma_{n|n-1}. \]

To summarize, we have shown that $X_n-\widehat X_n\sim N(0,\Sigma_n)$ and independent of $Y_{0:n}$, and from equations (\ref{eq:kfGain}), (\ref{eq:kfMean}) along with the equations for $\Sigma_{n|n-1}$ and $\Sigma_n$ we have \textbf{the Kalman filter} at time $n$
\begin{eqnarray}
\nonumber
\Sigma_{n|n-1}&=&A\Sigma_{n-1}A'+Q\\
\nonumber
G_n&=&\Sigma_{n|n-1}H'\left(H\Sigma_{n|n-1}H'+R\right)^{-1}\\
\nonumber
\widehat X_n&=&A\widehat X_{n-1}+G_nV_n\\
\nonumber
\Sigma_n&=&(I-G_nH)\Sigma_{n|n-1}
\end{eqnarray}
so that the posterior density of $X_n$ is

\[p(X_n\in dx|Y_{0:n}) = \frac{1}{(2\pi|\Sigma_n|)^{d/2}}\exp\left\{ -\frac{1}{2}(x-\widehat X_n)'\Sigma_n^{-1}(x-\widehat X_n)      \right\}\]
where $d$ is the dimension such that $X_n\in\mathbb R^d$. 

In figure \ref{fig:kalman} we see the Kalman filter's ability to track the same random-walk example on which we test the filters from sections \ref{sec:bandwidth}, \ref{sec:wavelets} and \ref{sec:wiener}. 
\begin{figure}[htbp] 
   \centering
   \includegraphics[width=5in]{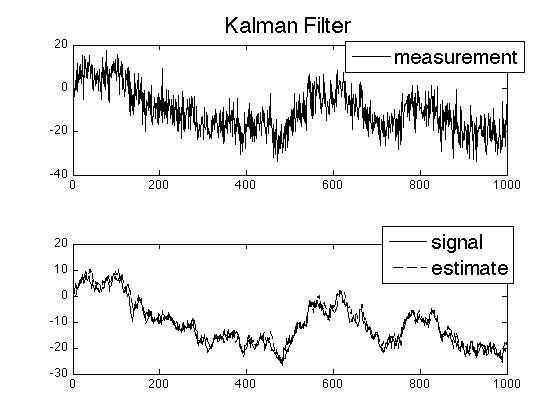} 
   \caption{\small The Kalman filter applied to the random-walk example. The path of Kalman filter estimates is not optimal as a whole, but each $\widehat X_n$ returned by the Kalman filter is optimal given the information $Y_{0:n}$.}
   \label{fig:kalman}
\end{figure}

The random-walk model is
\begin{eqnarray}
\nonumber
X_n&=&X_{n-1}+B_n\\
\nonumber
Y_n &= &X_n+\gamma W_n.
\end{eqnarray}
and Kalman filter for the random-walk is
\begin{eqnarray}
\nonumber
G_n&=& \left(\Sigma_{n-1}+1\right)\left(\Sigma_{n-1}+1+\gamma^2\right)^{-1}\\
\nonumber
\widehat X_n&=&\widehat X_{n-1}+G_n(Y_n-\widehat X_{n-1})\\
\nonumber
\Sigma_n&=&(1-G_n)\left(\Sigma_{n-1}+1\right).
\end{eqnarray}
Given $Y_{0:N}$, the Kalman filter returns an optimal estimator of $X_N$, not the entire path taken by $X$. Indeed, the Kalman filter's path has SNR =  43.88 and MSE = 0.0668, neither of which are better than the other filters. But this example should not be evidence for a dismissal of the Kalman filter, it simply shows that ex-ante estimation of the entire path of the signal is not its specialty.

The Kalman filter is far superior to the other filters we've discussed when it is applied to problems where $X_n$ is a multidimensional vector. When each observation is a vector, the curse of dimensionality makes it impossible to work with the basis' required for bandwidth and wavelets, and the size of the matrices needed for the Wiener filter also be prohibitively large. On the other hand, the Kalman filter works efficiently and in real-time.

\begin{remark} For HMMs, the Kalman filter and the Wiener filter coincide in their estimates of the latest value of the signal,
\[\widehat X_{N-1}^{wiener}=\mathbb E[X_{N-1}|Y_{0:N-1}] = \widehat X_{N-1}^{kalman}.\]
\end{remark}

\begin{remark} The Kalman filter is indeed capable of handling signals of high dimension, but there does not exist a general procedure for avoiding the explicit computation of the matrix inverse when computing the gain matrix. Sometimes this inverse may be manageable, but limitations in our ability to compute matrix inverse represent the upper-bound on the Kalman filter's capacity. 

\end{remark}

%

\chapter{The Baum-Welch \& Viterbi Algorithms}

\noindent Filtering equations for the class of fully-discrete HMMs are relatively simple to derive through Bayesian manipulation of the posteriors. These discrete algorithms are interesting because they embody the most powerful elements of HMM theory in a very simple framework. The methods are readily-implementable and have become the workhorse in applied areas where machine learning algorithms are needed. The algorithms for filtering, smoothing and parameter estimation are analogous to their counterparts in continuous models, but the theoretical background required for understanding is minimal in the discrete setting. 

\section{Equations for Filtering, Smoothing \& Prediction}
Let $n$ denote a discrete time, and suppose that $X_n$ is an unobserved Markov chain taking values in a discrete state-space denoted by $\mathcal S$. Let $\Lambda$ denote $X_n$'s kernel of transition probabilities so that
\[\mathbb P(X_{n+1}=x) = \sum_{v\in\mathcal S}\Lambda(x|v)\mathbb P(X_n=v)\]
for any $x\in\mathcal S$, and $\mathbb P(X_0=x) = p_0(x)$. 

Noisy measurements are taken in the form of a process $Y_n$ which is a nonlinear function of $X_n$, plus some noise,
\[Y_n = h(X_n)+W_n\]
where $W_n$ is an iid Gaussian random variable with mean zero and variance $\gamma^2>0$. The main feature of this discrete model is the \textbf{memoryless-channel} which allows the process to `forget the past':
\[\mathbb P(Y_n,X_n=x|X_{n-1}=v,Y_{0:n-1})=\mathbb P(Y_n|X_n=x)\Lambda(x|v)\]
for any $n\geq 0$ and for all $x,v\in\mathcal S$.

\subsection{Filtering}
The filtering mass function is 
\[\pi_n(x) \doteq \mathbb P(X_n=x|Y_{0:n})\]
for all $x\in\mathcal S$. Through an application of Bayes rule along with the properties of the HMM, we are able to break down $\pi_n$ as follows,
\[\pi_n(x) = \frac{\mathbb P(X_n=x,Y_{0:n})}{\mathbb P(Y_{0:n})}\]

\[= \frac{\sum_{v\in\mathcal S}\mathbb P(Y_n,X_n=x|X_{n-1}=v,Y_{0:n-1})\mathbb P(X_{n-1}=v,Y_{0:n-1})}{\mathbb P(Y_{0:n})}\]

\[= \frac{\mathbb P(Y_n|X_n=x)\sum_{v\in\mathcal S}\mathbb P(X_n=x|X_{n-1}=v)\mathbb P(X_{n-1}=v,Y_{0:n-1})}{\mathbb P(Y_{0:n})}\]

\[= \frac{\mathbb P(Y_n|X_n=x)\sum_{v\in\mathcal S}\Lambda(x|v)\mathbb P(X_{n-1}=v|Y_{0:n-1})}{\mathbb P(Y_n|Y_{0:n-1})}\]

\[= \frac{\mathbb P(Y_n|X_n=x)\sum_{v\in\mathcal S}\Lambda(x|v)\pi_{n-1}(v)}{\sum_{x\in\mathcal S}\hbox{numerator}}\]
where the memoryless-channel allows for the conditioning that occurs between the second and third lines. This recursive breakdown of the filtering mass is the \textbf{forward Baum-Welch Equation}, and can be written explicitly for the the system with Gaussian observation noise
\begin{equation}
\label{eq:fbw}
\pi_n(x) = \frac{1}{c_n}\psi_n(x)\sum_{v\in\mathcal S}\Lambda(x|v)\pi_{n-1}(v)
\end{equation}
where $c_n$ is a normalizing constant, and $\psi_n$ is a likelihood function
\[\psi_n(x)\doteq \mathbb P(Y_n|X_n=x) = \exp\left\{-\frac{1}{2}\left(\frac{Y_n-h(x)}{\gamma}\right)^2\right\}.\]
Equation (\ref{eq:fbw}) is convenient because it keeps the distribution updated without having to recompute old statistics as new data arrives. In `real-time' it is efficient to use this algorithm to keep track of $X$'s latest movements, but older filtering estimates will not be optimal after new data has arrived. The smoothing distribution must be used to find the optimal estimate of $X$ at some time in the past.
\subsection{Smoothing}
For some time $N>n$ up to which data has been collected, the smoothing mass function is
\[\pi_{n|N}(x) \doteq \mathbb P(X_n=x|Y_{0:N}).\]
Through an application of Bayes rule along with the properties of the model, the smoothing mass can be written as follows,

\[\pi_{n|N}(x) = \frac{\mathbb P(Y_{n+1:N}|X_n=x)\pi_n(x)}{\mathbb P(Y_{n+1:N}|Y_{0:n})}\]

\[= \frac{\sum_{v\in\mathcal S}\mathbb P(Y_{n+1:N}|X_{n+1}=v,X_n=x)\Lambda(v|x)\pi_n(x)}{\mathbb P(Y_{n+1:N}|Y_{0:n})}\]

\[= \frac{\sum_{v\in\mathcal S}\mathbb P(Y_{n+2:N}|X_{n+1}=v)\psi_{n+1}(v)\Lambda(v|x)\pi_n(x)}{\mathbb P(Y_{n+2:N}|Y_{0:n+1})\mathbb P(Y_{n+1}|Y_{0:n})}\]

\[= \frac{\sum_{v\in\mathcal S}\mathbb P(Y_{n+2:N}|X_{n+1}=v)\psi_{n+1}(v)\Lambda(v|x)\pi_n(x)}{\mathbb P(Y_{n+2:N}|Y_{0:n+1})c_{n+1}}.\qquad(*)\]
where $c_{n+1}$ is the normalizing constant from equation (\ref{eq:fbw}). Now suppose that we define a likelihood function for the events after time $n$, 
\[\alpha_n^N(x) = \frac{\mathbb P(Y_{n+1:N}|X_n=x)}{\mathbb P(Y_{n+1:N}|Y_{0:n})}\]
for $n<N$ with the convention that $\alpha_N^N\equiv 1$. Then the smoothing mass can be written as the product of the filtering mass with $\alpha$
\[\pi_{n|N}(x) = \alpha_n^N(x)\pi_n(x)\]
and from $(*)$ we can see that $\alpha_n^N$ is given recursively by a \textbf{backward Baum-Welch Equation}

\begin{equation}
\label{eq:bbw}
\alpha_n^N(x) = \frac{1}{c_{n+1}}\sum_{v\in\mathcal S}\alpha_{n+1}^N(v)\psi_{n+1}(v)\Lambda(v|x).
\end{equation}
Clearly, computation of the smoothing distribution requires a computation of all filtering distribution up to time $N$ followed by the backward recursion to compute $\alpha^N$. In exchange for doing this extra work, the sequence of $X$'s estimates will suggest a path taken by $X$ that is more plausible than the path suggested by the filtering estimates.
\subsection{Prediction}
The prediction distribution is easier to compute than smoothing. For $n<N$, the prediction distribution is
\[\pi_{N|n}(x) \doteq \mathbb P(X_N=x|Y_{0:n})\] 
and is merely computed by extrapolating the filtering distribution,

\[\pi_{N|n}(x)= \sum_{v\in\mathcal S} \Lambda(x|v)\pi_{N-1|n}(v) = \sum_{v\in\mathcal S} \Lambda^{N-n}(x|v)\pi_n(v)\]
where $\Lambda^{N-n}$ denotes the transition probability over $N-n$ time steps. 

If $X_n$ is a positive recurrent Markov chain, then there is an invariant and the prediction distribution will converge to as $N\rightarrow\infty$. In some cases, the rate at which this convergence occurs will be proportional to the spectral gap in $\Lambda$.

Suppose $X_n$ can take one of $m$-many finite-state, and is a recurrent Markov chain with only 1 communication class. Let $\Lambda\in\mathbb R^{m\times m}$ be the matrix of transition probabilities for $X$, and suppose that $\Lambda_{ji}>0$ so that
\[\mathbb P(X_{n+1}=x_i|X_n=x_j) = \Lambda_{ji}>0\]
for all $i,j\leq m$. Then the prediction distribution is
\[\pi_{N|n} = \pi_n\Lambda^{N-n}\]
and will converge exponentially fast to the invariant measure with a rate proportional to the second eigenvalue of $\Lambda$. To see why this is true, consider the basis of eigenvectors $(\mu_i)_{i\leq m}$ of $\Lambda$, some of which may be generalized,
\[\mu_{i+1}(\Lambda-\beta_iI)=\mu_i\]
for some $i\geq1$. Assuming that $\mu_1$ is the unique invariant mass function of $X_n$, we have $\mu_1\Lambda= \mu_1$. By the Perron-Frobenius Theorem we can sort the eigenvalues so that $1=\beta_1>|\beta_2|\geq|\beta_3|\geq \dots\geq|\beta_{m}|$, and we know that $\beta_1$ is a simple root of the characteristic polynomial and therefore $\mu_1$ is not a generalized eigenvector. From here we can see that

\[-\frac{1}{k}\log\|\pi_n\Lambda^k -\mu_1\|=-\frac{1}{k}\log\|(\pi_n-\mu_1) \Lambda^k \|=- \frac{1}{k}\log\|(a_1\mu_1+a_2\mu_2+\dots a_{m}\mu_{m})\Lambda^k\|\]

\[=- \frac{1}{k}\log\|a_1\mu_1+a_2\beta_2^k\mu_2+\dots a_{m}\mu_{m}\Lambda^k\|\sim \frac{1}{k}\log\left(1+a_2'|\beta_2^k|\right)\sim|\beta_2|\]
as $k\rightarrow\infty$. The spectral gap of $\Lambda$ is $1-|\beta_2|$, and from the convergence rate we see that a greater spectral gap means that the prediction distribution will take less time to converge to the invariant measure. In general, the Perron-Frobenius theorem can be applied to a recurrent finite-state Markov chain provided that there is some integer $k<\infty$ for which $\Lambda_{ji}^k>0$ for all $i,j\leq m$.

\section{Baum-Welch Algorithm for Learning Parameters}
It is not very realistic to assume that we have apriori knowledge of the HMM that is completely accurate. However, stationarity of $X$ means that we are observed repeated behavior of $X$, albeit through noisy measurements, but nevertheless we should be able to judge the frequencies with which $X$ occupies parts of the state-space and the frequencies with which it moves about.

If we have already computed the smoothing distribution based on a model that is `close' in some sense, then we should have
\begin{eqnarray}
\label{eq:muUpdate}
\frac{1}{N}\sum_{n=1}^N\mathbb P(X_n=x|Y_{0:N})&\approx &\mu(x)\\
\label{eq:lambdaUpdate}
\frac{1}{N}\sum_{n=1}^N\mathbb P(X_n=x,X_{n-1}=v|Y_{0:N})&\approx &\Lambda(x|v)\mu(v)
\end{eqnarray}
where $\mu$ is the stationary law of $X$. With the Baum-Welch algorithm, we can in fact employ some optimization techniques to find a sequence of model estimates which are of increasing likelihood, and it turns out that the (\ref{eq:muUpdate}) and (\ref{eq:lambdaUpdate}) are similar to the optimal improvement in selecting the sequence of models.

Consider two model parameters $\theta$ and $\theta'$. The Baum-Welch algorithm uses the Kullback-Leibler divergence to compare the two models,

\[0\leq D(\theta\|\theta') = \sum_{\vec x\in\mathcal S^{N+1}}\frac{\mathbb P^\theta(X_{0:N}=\vec x,Y_{0:N})}{\mathbb P^\theta(Y_{0:N})}\log\left(  \frac{\mathbb P^\theta(X_{0:N}=\vec x,Y_{0:N})\mathbb P^{\theta'}(Y_{0:N})}{\mathbb P^{\theta'}(X_{0:N}=\vec x,Y_{0:N})\mathbb P^\theta(Y_{0:N})}\right)\]

\[=\log\left(  \frac{\mathbb P^{\theta'}(Y_{0:N})}{\mathbb P^\theta(Y_{0:N})}\right)+\sum_{\vec x\in\mathcal S^{N+1}}\frac{\mathbb P^\theta(X_{0:N}=\vec x,Y_{0:N})}{\mathbb P^\theta(Y_{0:N})}\log\left(  \frac{\mathbb P^\theta(X_{0:N}=\vec x,Y_{0:N})}{\mathbb P^{\theta'}(X_{0:N}=\vec x,Y_{0:N})}\right).\]
If we set 
\[Q(\theta\|\theta') \doteq \sum_{\vec x\in\mathcal S^{N+1}}\mathbb P^\theta(X_{0:N}=\vec x,Y_{0:N})\log\left(  \mathbb P^{\theta'}(X_{0:N}=\vec x,Y_{0:N})\right),\]
we then have a simplified expression,
\[0\leq D(\theta\|\theta') = \log\left(  \frac{\mathbb P^{\theta'}(Y_{0:N})}{\mathbb P^\theta(Y_{0:N})}\right)+\frac{Q(\theta\|\theta)-Q(\theta\|\theta')}{\mathbb P^{\theta}(Y_{0:N})}\]
and rearranging the inequality we have
\[\frac{Q(\theta\|\theta')-Q(\theta\|\theta)}{\mathbb P^\theta(Y_{0:N})}\leq\log\left(  \frac{\mathbb P^{\theta'}(Y_{0:N})}{\mathbb P^\theta(Y_{0:N})}\right),\]
from which we see that $Q(\theta\|\theta')>Q(\theta\|\theta)$ implies that $\theta'$ has greater likelihood than $\theta$. The Baum-Welch algorithm uses this inequality as the basis for a criteria to iteratively refine the estimated model parameter. The algorithm obtains a sequence $\{\theta^\ell\}_\ell$ for which $Q(\theta^{\ell-1}\|\theta^\ell)\geq 0$, and so their likelihoods are increasing but bounded,
\[\mathbb P^{\theta^{\ell-1}}(Y_{0:N})\leq\mathbb P^{\theta^\ell}(Y_{0:N})\leq \mathbb P^{\hat\theta^{mle}}(Y_{0:N}),\]
where $\hat\theta^{mle}$ is the maximum likelihood estimate of $\theta$. Therefore, $\{\theta^\ell\}_\ell$ will have a limit at $\theta^*$ such that
\[\mathbb P^{\theta^*}(Y_{0:N}) = \lim_\ell\mathbb P^{\theta^\ell}(Y_{0:N}),\]
but it may be the case that $\mathbb P^{\theta^*}(Y_{0:N})<\mathbb P^{\hat\theta^{mle}}(Y_{0:N})$ (see figure \ref{fig:BWconvergence}).

\begin{figure}[htbp] 
   \centering
   \includegraphics[width=5.5in]{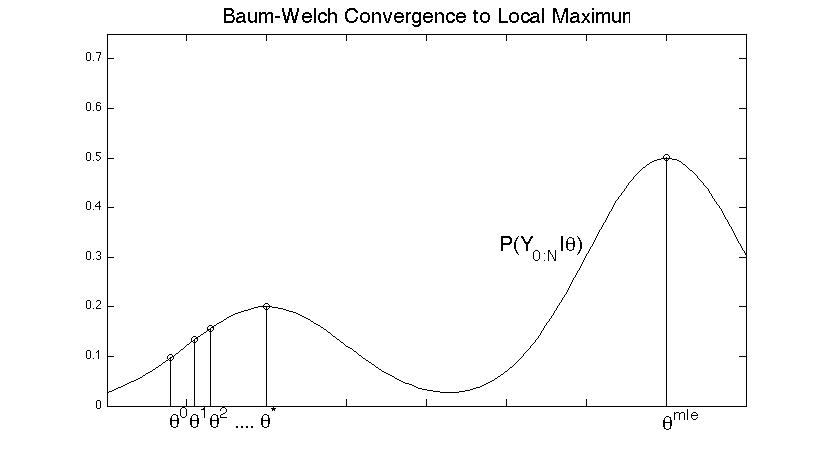} 
   \caption{\small A sequence of Baum-Welch parameter estimates with increasing likelihood, but the sequence is caught at a local maximum.}
   \label{fig:BWconvergence}
\end{figure}

In doing computations, a maximum (perhaps only a local maximum) of $Q(\theta\|~\cdot~)$ needs to be found. First-order conditions are good technique for finding one, and using the HMM we can expand $Q(\theta\|\theta')$ into an explicit form,
\begin{equation}
\label{eq:Qexpand}
Q(\theta\|\theta') = \sum_{\vec x\in\mathcal S^{N+1}}\mathbb P^\theta(X_{0:N}=\vec x,Y_{0:N})\left\{\psi_0^{\theta'}(\vec x_0)p_0^{\theta'}(\vec x_0)+\sum_{n=1}^N\log\left(  \psi_n^{\theta'}(\vec x_n)\Lambda^{\theta'}(\vec x_n|\vec x_{n-1})\right)\right\},
\end{equation}
from which we see that it is possible to differentiate with respect to $\theta'$, add the Lagrangians, and then solve for the optimal model estimate. 

The Baum-Welch algorithm is equivalent to the expectation-maximization (EM) algorithm; the EM algorithm maximizes the expectation of the log-likelihood function which is equivalent to maximizing $Q$,

\[\theta^\ell=\arg\max_\theta \mathbb E^{\theta^{\ell-1}}\left[\log\left(\mathbb P^\theta(Y_{0:N},X_{0:N})\right)\Big|Y_{0:N}\right]=\arg\max_\theta Q(\theta^{\ell-1}\|\theta).\]
\subsection{Model Re-Estimation for Parametric Transition Probabilities}
Suppose that $X_n\in\mathbb Z$, with transition probabilities parameterized by $\theta\in(0,\infty)$ so that
\[\mathbb P(X_{n+1}=i|X_n=j) =\frac{1}{c(\theta)} \exp\{-\theta|i-j|^2\},\qquad\forall i,j\in\mathbb Z,\]
where $c(\theta)=\sum_{i=-\infty}^\infty\exp\{-\theta|i-j|^2\}$. Ignoring the parts that do not depend on $\theta'$, the log-likelihood is
\[Q(\theta\|\theta') =-\sum_{n=1}^N\mathbb E^\theta\left[ \theta'|X_n-X_{n-1}|^2+\log c(\theta')\Big|\mathcal F_N^Y\right],\]
and if we differentiate with respect to $\theta'$ we have the following first-order conditions,
\[\frac{\partial}{\partial \theta'} Q(\theta\|\theta') = -\sum_{n=1}^N\mathbb E^\theta\left[ |X_n-X_{n-1}|^2-\frac{\sum_i|i-j|^2\exp\{-\theta|i-j|^2\}}{ c(\theta')}\Bigg|\mathcal F_N^Y\right]=0\]
for any $j\in\mathbb Z$. The solution to the first-order conditions is $\theta'$ that satisfies

\[\mathbb E^{\theta'}\left[|X_1-X_0|^2\Big|X_0=j\right]=\frac{1}{N}\sum_{n=1}^N\mathbb E^\theta\left[ |X_n-X_{n-1}|^2\Big|\mathcal F_N^Y\right]\]
for any $j$.
\subsection{Model Re-Estimation for Finite-State Markov Chains}
Suppose $X_n\in\mathcal S=\{1,\dots,m\}$, so that
\[\mathbb P(X_{n+1}=i|X_n=j) = \Lambda_{ji}\]
for all $i,j\in\mathcal S$. We will look for a sequence $\Lambda^{(\ell)}$ which maximizes $Q(\Lambda^{(\ell-1)}\|~\cdot~)$ subject to the constraints $\sum_i\Lambda_{ji}=1$ for all $j\leq m$. Letting $\delta_j$ be the Lagrange multiplier for the $j$th constraint, the first order conditions are then,

\[\frac{\partial}{\partial \Lambda_{ji}}\left(Q(\Lambda^{(\ell-1)}\|\Lambda)-\delta_j\sum_r\Lambda_{jr}\right)=\frac{\partial}{\partial \Lambda_{ji}}Q(\Lambda^{(\ell-1)}\|\Lambda)-\delta_j=0.\qquad\qquad(**)\]
Multiplying by $\Lambda_{ji}$ and summing over $i$ the expression in $(**)$ becomes

\[0=\sum_i\Lambda_{ji}\left( \frac{\partial}{\partial \Lambda_{ji}}Q(\Lambda^{(\ell-1)}\|\Lambda)-\delta_j\right)=\sum_i\Lambda_{ji}\frac{\partial}{\partial \Lambda_{ji}}Q(\Lambda^{(\ell-1)}\|\Lambda)-\delta_j\]
which means $\delta_j = \sum_i\Lambda_{ji}\frac{\partial}{\partial \Lambda_{ji}}Q(\Lambda^{(\ell-1)}\|\Lambda)$. By multiplying $(**)$ by $\Lambda_{ji}$ and then rearranging terms it is found that the optimal $\Lambda_{ji}^{(\ell)}$ must be chosen among the set of $\Lambda$'s such that

\begin{equation}
\label{eq:foc}
\Lambda_{ji} =\frac{\Lambda_{ji} \frac{\partial}{\partial \Lambda_{ji}}Q(\Lambda^{(\ell-1)}\|\Lambda)}{\sum_r\Lambda_{jr}\frac{\partial}{\partial \Lambda_{jr}}Q(\Lambda^{(\ell-1)}\|\Lambda)}.
\end{equation}
Now, using the expansion in (\ref{eq:Qexpand}), the derivative of $Q(\Lambda^{(\ell-1)}\|\Lambda)$ with respect to $\Lambda_{ji}$ can be computed as follows:
 \[\frac{\partial}{\partial\Lambda_{ji}}Q(\Lambda^{(\ell-1)}\|\Lambda)=\mathbb E\left[\frac{\partial}{\partial\Lambda_{ji}}\log \mathbb P(Y_{0:N},X_{0:N}|\Lambda)\Big|Y_{0:N},\Lambda^{(\ell-1)}\right]\]
 \[=\mathbb E\left[\sum_{n=1}^N\frac{1}{\Lambda_{ji}}\mathbf 1_{\{X_n=i,X_{n-1}=j\}}\Bigg|Y_{0:N},\Lambda^{(\ell-1)}\right]=\frac{1}{\Lambda_{ji}}\sum_{k=1}^N\mathbb P(X_n=i,X_{n-1}=j|Y_{0:N},\Lambda^{(\ell-1)})\]
and by plugging this into equation (\ref{eq:foc}) it is easily seen that the solution is
\begin{equation}\label{eq:emSolution}
\Lambda_{ji}^{(\ell)} = \frac{\sum_{n=1}^N\mathbb P(X_n=i,X_{n-1}=j|Y_{0:N},\Lambda^{(\ell-1)})}{\sum_i\hbox{numerator}}\end{equation}
where $\mathbb P(X_n=i,x_{n-1}=j|Y_{0:N},\Lambda^{(\ell-1)})=\alpha_n^N(i)\psi_n(i)\Lambda_{ji}^{(\ell-1)}\pi_{n-1}(j)$. It also happens that equation (\ref{eq:emSolution}) enforces non-negativity of $\Lambda_{ji}$, which is required for well-posedness of the algorithm. Equation (\ref{eq:emSolution}) is equivalent to the estimates that were conjectured in (\ref{eq:muUpdate}) and (\ref{eq:lambdaUpdate}).

\section{The Viterbi Algorithm}
Sometimes it may be more important to estimate the entire path of $X$. The Viterbi algorithm applies the properties of the HMM along with dynamic programming to find an optimal sequence $\widehat V_{0:N}\in\mathcal S^{N+1}$ that maximizes the joint-posterior probability
\[\widehat V_{0:N} =(\widehat V_0,\dots,\widehat V_N)\doteq \arg\max_{\vec x\in\mathcal S^{N+1}}\mathbb P(X_{0:N} = \vec x, Y_{0:N}).\]
Given the data $Y_{0:N}$, smoothing can be used to `look-back' and make estimates of $X_n$ for some $n<N$, but neither equations (\ref{eq:fbw}) or (\ref{eq:bbw}) is a joint posterior, meaning that they will not be able to tells us the posterior probability of a path $\vec x\in\mathcal S^{N+1}$. The size of our problem would grow exponentially with $N$ if we needed to compute the posterior distribution of $X's$ paths, but the Viterbi algorithm allows us to obtain the MAP estimator of $X$'s path with without actually calculating the posterior probabilities of all paths.

The memoryless channel of the HMM allows us to write the maximization over paths as a nested maximization,
\[\max_{\vec x\in\mathcal S^{N+1}}\mathbb P(X_{0:N} = \vec x,Y_{0:N}) =\max_{v\in\mathcal S}\psi_N(v)\max_{\vec x\in\mathcal S^N}\Lambda(v|\vec x_{N-1})\mathbb P(X_{0:N-1} = \vec x, Y_{0:N-1})\]

\[=\psi_N(\widehat V_N)\max_{\vec x\in\mathcal S^N}\Lambda(\widehat V_N|\vec x_{N-1})\mathbb P(X_{0:N-1} = \vec x,Y_{0:N-1}),\qquad(\dagger)\]
where $\psi$ is the likelihood and $c_N$ is the normalizing constant, both from the forward Baum-Welch equation in (\ref{eq:fbw}). To take advantage of this nested structure, it helps to define the following recursive function,
\begin{eqnarray}
\nonumber
\phi_0(v)&\doteq&\psi_0(v)\mathbb P(X_0=v)\\
\nonumber
\phi_n(v)& \doteq& \psi_n(v)\max_x\Lambda(v|x)\phi_{n-1}(x),\qquad\hbox{for }n=1,2,3,\dots,N.
\end{eqnarray}
We then place $\phi$ is the nested structure of $(\dagger)$ and work backwards to obtain the optimal path,
\begin{eqnarray}
\nonumber
\widehat V_N&=&\arg\max_v\phi_N(v)\\
\nonumber
\widehat V_n&=&\arg\max_v\Lambda(\widehat V_{n+1}|v)\phi_n(v),\qquad\hbox{for }n=N-1,N-2,\dots,2,1,0
\end{eqnarray}
thus obtaining the optimal path in $O(N)$-many computations. It would have taken $O\left(|\mathcal S|^N\right)$-many computations to obtain the posterior distribution of the paths.

We are interested in the Viterbi algorithm mainly because the path of estimates returned by the filtering and smoothing may 

\begin{remark} The unnormalized probabilities in $\phi$ quickly fall below machine precision levels, so it is better to consider a logarithmic version of Viterbi,
\begin{eqnarray}
\nonumber
\log\phi_0(v)&=&\log\psi_0(v)+\log\mathbb P(X_0=v)\\
\nonumber
\log\phi_n(v)&=&\log\psi_n(v)+\max_x\left\{\log\Lambda(v|x)+\log\phi_{n-1}(x)\right\}
\end{eqnarray}
and the use the $\log\phi_n$'s in the dynamic programming step,
\begin{eqnarray}
\nonumber
\widehat V_N&=&\arg\max_v\log\phi_N(v)\\
\nonumber
\widehat V_n&=&\arg\max_v\left\{\log\Lambda(\widehat V_{n+1}|v)+\log\phi_n(v)\right\},\qquad\hbox{for }n=N-1,N-2,\dots,2,1,0.
\end{eqnarray}

\end{remark}

\chapter{The Particle Filter}

\noindent Monte Carlo methods have become the most common way to compute quantities from HMMs --and with good reason; they are in fact a fast and effective way to obtain consistent estimates. In particular, the particle filter is used to approximate filtering expectations. There are similar methods that exploit Bayes formula in obtaining samples from an HMM, but `particle filtering' implies that sequential Monte Carlo (SIS) and Sampling-Importance-Resampling (SIR) are applied to the specified HMM.

\section{The Particle Filter}
Suppose that $X_n$ is an unobserved Markov chain taking values in a state-space denoted by $\mathcal S$. Let $\Lambda$ denote $X_n$'s kernel of transition densities so that
\[\frac{d}{dx}\mathbb P(X_{n+1}\leq x) = \int\Lambda(x|v)\mathbb P(X_n\in dv)\]
for any $x\in\mathcal S$, and $\frac{d}{dx}\mathbb P(X_0\leq x) = p_0(x)$. Let the observed process $Y_n$ be a nonlinear function of $X_n$,
\[Y_n = h(X_n)+W_n\]
where $W_n$ is an iid Gaussian random variable with mean zero and variance $\gamma^2>0$. In this case, the filter is easily shown to be a density function, given recursively as,
\[\pi_{n+1}(x) = \frac{1}{c_{n+1}}\psi_{n+1}(x)\int\Lambda(x|v)\pi_n(v)\]
where $c_{n+1}$ is a normalizing constant, and $\psi_{n+1}$ is a likelihood function
\[\psi_{n+1}(x) = \exp\left\{-\frac{1}{2}\left(\frac{Y_{n+1}-h(x)}{\gamma}\right)^2\right\},\]
but some kind quadrature grid would need to be established over $\mathcal S$ if we were to use this recursive expression. An alternative is to use particles.

\subsection{Sequential Importance Sampling (SIS)}
Ideally, we would be able to sample directly from the filtering distribution to obtain a Monte Carlo estimate,
\[\frac{1}{P}\sum_{\ell=1}^Pg(x_n^\ell)\approx \mathbb E[g(X_n)|\mathcal F_n^Y],\qquad\hbox{for $P$ large,}\]
where $x_n^\ell\sim iid~ \pi_n(x)$. However, difficulties in computing $\pi_n$ also make it difficult to obtain samples. However, with relative ease we can sequentially obtain samples from $X's$ unconditional distribution and then assign them weights in such a way that approximates the filter.

For $\ell=1,2,3,4,\dots $ , each particle is a path $x_{0:N}^\ell$ that is generated according to the unconditional distribution,
\begin{eqnarray}
\nonumber
x_0^\ell&\sim& p_0(~\cdot~)\\
\nonumber
x_n^\ell&\sim&\Lambda(~\cdot~|x_{n-1}^\ell)\qquad\hbox{for }n=1,2,3,\dots,N.
\end{eqnarray}
Then for $P$-many particles and any integrable function $g$, the strong law of large numbers tells us that
\[\frac{1}{P}\sum_{\ell=1}^Pg(x_{0:N}^\ell)\rightarrow \mathbb E[g(X_{0:N})]\]
almost surely as $P\rightarrow\infty$.

Given $Y_{0:n}$, let $\omega_n^\ell$ denote the \textit{importance weight} of a particle. We define $\omega_n^\ell$ to proportional to the likelihood of the $\ell$th particle's path, which we can write recursively as the product of its old weight and a likelihood function:
\[\omega_n^\ell=\frac{1}{c_n}\mathbb P(Y_{0:n}|X_{0:n}=x_{0:n}^\ell) =\frac{1}{c_n}\psi_n(x_n^\ell)\omega_{n-1}^\ell\qquad\hbox{for }n=0,1,2,3,.....,N\]
with the convention that $\omega_{-1}^\ell\equiv 1$, and $c_n$ is a normalizing constant 
\[c_n = \sum_{\ell=1}^P\psi_n(x_n^\ell)\omega_{n-1}^\ell.\]
Then the filtering expectation of an integrable function $g(X_N)$ can be consistently approximated with the weighted particles,
\[\sum_{\ell=1}^Pg(x_N^\ell)\omega_N^\ell=\frac{\frac{1}{P}\sum_{\ell=1}^Pg(x_N^\ell)\mathbb P(Y_{0:N}|X_{0:N}=x_{0:N}^\ell)}{\frac{1}{P}\sum_{\ell=1}^P\mathbb P(Y_{0:N}|X_{0:N}=x_{0:N}^\ell)}\]

\[\rightarrow \frac{\mathbb E\left[g(\tilde X_N)\mathbb P(Y_{0:N}|\tilde X_{0:N})\Bigg|\mathcal F_N^Y\right]}{\mathbb E\left[\mathbb P(Y_{0:N}|\tilde X_{0:N})\Bigg|\mathcal F_N^Y\right]}=\mathbb E[g(X_N)|\mathcal F_N^Y]\]
almost surely as $P\rightarrow\infty$ by SLLN, where $\tilde X_{0:N}$ is a random variable with distribution $(p_0,\Lambda)$ and independent from $(X_{0:N},Y_{0:N})$.
\subsection{Sampling Importance Resampling (SIR)}
Our estimation of $\mathbb E[g(X_N)|\mathcal F_N^Y]$ becomes a particle filter when SIR is used along with SIS. SIR essentially invokes a bootstrap on the samples $\{x_{0:n}^\ell\}_\ell$ at time $n$. This procedure will reallocate our sampling resources onto particles that are more likely to be close to the true signal. When invoked, SIR does the following:
\begin{algorithm}
\textbf{SIR Bootstrap Procedure.}
\begin{algorithmic}
\FOR{$\ell =1,\dots P$}
\STATE sample a random variable $x_n^{\ell,sir}$ from $\{x_n^1,\dots,x_n^P\}$ according to $\{\omega_n^1,\dots,\omega_n^P\}$.
\ENDFOR
\STATE $\{x_n^1,\dots,x_n^P\}\gets\{ x_n^{1,sir},\dots,x_n^{P,sir}\}$.
\STATE $\{\omega_n^1,\dots,\omega_n^P\}\gets\{1/P,\dots,1/P\}$.
\end{algorithmic}
\end{algorithm}
The common criterion for invoking SIR can be related to an entropy approximation of the particle distribution. At any time $n$ prior to when SIR has been performed, the entropy is defined as
\[\mathcal E_n =-\sum_\ell\omega_n^\ell\log\omega_n^\ell\geq-\log\left(\sum_\ell (\omega_n^\ell)^2\right)>0\]
and so maximizing the entropy of the particle distribution is approximately the same as minimizing the sum of squared posterior weights. Therefore, the criterion is to invoke SIR whenever the number of important particles is less than some threshold $\delta\in[1,P]$:
\[\hbox{if }\qquad\frac{1}{\sum_\ell (\omega_n^\ell)^2}\leq\delta,\qquad\hbox{then invoke SIR}.\]

Even after SIR has been incorporated, our approximation is still consistent with the nonlinear filter:
\begin{theorem}
\label{thm:sirConvergence} For any bounded function $g(x)$,
\[\frac{1}{P}\sum_{\ell=1}^Pg(x_n^{\ell,sir})\rightarrow \mathbb E[g(X_n)|\mathcal F_n^Y]\]
in $L^2$ as $P\rightarrow\infty$, and $x_n^{\ell,sir}$ and $x_n^{\ell',sir}$ are asymptotically independent for any $\ell\neq \ell'$.
\end{theorem}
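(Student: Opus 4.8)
The plan is to proceed by induction on the time index $n$, propagating an $L^2$ error bound of the form $\mathbb E\big[\big(\frac1P\sum_\ell g(x_n^{\ell,sir})-\mathbb E[g(X_n)|\mathcal F_n^Y]\big)^2\big]\le C_n\|g\|_\infty^2/P$ through the three elementary operations that make up one step of the filter: resampling, mutation through $\Lambda$, and weighting by $\psi_n$. It is convenient to carry two empirical measures at each stage---the weighted measure $\sum_\ell\omega_n^\ell\delta_{x_n^\ell}$ coming out of the SIS step and the uniformly weighted measure $\frac1P\sum_\ell\delta_{x_n^{\ell,sir}}$ coming out of SIR---and to show that both converge in $L^2$ to $\pi_n[g]=\mathbb E[g(X_n)|\mathcal F_n^Y]$. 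The base case $n=0$ is the iid importance-sampling estimate already discussed in the SIS subsection; since $\psi_0\le1$ and $g$ is bounded, the weighted average is a ratio of two ordinary sample means of bounded random variables, and an elementary variance computation gives the $L^2$ rate directly.

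For the inductive step I would first handle mutation and weighting. Conditioning on the $\sigma$-algebra $\mathcal G_{n-1}$ generated by the resampled particles $\{x_{n-1}^{\ell,sir}\}$ and by $\mathcal F_n^Y$, the propagated points $x_n^\ell\sim\Lambda(\cdot\,|x_{n-1}^{\ell,sir})$ are conditionally independent, so $\frac1P\sum_\ell\psi_n(x_n^\ell)g(x_n^\ell)$ and $\frac1P\sum_\ell\psi_n(x_n^\ell)$ each concentrate, conditionally, around $\frac1P\sum_\ell\int\psi_n(x)g(x)\Lambda(dx|x_{n-1}^{\ell,sir})$ and the analogous quantity with $g\equiv1$, with conditional variance $O(1/P)$. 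Applying the inductive hypothesis to the bounded functions $x\mapsto\int\psi_n g\,\Lambda(\cdot|x)$ and $x\mapsto\int\psi_n\,\Lambda(\cdot|x)$, these averages converge in $L^2$ to $\int\psi_n g\,(\Lambda\pi_{n-1})$ and $\int\psi_n\,(\Lambda\pi_{n-1})$, whose ratio is exactly $\pi_n[g]$ by the Bayes recursion stated in this section. This shows the weighted SIS estimate converges in $L^2$.

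Next I would absorb the resampling step. Conditioning on the pre-resampling system, each $x_n^{\ell,sir}$ is an independent draw from $\sum_j\omega_n^j\delta_{x_n^j}$, so resampling is conditionally unbiased, $\mathbb E\big[\frac1P\sum_\ell g(x_n^{\ell,sir})\,\big|\,\text{pre-SIR}\big]=\sum_j\omega_n^j g(x_n^j)$, with conditional variance at most $\|g\|_\infty^2/P$. Writing $\frac1P\sum_\ell g(x_n^{\ell,sir})-\pi_n[g]$ as the sum of this conditionally mean-zero resampling fluctuation and the weighted-estimate error, the two terms are orthogonal in $L^2$ after conditioning, so the tower property and the triangle inequality combine the $O(1/P)$ resampling variance with the inductive $O(1/P)$ bound, closing the induction.

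Finally, for asymptotic independence I would note that, given the pre-resampling system, $x_n^{\ell,sir}$ and $x_n^{\ell',sir}$ with $\ell\ne\ell'$ are independent draws from the common empirical measure $\sum_j\omega_n^j\delta_{x_n^j}$, so their conditional covariance vanishes; any unconditional dependence is carried entirely by the shared randomness of that measure, which by the $L^2$ convergence just established concentrates on the deterministic filter $\pi_n$. Hence $\mathbb E[\phi(x_n^{\ell,sir})\varphi(x_n^{\ell',sir})]-\mathbb E[\phi(x_n^{\ell,sir})]\mathbb E[\varphi(x_n^{\ell',sir})]\to0$ for bounded $\phi,\varphi$. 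The main obstacle throughout is the normalization: the weighted estimate is a ratio whose denominator $\frac1P\sum_\ell\psi_n(x_n^\ell)$ is random, so passing from $L^2$ control of numerator and denominator to $L^2$ control of the quotient requires knowing the limiting denominator $\int\psi_n\,(\Lambda\pi_{n-1})$ is bounded away from zero and exploiting the boundedness of $g$ and of $\psi_n$ (here $\psi_n\le1$) to keep the ratio and its fluctuations uniformly controlled.
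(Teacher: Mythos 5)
Your proposal is correct in outline, but it is a genuinely different argument from the one the paper uses. The paper (following Capp\'e et al.) avoids induction over time entirely: it exploits the fact that the pre-SIR particles $\{x_n^\ell\}$ are iid \emph{path} samples from the unconditional law, so the weighted SIS estimate $\sum_\ell g(x_n^\ell)\omega_n^\ell$ converges almost surely by the ordinary SLLN (as established in the SIS subsection); the resampling step is then absorbed by a single conditioning on $\mathcal F_n^Y\vee\{x_n^r\}_{r\le P}$, under which the SIR estimator has conditional mean equal to the SIS estimate, and dominated convergence upgrades this to $L^2$ (the paper leaves the $O(\|g\|_\infty^2/P)$ conditional resampling variance implicit, whereas you state the orthogonal decomposition explicitly --- a point in your favor). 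Your induction-over-time scheme, propagating an $O(\|g\|_\infty^2/P)$ bound through mutation, weighting and resampling, is the standard Del Moral--Crisan--Doucet treatment of the bootstrap filter. What it buys: an explicit $O(1/P)$ rate, and robustness to repeated resampling --- note that the paper's argument, as written, needs the pre-SIR particles at time $n$ to be iid paths with full path-likelihood weights, which is no longer literally true once SIR has been invoked at an earlier time, so your induction covers exactly the situation the paper's proof glosses over. What it costs: the normalization issue you correctly flag (the random denominator must be handled conditionally on $\mathcal F_n^Y$, where it is a positive constant), and one gap you should patch: on steps where SIR is \emph{not} invoked, your conditional-variance bound involves $\sum_\ell(\omega_{n-1}^\ell)^2$ rather than $1/P$, so the clean $C_n/P$ propagation requires either resampling at every step or invoking the ESS criterion (with threshold $\delta$ proportional to $P$) to bound that sum. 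Finally, your asymptotic-independence argument coincides with the paper's, with one small correction: the limiting measure $\pi_n$ is $\mathcal F_n^Y$-measurable rather than deterministic, so the factorization of moments should be stated conditionally on $\mathcal F_n^Y$, exactly as the paper does.
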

\begin{proof} (taken from section 9.2 of \cite{cappe2005}) Let $\{x_n^{\ell}\}_{\ell\leq P}$ be the set SIS samples that were in use prior to SIR. The post-SIR estimator can be written as a sum of the old samples:
$$ \frac{1}{P}\sum_\ell g\left(x_n^{\ell,sir}\right)=\frac{1}{P}\sum_\ell g\left(x_n^{\ell}\right)\cdot\tau_\ell$$
where $\tau_\ell$ is the number of times the $x_n^\ell$ was resampled, $\tau_\ell=\sum_{r=1}^P \mathbf{1}_{\{x_n^{r,sir} =x_n^{\ell}\}}$. Taking conditional expectations, we have $E[\tau_{\ell}|\mathcal F_n^Y\vee\{x_n^r\}_{r\leq P}]=P\cdot\omega_n^{\ell}$ and the conditional expectation of the estimator is
$$E\left[ \frac{1}{P}\sum_\ell g\left(x_n^{\ell,sir}\right)\bigg |\mathcal F_n^Y\vee\{x_n^{r}\}_{r\leq P}\right]=\frac{1}{P}\sum_{\ell}g\left(x_n^{\ell}\right)E[\tau_\ell|\mathcal F_n^Y\vee\{x_n^{r}\}_{r\leq P}]$$

$$=\sum_{\ell}g\left(x_n^{\ell}\right)\omega_n^\ell \stackrel{a.s}{\longrightarrow}\mathbb E\left[g(X_n)\bigg|\mathcal F_n^Y\right],~~~~~~~~~~~~~~~~\hbox{as }P\rightarrow\infty.$$
From here we take expectations of both sides and use dominated convergence to equate the limit to show $L^2$ convergence,
$$E \left|\frac{1}{P}\sum_{\ell=1}^Pg(x_n^{\ell,sir})-\mathbb E\left[g(X_n)\bigg|\mathcal F_n^Y\right]\right|^2\rightarrow 0$$
as $P\rightarrow\infty$. \\

Now consider another bounded function $f(x)$,
$$\mathbb E\left[g\left(x_n^{\ell,sir}\right)f\left( x_n^{\ell',sir}\right)\Big|\mathcal F_n^Y\right]=\mathbb E\left[\mathbb E\left[g\left(x_n^{\ell,sir}\right)f\left( x_n^{\ell',sir}\right)\Big|\mathcal F_n^Y\vee\{x_n^r\}_{r\leq P}\right]\Big|\mathcal F_n^Y\right]$$

$$=\mathbb E\left[\mathbb E\left[g\left( x_n^{\ell,sir}\right)\Big|\mathcal F_n^Y\vee\{x_n^r\}_{r\leq P}\right]\mathbb E\left[f\left( x_n^{\ell',sir}\right)\Big|\mathcal F_n^Y\vee\{x_n^r\}_{r\leq P}\right]\Big|\mathcal F_n^Y\right]$$

$$=\mathbb E\left[\left(\sum_{\ell}g\left(x_n^{\ell}\right)\omega_n^\ell\right)\left(\sum_{\ell}f\left(x_n^{\ell}\right)\omega_n^\ell\right)\Bigg| \mathcal F_n^Y\right]\rightarrow \mathbb E[g(X_n)|\mathcal F_n^Y]\cdot\mathbb E[f(X_n)|\mathcal F_n^Y],$$
as $P\rightarrow\infty$. So we've found that 
\[\mathbb E\left[g\left(x_n^{\ell,sir}\right)f\left( x_n^{\ell',sir}\right)\Big|\mathcal F_n^Y\right]\sim \mathbb E\left[g\left(x_n^{\ell,sir}\right)\Big|\mathcal F_n^Y\right] \mathbb E\left[ f\left( x_n^{\ell',sir}\right)\Big|\mathcal F_n^Y\right]\]
for $\ell\neq \ell'$ and $P$ large. Therefore, $x_n^{\ell,sir}$ and $x_n^{\ell',sir}$ are asymptotically independent.
\end{proof}
\subsubsection{Variance Reduction}
For any bounded function $g(x)$, the principle of conditional Monte Carlo tells us that SIR estimator will have greater variance than the SIS estimator,
\[var\left(\frac{1}{P}\sum_{\ell'} g(x_n^{\ell',sir})\right)\]

\[ = var\left(\frac{1}{P}\sum_{\ell'} g(x_n^{\ell',sir})\Bigg|\{x_n^\ell,\omega_n^\ell\}_\ell\right)+var\left(\mathbb E\left[\frac{1}{P}\sum_{\ell'} g(x_n^{\ell',sir})\Bigg|\{x_n^\ell,\omega_n^\ell\}_\ell\right]\right)\]

\[\geq var\left(\mathbb E\left[\frac{1}{P}\sum_{\ell'} g(x_n^{\ell',sir})\Bigg|\{x_n^\ell,\omega_n^\ell\}_\ell\right]\right)=var\left(\sum_\ell g(x_n^\ell)\omega_n^\ell\right)\]
and so the estimator $\frac{1}{P}\sum_\ell g(x_n^{\ell,sir})$ may not be preferable to $\sum_\ell g(x_n^\ell)\omega_n^\ell$. However, it follows from the proof of theorem \ref{thm:sirConvergence} that 
\[var(g(X_n)|\mathcal F_n^Y)\sim var(g(x_n^{\ell,sir})|\mathcal F_n^Y),\qquad\hbox{for $P$ large,}\]
and we see a reduction in the overall variance of the particles if we write down the law of total variance,
\[var(g(x_n^\ell)) = var(g(X_n)) = var\left(g(X_n)\Big|\mathcal F_n^Y\right)+\underbrace{var\left(\mathbb E[g(X_n)|\mathcal F_n^Y]\right)}_{>0}\]

\[>var\left(g(X_n)\Big|\mathcal F_n^Y\right)\sim var(g(x_n^{\ell,sir})).\]
This reduction can be quite significant if $g(X_n)$ has a broad range. The rates are difficult to show, but by invoking SIR we obtain estimates of $\mathbb E[g(X_n)|\mathcal F_n^Y]$ that will converge faster as $P\rightarrow\infty$. This brief subsection has not attempted any proof; we have not computed any comparison of convergence rates.

\section{Examples}
In this section we present some examples to demonstrate the particle filter's uses.
\subsection{Particle Fiter for Heston Model}
Consider a Heston model with time-dependent coefficients 
\begin{eqnarray}
\nonumber
dY_t&=&\left(\mu-\frac{1}{2}X_t\right)dt+\sqrt{X_t}\left(\rho dB_t+\sqrt{1-\rho^2}dW_t\right)\\
\nonumber
dX_t &=&\nu(\bar X-X_t)dt+\gamma\sqrt{X_t}dB_t
\end{eqnarray}
where $Y_t$ is the log-price of an equity, $\sqrt{X_t}$ is the volatility, $\rho\in[-1,1]$ is the correlation parameter , and $(W_t,B_t)$ are a pair of independent Wiener processes. The observed log-prices on equities and indices is not available in continuum. Instead, there is a discrete set sequence $(t_n)_{n=0,1,2,...}$ consisting of times at which quotes on the equity or index are given,
\[Y_n\doteq Y_{t_n},\qquad\qquad\hbox{for }n=0,1,2,3,4,.......\]
We denote the time step between the $nth$ and $(n+1)th$ observations $\Delta t_n=t_{n+1}-t_n$. By considering the Stratonovich/It\^o integral  transform 

\[\int_{t_n}^{t_{n+1}}\gamma_s\sqrt{X_s}\circ dB_s=\frac{1}{2}\int_{t_n}^{t_{n+1}}\gamma_s^2ds+\int_{t_n}^{t_{n+1}}\gamma_s\sqrt{X_s}dB_s\]
and letting $X_n\doteq X_{t_n}$, we will find it useful to work with the following implicit discretization of the Stratonovich form of the Heston model,
\begin{eqnarray}
\nonumber
Y_{n+1}&=&Y_n+\left(\mu-\frac{1}{2}X_n\right)\Delta t_n+\sqrt{X_n}\left(\rho \Delta B_n+\sqrt{1-\rho^2}\Delta W_n\right)\\
\nonumber
X_{n+1}&=&X_n\left(1-\nu\Delta t_n\right)+\left(\nu \bar X-\frac{\gamma^2}{2}\right)\Delta t_n+\gamma\sqrt{X_{n+1}}\Delta B_n\qquad(*)
\end{eqnarray}
where $\Delta B_n$ and $\Delta W_n$ are increments of independent Wiener processes (i.e. $\Delta B_n\doteq B_{t_{n+1}}-B_{t_n}\sim N(0,\Delta t_n)$ and $\Delta W_n\doteq W_{t_{n+1}}-W_{t_n}\sim N(0,\Delta t_n)$). We take $\sqrt{X_{n+1}}$ to be the root of equation $(*)$ which can be obtained through the quadratic equation (see Alfonsi \cite{alfonsi}),

\[\sqrt{X_{n+1}}=\frac{1}{2}\left\{\gamma\Delta B_n\pm\sqrt{\gamma^2\Delta B_n^2+4D}\right\}\]

\[\hbox{where}\qquad D=(1-\nu\Delta t_n)X_n+\left(\nu \bar X-\frac{\gamma^2}{2}\right)\Delta t_n.\]
Provided that $\frac{1}{n}=\Delta t\leq\frac{1}{\nu}$ and $\gamma^2\leq 2\nu \bar X$, this implicit scheme is effective because it is mean-reverting and preserves positivity in $X^n$. With this scheme we can generate particles $\{x_n^\ell\}_{n,\ell}$ and approximate the nonlinear filter.

\subsection{Rao-Blackwellization}
Let $\theta_n$ be a hidden Markov chain with transition probabilities $\Lambda$, and let $X_n$ be another hidden Markov process given by the following recursion,
\[X_n = a(\theta_n)X_{n-1}+\sigma(\theta_n)B_n\]
with $B_n\sim iid N(0,1$ (to be clear, $B_n\perp \theta_n$), and Gaussian initial distribution $p_0(x)$. Let the observations process be defined discretely as
\[Y_n= h(\theta_n)X_n+\gamma(\theta_n)W_n\]
where $W_k$ are $iid N(0,1)$ (to be clear $B_n\perp W_n$ and $\theta_n$ independent of $W_n$), and $\gamma(\cdot)>0$. In this case we can use particles to marginalize $\theta_n$,
\begin{eqnarray}
\nonumber
\theta_0^\ell&\sim&p_0\\
\nonumber
\theta_n^\ell&\sim&\Lambda(~\cdot~|\theta_{n-1}^\ell)\qquad\hbox{for }n>0,
\end{eqnarray} 
and then for each particle we can compute the marginal Kalman filter,
\begin{eqnarray}
\nonumber
G_n^\ell&=&\frac{h(\theta_n^\ell)\Sigma_{n-1}^\ell}{ h^2(\theta_n^\ell)\left(a^2(\theta_n^\ell)\Sigma_{n-1}^\ell+\sigma^2(\theta_n^\ell)\right)+\gamma^2(\theta_n^\ell)}\\
\nonumber
\widehat X_n^\ell&=&a(\theta_n^\ell)\widehat X_{n-1}^\ell+G_n^\ell\left(Y_n-h(\theta_n^\ell)a(\theta_n^\ell)\widehat X_{n-1}^\ell\right)\\
\nonumber
\Sigma_n^\ell&=&\left(1-G_n^\ell h(\theta_n^\ell)\right)\left(a^2(\theta_n^\ell)\Sigma_{n-1}^\ell+\sigma^2(\theta_n^\ell)\right)
\end{eqnarray}
where we have defined $\widehat X_n^\ell \doteq \mathbb E[X_n|\mathcal F_n^Y\vee\{\theta_{0:n}^\ell\}]$ and $\Sigma_n^\ell \doteq \mathbb E[(X_n-\widehat X_n^\ell)^2|\theta_{0:n}^\ell]$ (see \cite{jazwinski,krylov} for more on the Kalman Filter). Conditioned on $\mathcal F_{n-1}^Y\vee\{\theta_{0:n}^\ell\}$, $Y_n$ is normal with mean and variance
\begin{eqnarray}
\nonumber
\mu_{n|n-1}^\ell& \doteq&\mathbb E[Y_n|\mathcal F_{n-1}^Y\vee\{\theta_{0:n}^\ell\}] = h(\theta_n^\ell)a(\theta_n^\ell)\widehat X_{n-1}^\ell\\
\nonumber
v_{n|n-1}^\ell&\doteq& var\left(Y_n\Big|\mathcal F_{n-1}^Y\vee\{\theta_{0:n}^\ell\}\right)=h^2(\theta_n^\ell)\left(a^2(\theta_n^\ell)\Sigma_{n-1}^\ell+\sigma^2(\theta_n^\ell)\right)+\gamma^2(\theta_n^\ell)
\end{eqnarray}
with the convention $\mu_{0:-1}^\ell  =h(\theta_0^\ell)a(\theta_0^\ell)\mathbb EX_0$ and $v_{0:-1} =  h^2(\theta_0^\ell)a^2(\theta_0^\ell)var(X_0)+\gamma^2(\theta_0^\ell)$. For any particle $\theta_{0:n}^\ell$, the unnormalized importance weights are updated as follows:
\[\tilde\omega_n^\ell \doteq\mathbb P(Y_{0:n}|\theta_{0:n}=\theta_{0:n}^\ell)=\frac{\exp\left\{-\frac{1}{2}\left(\frac{Y_n - \mu_{n|n-1}^\ell}{\sqrt{v_{n|n-1}^\ell}}\right)^2\right\}}{\sqrt{v_{n|n-1}^\ell}}\times\tilde \omega_{n-1}^\ell\]
with the convention that $\tilde\omega_{-1}^\ell \equiv 1$. For $P$-many particles, the unnormalized Rao-Blackwellized filter is an approximation to the unnormalized filter,
\[\phi_n^*[\theta]\doteq\frac{1}{P}\sum_\ell\theta_n^\ell\tilde\omega_n^\ell=\frac{1}{P}\sum_\ell\theta_n^\ell\mathbb P(Y_{0:n}|\theta_{0:n}=\theta_{0:n}^\ell)\approx \mathbb E[\tilde\theta_n\mathbb P(Y_{0:n}|\tilde\theta_{0:n})|\mathcal F_n^Y]\doteq\tilde{\mathbb E}[\theta_n|\mathcal F_n^Y]\]
where $\tilde\theta$ is a copy of $\theta$ that is independent of $(Y,X,\theta)$.\\

\noindent The Rao-Blackwell theorem says the following:
\begin{theorem}\textbf{Rao-Blackwell.} Given $Z_{0:n}$, let $\hat\beta $ be an estimator of a parameter $\beta$, and $T$ a sufficient statistic for $\beta$. Then the estimator $\hat\beta^*=\mathbb E[\hat \beta|T(Z_{0:n})]$ is at least as good in terms of MSE
\[\mathbb E(\hat\beta^*-\beta)^2\leq\mathbb E(\hat\beta-\beta)^2\]
for all $\beta$ in the parameter space.

\end{theorem}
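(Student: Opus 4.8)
The plan is to prove the inequality by decomposing the mean-square error of $\hat\beta$ using the tower property of conditional expectation, in close analogy with the MSE-minimization argument already given in Proposition 1.1.1 (the uniqueness of $\widehat X_t$ as the $\mathcal F_t^Y$-measurable minimizer). First I would define the improved estimator $\hat\beta^* \doteq \mathbb E[\hat\beta\,|\,T(Z_{0:n})]$ and note that because $T$ is a sufficient statistic for $\beta$, the conditional distribution of the data given $T$ does not depend on $\beta$; this guarantees that $\hat\beta^*$ is itself a legitimate estimator (a function of the data alone, with no unknown $\beta$ smuggled in through the conditioning). This sufficiency observation is the crucial structural input and is precisely where the hypothesis on $T$ is used.

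Next I would write $\hat\beta - \beta = (\hat\beta - \hat\beta^*) + (\hat\beta^* - \beta)$ and expand the squared error, exactly as in the earlier proposition:
\[
\mathbb E(\hat\beta-\beta)^2 = \mathbb E(\hat\beta-\hat\beta^*)^2 + 2\,\mathbb E\big[(\hat\beta-\hat\beta^*)(\hat\beta^*-\beta)\big] + \mathbb E(\hat\beta^*-\beta)^2.
\]
The key step is to show the cross term vanishes. Conditioning on $T(Z_{0:n})$ and using that $\hat\beta^* - \beta$ is $\sigma(T)$-measurable, I would pull it outside the inner conditional expectation to obtain
\[
\mathbb E\big[(\hat\beta-\hat\beta^*)(\hat\beta^*-\beta)\big]
= \mathbb E\Big[(\hat\beta^*-\beta)\,\mathbb E[\hat\beta-\hat\beta^*\,|\,T(Z_{0:n})]\Big],
\]
and the inner expectation is zero because $\mathbb E[\hat\beta\,|\,T] = \hat\beta^*$ by definition, so $\mathbb E[\hat\beta-\hat\beta^*\,|\,T]=0$.

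With the cross term eliminated, the identity collapses to
\[
\mathbb E(\hat\beta-\beta)^2 = \mathbb E(\hat\beta-\hat\beta^*)^2 + \mathbb E(\hat\beta^*-\beta)^2 \geq \mathbb E(\hat\beta^*-\beta)^2,
\]
which is the claimed inequality, with equality precisely when $\hat\beta = \hat\beta^*$ almost surely (i.e.\ when $\hat\beta$ was already a function of $T$). I expect the main conceptual obstacle to be not the algebra, which is a routine orthogonality/tower-property computation identical in spirit to Proposition 1.1.1, but rather the careful invocation of sufficiency: one must emphasize that without $T$ being sufficient, $\hat\beta^*$ could depend on the unknown $\beta$ and would fail to be an estimator at all. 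The inequality is then required to hold \emph{for all} $\beta$ in the parameter space, so I would remark that the decomposition above is valid pointwise in $\beta$, completing the argument.
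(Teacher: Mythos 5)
Your proof is correct. Note, however, that the paper itself states the Rao--Blackwell theorem \emph{without proof} --- it is imported as a classical tool to justify the Rao-Blackwellized particle filter --- so there is no proof of record to compare yours against. What you have written is the standard argument, and it mirrors exactly the technique the paper does use for Proposition 1.1.1 (the MSE-optimality of $\widehat X_t$): decompose, kill the cross term by conditioning and the tower property, and discard the nonnegative remainder. You are also right about where sufficiency enters: the inequality itself holds for conditioning on any $\sigma$-field, and sufficiency is needed only so that $\hat\beta^*=\mathbb E[\hat\beta\,|\,T(Z_{0:n})]$ is a genuine statistic, free of the unknown $\beta$. Two small points for completeness: the whole computation should be read under $\mathbb P_\beta$ for each fixed $\beta$ (which you flag), and one should assume $\mathbb E_\beta[\hat\beta^2]<\infty$, since otherwise the right-hand side is infinite and the claim is trivial. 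An equally common alternative route is conditional Jensen's inequality, $\left(\mathbb E[\hat\beta-\beta\,|\,T]\right)^2\leq\mathbb E[(\hat\beta-\beta)^2\,|\,T]$, followed by taking expectations; it yields the inequality in one line, though it hides the equality condition ($\hat\beta=\hat\beta^*$ a.s.) that your orthogonal decomposition makes explicit.
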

\noindent Now, suppose that for each $\ell$ we generate particles $\{x_{0:n}^{\ell',\ell}\}_{\ell'\leq P'}$ instead of computing Kalman filters, and then using these particles we compute another estimator of the unnormalized filtering expectation,
\[\phi_n[\theta] \doteq\frac{1}{P}\sum_\ell \theta_n^\ell\underbrace{\frac{1}{P'} \sum_{\ell'}\mathbb P\left(Y_{0:n}\Big|X_{0:n}=x_{0:n}^{\ell',\ell},\theta_{0:n}=\theta_{0:n}^\ell\right)}_{\approx\mathbb P\left(Y_{0:n}\big|\theta_{0:n}=\theta_{0:n}^\ell\right)}.\]
But given $\theta_{0:n}^\ell$, the marginal Kalman filter has allowed us to compute the likelihood without approximation. Therefore, a sufficient statistic for $\tilde{\mathbb E}[\theta_n|\mathcal F_n^Y]$ is $\mathcal T_n\doteq(\theta_n^\ell,\tilde\omega_n^\ell)_{\ell\leq P}$, and we have

\[\tilde\omega_n^\ell=\mathbb P(Y_{0:n}|\theta_{0:n}=\theta_{0:n}^\ell)=\int \mathbb P(Y_{0:n}|X_{0:n}=x_{0:n},\theta_{0:n}=\theta_{0:n}^\ell)\mathbb P(X_{0:n}\in dx_{0:n}|\theta_{0:n}=\theta_{0:n}^\ell)\]

\[=\mathbb E\left[\mathbb P(Y_{0:n}|X_{0:n}=x_{0:n}^{\ell',\ell},\theta_{0:n}=\theta_{0:n}^\ell)\Bigg|\mathcal F_n^Y\vee\{\theta_{0:n}^\ell\}\right]\]
for all $\ell'\leq P'$, and from here it is easy to see that the expectation of $\phi_n[\theta]$ given $\mathcal F_n^Y\vee\mathcal T_n$ is the Rao-Blackwellized estimator,
\[\mathbb E\left[\phi_n[\theta]\Big|\mathcal F_n^Y\vee\mathcal T_n \right]=\mathbb E\left[\mathbb E\left[\phi_n[\theta]\Big|\mathcal F_n^Y\vee\{\theta_{0:n}^\ell\}_\ell\right]\Big|\mathcal F_n^Y\vee\mathcal T_n \right]=\frac{1}{P}\sum_\ell \theta_n^\ell\tilde\omega_n^\ell=\phi_n^*[\theta].\]
Therefore, by the Rao-Blackwell theorem we know that $\phi_n^*[\theta]$ has less or equal MSE to a particle filter computed without marginal Kalman filters,
\[\mathbb E\left[\left(\phi_n^*[\theta]-\tilde{\mathbb E}[\theta_n|\mathcal F_n^Y] \right)^2\Big|\mathcal F_n^Y\right]\leq\mathbb E\left[\left( \phi_n[\theta]-\tilde{\mathbb E}[\theta_n|\mathcal F_n^Y]\right)^2\Big|\mathcal F_n^Y\right],\]
with the advantage that the Rao-Blackwellized filter requires particles to be simulated across a domain of fewer dimensions.

\chapter{Stability, Lyapunov Exponents, and Ergodic Theory for Finite-State Filters}

\noindent A filter is said to be `stable' if it has the ability to asymptotically recover from an erroneous initial distribution. In other words, assuming that all parts of the HMM are estimated correctly, but with the exception of the initial distribution which is incorrect, a stable filter will `forget' the false assumptions as the initial data falls farther and farther into the past. It can be advantageous to work with a stable filter for a number of reasons, one being that parameter estimation algorithms for models with stable filters do not need to put as much emphasis on estimating the initial condition. Rates at which stability takes effect can also be estimated, as these rates are shown to be given by Lyapunov exponents or a spectral gap in the filter's generator, and while the exponents are generally not explicitly computable, there are ways to make estimates. Finally, it is possible that a stable filter for an ergodic state variable may also have an ergodic theorem, a property that may be also be useful for parameter estimation.
\section{Main Ideas and Their History}
Consider an HMM $(X_t,Y_t)$ where $X_t$ is a hidden Markov process taking values in a state-space $ S $ with initial distribution $\nu:\mathcal S\rightarrow [0,1]$, and where observations are made on the process $Y_t$ which we assume to be given by a function of $X_t$ plus a noise. A filtering measure can be computed using the initial condition $\nu$, and we denote it as
\[\pi_t^\nu(A) = \mathbb P^\nu(X_t\in A|\sigma\{Y_s:s\leq t\})\qquad\hbox{for all Borel sets }A\subset \mathcal S.\]
\begin{definition} Let $\tilde\nu$ be another probability measure on $\mathcal S$, and let $\pi_t^{\tilde\nu}$ denote the filter computed with $\tilde\nu$ as the initial distribution of $X$. The filter is said to be \textit{asymptotically stable} if
\[\lim_{t\rightarrow\infty}\mathbb E\left|\pi_t^\nu-\pi_t^{\tilde\nu}\right|_{TV}=0\]
where $|\cdot|_{TV}$ denotes the total-variation norm.\footnote{The total variation norm of the difference between two probability measure $p$ and $q$ is $|p-q|_{TV} = \sup\{|p(A)-q(A)|:A\in \mathcal B(\mathbb R)\}$ where $\mathcal B(\mathbb R)$ denotes the space of Borel-measurable subset of $\mathbb R$.}
\end{definition}
Intuitively, the filter should be stable if the model fits one of the following descriptions:
\begin{itemize}
\item the signal is ergodic
\item or the observations are sufficiently informative, making old information obsolete (e.g. very low noise and $h(\cdot)$ is one-to-one),

\end{itemize}
but it is difficult to prove stability results even for these basic cases, and a general theory has yet to be developed. 

Stability results for finite-state Markov chain signals are well-known along their rates, and their ergodic theory, and it has been known since the 1960's that Kalman and Kalman-Bucy filters are stable when the signal is ergodic. The fundamental way of showing filter stability for these filters is to identify the equilibrium of the posterior covariance from its Ricatti equations, use it to show that the gain matrix also approaches an equilibrium, and then verify that the dependence on the initial condition fades with time. Example \ref{ex:kalmanBucyStability} shows how this is done for general Kalman-Bucy filters with constant coefficients, thus showing that observations in linear Gaussian models are indeed sufficiently informative since the signal is not assumed to be ergodic.

\begin{example}\label{ex:kalmanBucyStability} \textbf{(Stability of Kalman-Bucy Filter with Constant Coefficients).} Consider the following linear system,
\begin{eqnarray}
\nonumber
dX_t&=&aX_tdt+\sigma dB_t\\
\nonumber
dY_t&=&hX_tdt+\gamma dW_t
\end{eqnarray}
where $B_t\perp W_t$. Applying the Kalman-Bucy filter, we have
\begin{eqnarray}
\nonumber
d\widehat X_t&=&\left(a-\frac{h^2}{\gamma^2}\Sigma_t\right)\widehat X_tdt+\frac{h}{\gamma^2}\Sigma_tdY_t\\
\nonumber
\frac{d}{dt}\Sigma_t&=&2a\Sigma_t-\frac{h^2}{\gamma^2}\Sigma_t^2+\sigma^2.
\end{eqnarray}
The solution to the Ricatti equation can be written explicitly as follows,
\[\Sigma_t=\frac{\alpha_1-K\alpha_2\exp\left\{\frac{h^2}{\gamma^2}(\alpha_2-\alpha_1)t\right\}}{1-K\exp\left\{\frac{h^2}{\gamma^2}(\alpha_2-\alpha_1)t\right\}}\]
where
\[\alpha_1 = h^{-2}\left(a\gamma^2-\gamma\sqrt{a^2\gamma^2+h^2\sigma^2}\right),\qquad\qquad\alpha_2 = h^{-2}\left(a\gamma^2+\gamma\sqrt{a^2\gamma^2+h^2\sigma^2}\right)\]
with $K = \frac{\Sigma_0-\alpha_1}{\Sigma_0-\alpha_2}$. Asymptotically, we have $\Sigma_t\sim\alpha_2$ and the filtering expectation is approximately,
\[\widehat X_t\sim \widehat X_0e^{-\beta t}+\frac{h\alpha_2}{\gamma^2}\int_0^te^{-\beta( t-s)}dY_s\]
where $\beta = \frac{1}{\gamma}\sqrt{a^2\gamma^2+h^2\sigma^2}$. This shows that the Kalman-Bucy filter will forget any initial condition on $X_0$ as $t\rightarrow\infty$, thus showing that the filter is stable.
\end{example}

General results for nonlinear filters with ergodic states were identified by Kunita in 1971 \cite{kunita1971}, but a key step in his proof is wrong. His proof essentially said the following:\\

\noindent Assume that $\nu\ll\tilde\nu$. Then for any test function $g(x)$,

\[\int_\mathcal S g(x)d\pi_t^\nu(x)=\mathbb E[g(X_t)|\mathcal F_t^Y]=\frac{\mathbb E^{\tilde\nu}\left[g(X_t)\frac{d\nu}{d\tilde \nu}(X_0)\Big|\mathcal F_t^Y\right]}{\mathbb E^{\tilde\nu}\left[\frac{d\nu}{d\tilde \nu}(X_0)\Big|\mathcal F_t^Y\right]}\]

\[=\mathbb E^{\tilde\nu}\left[g(X_t)\frac{\mathbb E^{\tilde\nu}\left[\frac{d\nu}{d\tilde \nu}(X_0)\Big|\mathcal F_t^Y\vee\{X_t\}\right]}{\mathbb E^{\tilde\nu}\left[\frac{d\nu}{d\tilde \nu}(X_0)\Big|\mathcal F_t^Y\right]}\Bigg|\mathcal F_t^Y\right]=\int_\mathcal Sg(x)\frac{\mathbb E^{\tilde\nu}\left[\frac{d\nu}{d\tilde \nu}(X_0)\Big|\mathcal F_t^Y\vee\{X_t\}\right]}{\mathbb E^{\tilde\nu}\left[\frac{d\nu}{d\tilde \nu}(X_0)\Big|\mathcal F_t^Y\right]}d\pi_t^{\tilde\nu}(x)\]
where the denominator is strictly positive a.s. because $\nu\ll\tilde\nu$. From this we see that the $\pi^\nu\ll\pi^{\tilde\nu}$ with Radon-Nykodym derivative
\[\frac{d\pi_t^\nu}{d\pi_t^{\tilde\nu}}(x) =\frac{\mathbb E^{\tilde\nu}\left[\frac{d\nu}{d\tilde \nu}(X_0)\Big|\mathcal F_t^Y\vee\{X_t=x\}\right]}{\mathbb E^{\tilde\nu}\left[\frac{d\nu}{d\tilde \nu}(X_0)\Big|\mathcal F_t^Y\right]} \]
with $\mathbb P$-a.s. From the existence of the Radon-Nykodym derivative, the TV-norm is equivalent $\mathbb P$-a.s. to the following

\[\|\pi_t^\nu-\pi_t^{\tilde\nu}\|_{TV} =\int_\mathcal S\left|\frac{d\pi_t^\nu}{d\pi_t^{\tilde\nu}}(x)-1\right|d\pi_t^{\tilde\nu}(x) \]

\begin{equation}
\label{eq:ratio}
= \frac{\mathbb E^{\tilde\nu}\left[\left|\mathbb E^{\tilde\nu}\left[\frac{d\nu}{d\tilde \nu}(X_0)\Big|\mathcal F_t^Y\vee\{X_t\}\right]-\mathbb E^{\tilde\nu}\left[\frac{d\nu}{d\tilde \nu}(X_0)\Big|\mathcal F_t^Y\right] \right|\Bigg|\mathcal F_t^Y\right]}{\mathbb E^{\tilde\nu}\left[\frac{d\nu}{d\tilde \nu}(X_0)\Big|\mathcal F_t^Y\right]} 
\end{equation}
but because of the Markov property, we realize that the distribution of $X_0$ given $\mathcal F_t^Y\vee\{X_t\}$ is the same regardless of whether information is added regarding the future. Therefore, we have
\[\mathbb E^{\tilde\nu}\left[\frac{d\nu}{d\tilde \nu}(X_0)\Big|\mathcal F_t^Y\vee\{X_t\}\right]=\mathbb E^{\tilde\nu}\left[\frac{d\nu}{d\tilde \nu}(X_0)\Big|\mathcal F_\infty^Y\vee\mathcal F_{[t,\infty)}^X\right]\]
where $\mathcal F_{[t,\infty)}^X$ denotes the tail-$\sigma$-field generated by $\{X_s:s\geq t\}$. Combining the tail-$\sigma$-field measure with the numerator in (\ref{eq:ratio}) we have

\[\mathbb E\|\pi_t^\nu-\pi_t^{\tilde\nu}\|_{TV} = \mathbb E^{\tilde\nu}\left[ \mathbb E^{\tilde\nu}\left[\frac{d\nu}{d\tilde \nu}(X_0)\Big|\mathcal F_t^Y\right]\|\pi_t^\nu-\pi_t^{\tilde\nu}\|_{TV} \right]\]

\[ =\mathbb E^{\tilde\nu}\left|\mathbb E^{\tilde\nu}\left[\frac{d\nu}{d\tilde \nu}(X_0)\Big|\mathcal F_\infty^Y\vee\mathcal F_{[t,\infty)}^X\right]-\mathbb E^{\tilde\nu}\left[\frac{d\nu}{d\tilde \nu}(X_0)\Big|\mathcal F_t^Y\right]  \right|\]
and taking limits we have
\[\lim_{t\rightarrow\infty}\mathbb E\|\pi_t^\nu-\pi_t^{\tilde\nu}\|_{TV} =\mathbb E^{\tilde\nu}\left| \mathbb E^{\tilde\nu}\left[\frac{d\nu}{d\tilde \nu}(X_0)\Big|\bigcap_{t\geq 0}\mathcal F_\infty^Y\vee\mathcal F_{[t,\infty)}^X\right]-\mathbb E^{\tilde\nu}\left[\frac{d\nu}{d\tilde \nu}(X_0)\Big|\mathcal F_t^Y\right]\right|\]
which suggests that the filters are stable if and only if
\begin{equation}
\label{eq:kunitaEnd}
\mathbb E^{\tilde\nu}\left[\frac{d\nu}{d\tilde \nu}(X_0)\Big|\bigcap_{t\geq 0}\mathcal F_\infty^Y\vee\mathcal F_{[t,\infty)}^X\right]=\mathbb E^{\tilde\nu}\left[\frac{d\nu}{d\tilde \nu}(X_0)\Big|\mathcal F_\infty^Y\right].
\end{equation}
To this point, every step is correct, but the error made by Kunita was in assuming that the limits of these filtrations were equal,
\begin{equation}
\label{eq:equalFiltrations}
\bigcap_{t\geq 0}\mathcal F_\infty^Y\vee\mathcal F_{[t,\infty)}^X\stackrel{?}{=}\mathcal F_\infty^Y,
\end{equation}
but there has since come a counter-example to equation (\ref{eq:equalFiltrations}).

\subsection{The Counter Example} Baxendale, Chigansky and Lipster \cite{baxChigLip2006} presented an example to demonstrate when (\ref{eq:equalFiltrations}) fails. Let $X_t$ be a Markov chain taking values in $\mathcal S=\{1,2,3,4\}$ with transition intensities
\[\Lambda = \left[
\begin{array}{cccc}
-1&1&0&0\\
0&-1&1&0\\
0&0&-1&1\\
1&0&0&-1
\end{array}\right].\]
Clearly, all state communicate, and $X$ is an ergodic Markov process with invariant measure $\mu = (1,1,1,1)/4$. Let $h(x) = \mathbf 1_{x=1}+\mathbf 1_{x=3}$, and consider the observations model
\[Y_t= h(X_t), \]
which is a degenerate noise model. The following lemma was proven in \cite{baxChigLip2006}, 
\begin{lemma} For this example, the limit of the filtrations in equation (\ref{eq:equalFiltrations}) is false,
\[\bigcap_{t\geq 0}\mathcal F_\infty^Y\vee\mathcal F_{[t,\infty)}^X\supsetneq \mathcal F_\infty^Y.\]
\end{lemma}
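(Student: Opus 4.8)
The plan is to exhibit an explicit random variable that lies in the tail field $\mathcal G\doteq\bigcap_{t\geq 0}\mathcal F_\infty^Y\vee\mathcal F_{[t,\infty)}^X$ but not in $\mathcal F_\infty^Y$; since $\mathcal F_\infty^Y\subseteq\mathcal G$ always holds, this forces the strict inclusion. The natural candidate is the initial state $X_0$ itself. Everything hinges on two structural features of this example: the chain is \emph{deterministic in its ordering}, cycling $1\to2\to3\to4\to1$ with unit-rate exponential holding times, so that $X_t = X_0+N_t\pmod 4$ (identifying the states with $\mathbb Z/4$), where $N_t$ is the number of jumps in $[0,t]$; and $h$ \emph{toggles at every jump}, since consecutive states carry the alternating values $h(1),h(2),h(3),h(4)=1,0,1,0$. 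First I would record these two facts, noting in particular that the jump times of $Y$ coincide exactly with those of $X$, so that $N_t$ is $\mathcal F_t^Y$-measurable.

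Next I would show $X_0\in\mathcal G$. Fix $t\geq 0$. Inverting the deterministic dynamics gives $X_0 = X_t-N_t\pmod 4$. Here $N_t$ is $\mathcal F_t^Y\subseteq\mathcal F_\infty^Y$-measurable by the toggling observation, while $X_t$ is $\mathcal F_{[t,\infty)}^X$-measurable. Hence $X_0$ is $\mathcal F_\infty^Y\vee\mathcal F_{[t,\infty)}^X$-measurable for \emph{every} $t$, so $X_0\in\mathcal G$.

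Then I would show $X_0\notin\mathcal F_\infty^Y$ by a symmetry argument, which I expect to be the delicate step. Take the invariant initial law $\mu=(1,1,1,1)/4$, and let $\sigma(x)=x+2\pmod 4$ be the half-cycle involution, so that $\sigma$ swaps $1\leftrightarrow 3$ and $2\leftrightarrow 4$. Because $\Lambda$ is circulant and $\mu$ is uniform, the map $T$ sending the path $X_\cdot$ to $\sigma(X_\cdot)$ is measure-preserving on path space; and since $h\circ\sigma=h$, it fixes the observation path $Y_\cdot$ pointwise. Consequently every $\mathcal F_\infty^Y$-measurable functional $G=F(Y)$ is exactly $T$-invariant, whereas $X_0\circ T=\sigma(X_0)\neq X_0$ surely. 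If $X_0$ were a.s.\ equal to such a $G$, then applying the measure-preserving $T$ would yield $\sigma(X_0)=G=X_0$ a.s., a contradiction. Thus $X_0\notin\mathcal F_\infty^Y$.

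Combining the two steps gives $X_0\in\mathcal G\setminus\mathcal F_\infty^Y$, whence $\mathcal G\supsetneq\mathcal F_\infty^Y$, proving the lemma. The only real subtlety is the non-measurability step: one must confirm that $T$ is genuinely measure-preserving (this is exactly where circularity of $\Lambda$ and the $\sigma$-invariance of the stationary law enter) and pass carefully from ``$X_0=F(Y)$ a.s.'' to the pointwise contradiction, using that $T$ preserves $\mathbb P$-null sets. The first step, by contrast, is essentially bookkeeping with the deterministic increment structure.
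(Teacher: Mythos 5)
Your proof is correct, and its first half coincides with the paper's: both recover $X_0$ by counting the jumps of $Y$ (which coincide with the jumps of $X$ because $h$ alternates $1,0,1,0$ along the cycle) and undoing the deterministic cyclic motion, so that $X_0\in\mathcal F_t^Y\vee\sigma\{X_t\}\subset\mathcal F_\infty^Y\vee\mathcal F_{[t,\infty)}^X$ for every $t$. Where you genuinely diverge is the non-measurability step. The paper identifies the observation $\sigma$-algebra explicitly---the jump times $\{\tau_i\}$ are independent of $(X_0,Y_0)$, so $\mathcal F_t^Y=\bigvee_{i\geq 1}\{\tau_i\leq t\}\vee\sigma\{Y_0\}$---and then computes $\mathbb P(X_0=1\mid\mathcal F_t^Y)=\frac{\mathbb P(X_0=1)}{\mathbb P(X_0=1)+\mathbb P(X_0=3)}\,Y_0$ for every $t$, which is not an indicator; this yields $X_0\notin\mathcal F_\infty^Y$ for \emph{any} initial law charging both $1$ and $3$, and as a by-product exhibits the exact posterior. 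You instead use the half-cycle involution $\sigma(x)=x+2\pmod 4$: since the generator is circulant, $h\circ\sigma=h$, and the uniform law is $\sigma$-invariant, the path map $T:X_\cdot\mapsto\sigma(X_\cdot)$ preserves the law and fixes the observation path pointwise, so an $\mathcal F_\infty^Y$-measurable version of $X_0$ would have to equal $\sigma(X_0)$ a.s., which is impossible since $\sigma$ has no fixed points. This is a softer argument: it avoids characterizing $\mathcal F_\infty^Y$ and proving the independence of the jump times from $(X_0,Y_0)$, but it costs some generality---it requires the initial law to be $\sigma$-symmetric (e.g.\ the stationary uniform law, the natural choice for this example), whereas the paper's computation needs only $\mathbb P(X_0=1),\mathbb P(X_0=3)>0$---and it requires either the canonical path-space setup so that $T$ acts on $\Omega$, or a reformulation via the distributional identity of $(X_\cdot,Y_\cdot)$ and $(\sigma(X_\cdot),Y_\cdot)$; you correctly flag the null-set bookkeeping needed to transfer the a.s.\ identity under $T$, and that is indeed the only delicate point.
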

\begin{proof}
It suffices to show that $X_0$ is $\mathcal F_\infty^Y\vee\mathcal F_{[t,\infty)}^X$-measurable, but not measurable with respect to $\mathcal F_\infty^Y$. The Markov chain $X$ only admits cycles in the following order,
\[\dots\dots\{3\}\to\{4\}\to\{1\}\to\{2\}\to\dots,\]
and therefore we can recover $X_0$ given $\mathcal F_t^Y$ and $X_t$ for any $t>0$ (i.e. because we know $X_t$ we can look backwards and deduce the path of $X$ by looking at what times $Y$ has jumped). Now, because $\mathcal F_t^Y\vee\{X_t\}\subset\mathcal F_\infty^Y\vee\mathcal F_{[t,\infty)}^X$, we have
\[X_0\in \bigcap_{t\geq 0}\mathcal F_\infty^Y\vee\mathcal F_{[t,\infty)}^X.\]

Next, denote the times at which $Y$ jumps with the sequence $\{\tau_i\}_{i\geq 1}$ ($\tau_i$ is time of $Y$'s $ith$ jump). It is not hard to verify that $\tau_i$ is independent of $(X_0,Y_0)$ and the the following filtrations are equal,
\[\mathcal F_t^Y = \bigvee_{i\geq 1}\{\tau_i\leq t\}\vee\{Y_0\}\]
and so for any $t>0$ we have
\[\mathbb P(X_0=1|\mathcal F_t^Y) = \mathbb P\left(X_t=1\Bigg|\bigvee_{i\geq 1}\{\tau_i\leq t\}\vee\{Y_0\}\right)=\mathbb P(X_0=1|Y_0)\]

\[=\frac{\mathbb P(X_0=1)}{\mathbb P(X_0=1)+\mathbb P(X_0=3)}Y_0\neq\mathbf 1_{X_t=1}.\]
Since this posterior holds for any $t>0$, we must have
\[\mathbb P(X_0=1|\mathcal F_\infty^Y) \neq\mathbf 1_{X_t=1}\]
which means that $X_0\notin\mathcal F_\infty^Y$.
\end{proof}
\section{Stability for Markov Chain Models}
In this section we present some of the results in the paper by Atar and Zeitouni \cite{atarZeitouni1997}. In particular, we present their proof of the stability-rate for discrete-time filtering problems where the state variable is an ergodic finite-state Markov chain. Their paper also presents the analogous results for continuous time, as well as some other results regarding the low-noise case.

Consider a probability space $(\Omega,\mathcal F,\mathbb P)$ and let $n=0,1,2,3\dots$ denote time. Suppose that $X_n$ is an unobserved Markov chain taking values in a finite state-space $\mathcal S=\{x_1,\dots,x_d\}$. Let the matrix $\Lambda$ contain $X_n$'s transition probabilities so that
\[\mathbb P(X_{n+1}=x_i) = \sum_{j}\Lambda_{ji}\mathbb P(X_n=x_j)\]
for any $i,j\leq d$, and $\mathbb P(X_0=x_i) = \nu_i$. Suppose further that $X_n$ is recurrent with invariant law $\mu$ such that
\[(\Lambda^*)^n\nu\rightarrow \mu\]
as $n\rightarrow\infty$. We assume that $X_n$ is ergodic, which can be the case if and only if $\Lambda$ is of primitive order $k$ (i.e. there exists $k<\infty $ such that $\Lambda_{ji}^n>0$ for all $i,j\leq d$ and for all $n\geq k$). Let the observed process $Y_n$ be a nonlinear function of $X_n$,
\[Y_n = h(X_n)+W_n\]
where $W_n$ is an iid Gaussian random variable with mean zero and variance $\gamma^2>0$. The filtering mass computed with $\mu$ as its initial condition is denoted with $\pi_n^\nu$ and is given recursively by
\[\pi_{n+1}^\nu = \frac{1}{c_{n+1}}\psi_{n+1}\Lambda^*\pi_n^\nu\]
where $c_{n+1}$ is a normalizing constant (dependent on $\mu$), and $\psi_{n+1}$ is a diagonal matrix of likelihood functions
\[\psi_{n+1} =\left[
\begin{array}{cccc}
 e^{-\frac{1}{2}\left(\frac{Y_{n+1}-h(x_1)}{\gamma}\right)^2}&0&\dots&0\\
  0&e^{-\frac{1}{2}\left(\frac{Y_{n+1}-h(x_2)}{\gamma}\right)^2}&\dots&0\\
\vdots&\vdots&\ddots&\vdots\\
       0&0&\dots&e^{-\frac{1}{2}\left(\frac{Y_{n+1}-h(x_d)}{\gamma}\right)^2}
 \end{array}\right].\]
 
Stability in this case means that for any other measure $\tilde\nu:\mathcal S\rightarrow [0,1]$, we have
 \[\|\pi_n^\nu-\pi_n^{\tilde\nu}\|\rightarrow 0\]
 as $n\rightarrow \infty$, where $\|\cdot\|$ denotes the Euclidean norm on $\mathbb R^d$. The rate of convergence of this difference is described in terms of \textbf{Lyapunov exponents}
 \[\mathcal E_\gamma(\nu,\tilde\nu,\omega) = \lim\sup_{n\rightarrow\infty}\frac{1}{n}\log\|\pi_n^\nu-\pi_n^{\tilde\nu}\|,\]
for all $\omega\in\Omega$, where the limit holds in some strong sense (such as probability, mean-square, or almost-surely). However, it turns out that the Lyapunov exponent is almost-surely bounded by a deterministic constant, dependent only on model parameters such as $\gamma$.

Before moving on, we define the sequence of stochastic operators $T_n\doteq \psi_n\Lambda^*$, and define the sequence unnormalized posterior distributions,
 \begin{equation}
 \label{eq:unnormalizedFilter}
 p_n^\nu \doteq  T_np_{n-1}^\nu=T_nT_{n-1}\dots T_1\nu.
 \end{equation}
Clearly, $\pi_n^\nu = p_n^\nu/\left<p_n^\nu,\mathbf 1\right>$, where $\left<\cdot,\cdot\right>$ denotes the Euclidean inner-product on $\mathbb R^d$, and $\mathbf 1 = (1,1,1,\dots,1)^*$. 

\subsection{Perron-Frobenius \& Oseledec's Theorems}
In this section we introduce some general results from matrix theory. The theorems are their usage have heavy dependence on the algebraic concept of an`exterior product.' For now, we start with a basic theory that we can apply to the eigenvalues and eigenvectors of a Markov chain transition matrix:
\begin{theorem}\textbf{(Perron-Frobenius).} Assuming that the Markov chain transition matrix $\Lambda$ is irreducible and of primitive order. Then 1 is a simple eigenvalue of $\Lambda$ with all other eigenvalues have real-part with absolute value strictly less than one. Moreover, the unique right-eigenvector corresponding to $1$ can be multiplied by a constant to equal $\mu$, so that $\lim_n\Lambda^n = (\mu,\mu\dots,\mu)^*$. Furthermore, for any probability vector $\nu$ we have $(\Lambda^*)^n\nu\rightarrow \mu$ as $n\rightarrow\infty$.
\end{theorem}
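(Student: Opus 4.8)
The plan is to prove this classical Perron--Frobenius statement for the primitive row-stochastic matrix $\Lambda$ in three movements: confining all eigenvalues to the closed unit disk and producing the invariant law, isolating $1$ as the unique and simple eigenvalue on the unit circle, and finally deducing the convergence of powers from a spectral decomposition. Since $\Lambda$ is a transition matrix, $\Lambda\mathbf 1=\mathbf 1$, so $1$ is already an eigenvalue with right eigenvector $\mathbf 1$; and by Gershgorin's circle theorem every eigenvalue $\beta$ lies in some disc $|\beta-\Lambda_{jj}|\le\sum_{i\neq j}\Lambda_{ji}=1-\Lambda_{jj}$, whence $|\beta|\le 1$. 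To produce the invariant distribution I would apply Brouwer's fixed point theorem to the continuous self-map $\nu\mapsto\Lambda^*\nu$ of the (compact, convex) probability simplex, which $\Lambda^*$ preserves because it is column-stochastic, obtaining a probability vector $\mu$ with $\Lambda^*\mu=\mu$. Primitivity then upgrades $\mu\ge 0$ to $\mu>0$, since $\mu=(\Lambda^*)^k\mu$ and $(\Lambda^*)^k=(\Lambda^k)^*$ has strictly positive entries.

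The crux is the second movement, where I would lean entirely on primitivity through the strictly positive power $P\doteq\Lambda^k$. Suppose $Pv=\beta v$ with $|\beta|=1$ and $v\neq 0$, and choose an index $i$ with $|v_i|=\|v\|_\infty$. Then
\[
\|v\|_\infty=|\beta|\,|v_i|=\Big|\sum_j P_{ij}v_j\Big|\le\sum_j P_{ij}|v_j|\le\|v\|_\infty,
\]
so both inequalities are equalities; because every $P_{ij}>0$, this forces $|v_j|=\|v\|_\infty$ for all $j$ and forces the $v_j$ to share a common phase, so $v$ is a scalar multiple of $\mathbf 1$ and hence $\beta=1$. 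This shows at once that $1$ is the only unit-modulus eigenvalue of $P$ and that its eigenspace is one-dimensional. Transferring back to $\Lambda$: any eigenvalue $\beta$ of $\Lambda$ with $|\beta|=1$ and eigenvector $v$ gives $\Lambda^k v=\beta^k v$ with $|\beta^k|=1$, so $v\propto\mathbf 1$ and $\beta\mathbf 1=\Lambda\mathbf 1=\mathbf 1$, forcing $\beta=1$. Thus $1$ is geometrically simple and every other eigenvalue has $|\beta|<1$. Algebraic simplicity I would settle by excluding a Jordan block: a generalized eigenvector $w$ with $(\Lambda-I)w=\mathbf 1$ would yield $0=\mu^*(\Lambda-I)w=\mu^*\mathbf 1=1$, a contradiction, using $\mu^*\Lambda=\mu^*$ and $\mu^*\mathbf 1=1$.

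Finally I would assemble the convergence. With $1$ simple and strictly dominant, the spectral projection onto the $1$-eigenspace is the rank-one matrix $\Pi=\mathbf 1\mu^*$ (right eigenvector $\mathbf 1$, left eigenvector $\mu$, normalized by $\mu^*\mathbf 1=1$), which satisfies $\Lambda\Pi=\Pi\Lambda=\Pi^2=\Pi$. Writing $R=\Lambda-\Pi$ gives $\Pi R=R\Pi=0$ and $\rho(R)=\max_{\beta\neq 1}|\beta|<1$, so $\Lambda^n=\Pi+R^n$ with $\|R^n\|\to 0$, hence $\Lambda^n\to\mathbf 1\mu^*=(\mu,\dots,\mu)^*$, meaning every row of $\Lambda^n$ converges to $\mu$. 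Taking adjoints and evaluating on a probability vector then yields $(\Lambda^*)^n\nu=(\Lambda^n)^*\nu\to\mu\,(\mathbf 1^*\nu)=\mu$, which is the last assertion. The one delicate point is making the phase-equality step watertight, together with keeping the $\Lambda$-versus-$\Lambda^*$ bookkeeping straight; beyond that, everything is routine linear algebra once primitivity has been exploited via $P=\Lambda^k$.
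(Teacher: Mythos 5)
Your proof is correct, but it is worth noting that the paper does not actually prove this theorem at all: its ``proof'' is the single line ``(see Ethier and Kurtz \cite{ethierKurtz}).'' So your self-contained argument is genuinely different by default, and it is a good one. The structure --- Gershgorin (or row-stochasticity) to confine the spectrum to the closed unit disk, Brouwer on the simplex to produce $\mu$, the triangle-inequality rigidity argument on the strictly positive power $P=\Lambda^k$ to show that any unit-modulus eigenvector is proportional to $\mathbf 1$, exclusion of a Jordan block via $\mu^*(\Lambda-I)w=0\neq 1$, and the rank-one spectral projection $\Pi=\mathbf 1\mu^*$ with $\Lambda^n=\Pi+R^n$ --- is the standard elementary route and every step checks out, including the delicate phase-equality step (equality in the triangle inequality with all $P_{ij}>0$ does force a common argument and equal moduli). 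Two small remarks. First, the one assertion that deserves an extra line is $\rho(R)=\max_{\beta\neq 1}|\beta|$: one should note that $\ker\Pi=\{v:\mu^*v=0\}$ contains every generalized eigenspace for $\beta\neq 1$ (since $\mu^*(\Lambda-\beta I)^m w=(1-\beta)^m\mu^*w$), so $R$ agrees with $\Lambda$ on that invariant complement and kills $\mathbf 1$; this is routine but is the only place where your write-up asserts rather than argues. Second, you have silently corrected the paper's statement in two ways: you prove the standard (and stronger) claim that all other eigenvalues have \emph{modulus} strictly less than one, rather than the garbled ``real-part with absolute value'' phrasing, and you keep the $\Lambda$-versus-$\Lambda^*$ bookkeeping straight, identifying $\mathbf 1$ as the right eigenvector of $\Lambda$ and $\mu$ as the fixed probability vector of $\Lambda^*$, which is what the theorem's final two assertions actually require.
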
 
\begin{proof} (see Ethier and Kurtz \cite{ethierKurtz}).\end{proof}

In filtering, we apply a sequence of matrices $T_n$ which are not time-homogenous. Therefore, we need to consider a more general framework when considering eigenvalues and the space of eigenvectors for $T_n$. The multiplicative ergodic theorem, also known as Oseledec's theorem, will be useful, but before we present the theorem we need to define the following,

\begin{definition} An operator $C(x,n)$ where $x=(x_0,\dots,x_n)\in\mathcal S^{n+1}$, is a \textbf{cocycle} if 
\begin{itemize}
\item $C_0(x) = I_{d\times d}$ for all $x$,
\item $C_n(x) = C_{n-m}(x_m)C_m(x)$.
\end{itemize}
where $x_m = (x_m,\dots,x_n)\in\mathcal S^{n-m+1}$.

\end{definition}

Letting $M_n = T_nT_{n-1}\dots,T_1$ with the convention that $M_{-1}=I$, we see that $M_n$ is a cocycle. Now, we are ready for Oseledec's theorem:

\begin{theorem} (\textbf{Oseledec's Multiplicative Ergodic Theorem).} Suppose both $M_n$ and $M_n^{-1}$ are integrable for all $n$, $\mathbb EM_n+\mathbb EM_n^{-1}<\infty$ for all $n<\infty$. Then for each $\nu\in\mathbb R^d\setminus\{0\}$ we have
\[\mathcal V = \lim_n\frac{1}{n}\log\left( \frac{\|M_n\nu\|}{\|\nu\|}\right)\qquad\qquad\hbox{a.s}\]
exists and can take up to $d$-many values. For some $\ell\leq d$, if $\mathcal V_1>\mathcal V_2>\dots>\mathcal V_\ell$ are the $\ell$-many different (random) limits, then there exist (random) subspaces $\mathbb R^d\supsetneq S_\omega^1\supset S_\omega^2\supset\dots\supset S_\omega^\ell\supset\{0\}$ such that the limit is $\mathcal V_i$ if $\nu\notin \mathcal S_\omega^i$ for $i\leq \ell$.
\end{theorem}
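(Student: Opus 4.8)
The plan is to reduce the theorem to the almost-sure convergence of the symmetric positive-definite matrices $A_n\doteq(M_n^*M_n)^{1/(2n)}$ and then to read off both the exponents and the flag from the spectrum of the limit. First I would note that, since $M_n=T_nT_{n-1}\cdots T_1$ is a cocycle over the stationary ergodic sequence $(T_n)$ that drives the filter (the randomness entering through the observations $Y_n$ of the ergodic chain $X$), the functions $n\mapsto\log\|M_n\|$ are subadditive along the orbit: the cocycle identity together with submultiplicativity of the operator norm gives $\log\|M_{n+m}\|\le\log\|\theta^nM_m\|+\log\|M_n\|$, where $\theta$ is the shift. The integrability assumptions $\mathbb E\|M_n\|,\mathbb E\|M_n^{-1}\|<\infty$ make these logarithms integrable, so Kingman's subadditive ergodic theorem applies and yields an almost-sure limit for $\frac1n\log\|M_n\|$.

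To capture all of the exponents at once, I would run the same argument on every exterior power. Writing $\bigwedge^{k}M_n$ for the map induced on $\bigwedge^{k}\mathbb R^d$, each $n\mapsto\log\|\bigwedge^{k}M_n\|$ is again subadditive and integrable, so Kingman produces limits $\gamma_k\doteq\lim_n\frac1n\log\|\bigwedge^{k}M_n\|$ for $1\le k\le d$. Because $\|\bigwedge^{k}M_n\|$ equals the product $s_1^{(n)}\cdots s_k^{(n)}$ of the $k$ largest singular values of $M_n$, the increments $\mathcal V_k\doteq\gamma_k-\gamma_{k-1}$ (with $\gamma_0=0$) are exactly the almost-sure limits $\lim_n\frac1n\log s_k^{(n)}$, and they are nonincreasing in $k$. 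Collapsing repetitions leaves at most $d$ distinct values $\mathcal V_1>\cdots>\mathcal V_\ell$, the candidate Lyapunov exponents.

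The analytic heart of the argument is the third step: proving that $A_n=(M_n^*M_n)^{1/(2n)}$ converges almost surely to a positive-definite symmetric $\Lambda$ whose eigenvalues are $e^{\mathcal V_1}>\cdots>e^{\mathcal V_\ell}$, with eigenspaces $E_1,\dots,E_\ell$. Granting this, the decreasing flag is $S^i_\omega\doteq\bigoplus_{j>i}E_j=\{\nu:\lim_n\tfrac1n\log\|M_n\nu\|<\mathcal V_i\}$, nested as $\mathbb R^d\supsetneq S^1_\omega\supset\cdots\supset S^\ell_\omega\supset\{0\}$. Using the exact identity $\|M_n\nu\|^2=\langle M_n^*M_n\,\nu,\nu\rangle=\sum_j (s_j^{(n)})^2|\langle\nu,v_j^{(n)}\rangle|^2$, where $v_j^{(n)}$ are the (converging) right singular vectors, the fastest eigendirection that $\nu$ meets dominates; hence $\frac1n\log(\|M_n\nu\|/\|\nu\|)\to\mathcal V_i$ precisely when $\nu\in S^{i-1}_\omega\setminus S^i_\omega$ (with $S^0_\omega=\mathbb R^d$), which for $i=1$ is exactly the stated condition $\nu\notin S^1_\omega$ and gives the dichotomy in general.

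I expect the main obstacle to be the convergence of the singular subspaces asserted in Step three, not the existence of the exponents. Existence follows mechanically from subadditivity, but the flag requires showing that the spectral projections of $M_n^*M_n$ onto the top-$k$ singular directions form Cauchy sequences. The driving mechanism is the spectral gap: whenever $\mathcal V_k-\mathcal V_{k+1}>0$ the ratio $s_{k+1}^{(n)}/s_k^{(n)}$ decays exponentially, and a resolvent/perturbation estimate bounds the sine of the angle between consecutive invariant subspaces by this ratio, making the increments summable. Turning this heuristic into a rigorous argument uniform over the random coincidence pattern of the exponents is where the real work lies; a clean way to organize it is Raghunathan's device of applying the already-established top-exponent case on each exterior power $\bigwedge^{k}\mathbb R^d$ simultaneously, which localizes the difficulty to the well-understood dominant singular direction.
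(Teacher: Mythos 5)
The paper does not actually prove this theorem: its ``proof'' is the single line ``(see page 181 of \cite{carmona}).'' So there is no internal argument to compare against, and the relevant benchmark is the standard proof in that reference, which is precisely the route you outline --- Kingman's subadditive ergodic theorem applied to $\log\|\bigwedge^{k}M_n\|$ for each $k$, identification of the exponents as the increments $\gamma_k-\gamma_{k-1}$ of the exterior-power growth rates, and Raghunathan's device for the convergence of $(M_n^*M_n)^{1/(2n)}$ and of its spectral projections. Your skeleton is correct, and you even repair the statement's sloppy indexing of the flag: the limit equals $\mathcal V_i$ exactly on $S^{i-1}_\omega\setminus S^i_\omega$, not merely off $S^i_\omega$.

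As a proof, however, your proposal has a genuine gap, and it is the one you concede yourself: nothing is actually established about the convergence of the singular subspaces. Every step you do carry out (subadditivity of $\log\|\bigwedge^k M_n\|$, integrability, existence of the $\gamma_k$, monotonicity of their increments) is soft; the entire content of Oseledec's theorem is the assertion that $A_n=(M_n^*M_n)^{1/(2n)}$ converges almost surely and that the exponentially separated singular directions stabilize, since once that is granted the pointwise dichotomy follows from the exact identity $\|M_n\nu\|=\|A_n^n\nu\|$. Closing the gap requires a quantitative perturbation step --- for instance, bounding the angle between the top-$k$ singular subspaces of $M_{n+1}$ and of $M_n$ by a constant times $\|T_{n+1}\|\,s_{k+1}^{(n)}/s_k^{(n)}$ and summing the resulting geometric series --- together with a separate treatment of coinciding exponents, where there is no gap and only the direct sum of the corresponding eigenspaces (not the individual directions) stabilizes; your phrase ``uniform over the random coincidence pattern'' names the difficulty but does not resolve it. Two smaller points: stationarity and ergodicity of the driving sequence $(T_n)$, which Kingman needs, hold only once the signal is started from its invariant law (which is exactly how Theorem \ref{thm:lyapExp} arranges to invoke this result); and under ergodicity Kingman's limits are deterministic constants, so the exponents $\mathcal V_i$ are not random --- only the flag $S^i_\omega$ is --- contrary to what the paper's statement (and your silence on the point) suggests.
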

\begin{proof} (see page 181 of \cite{carmona}).\end{proof}
In Oseledec's theorem, the exponentials $\exp(\mathcal V_1)>\exp(\mathcal V_2)>\dots>\exp(\mathcal V_\ell)$ are the eigenvalues of the matrix $\lim_n(M_n^*M_n)^{1/2n}$. If should be noted that when $\ell<d$, the space spanned by the non-generalized eigenvectors of $\lim_n(M_n^*M_n)^{1/2n}$ will be $\mathbb R^\ell\subsetneq\mathbb R^d$. 

A piece of matrix theory that will be useful is the exterior product. The exterior product or wedge product maps an two vectors to the parallelogram that they form. For more than two vectors, the exterior product corresponds the shape analogous to the parallelogram in the dimension equal to the number of vectors (for instance, the exterior product of three vectors is a parallelepiped). For any two vectors $a,b\in\mathbb R^d$, their \textbf{exterior product} is 
\[a\wedge b = \sum_{i,j}a_ib_j\left(e_i\wedge e_j\right)\]
where $(e_i)_i$ represents the canonical basis of $\mathbb R^d$. Some basic properties of the exterior product are

\begin{itemize}
\item $e_i\wedge e_j = -e_j\wedge e_i$
\item $a\wedge a=0$
\item $\|a\wedge b\|^2 = \|a\|^2\|b\|^2-\left<a,b\right>^2$
\end{itemize}
and with regard to the Lyapunov exponents from Oseledec's theorem, we have
\[\lim\sup_n\frac{1}{n}\log\left(\frac{\|M_n(a\wedge b)\|}{\|a\wedge b\|}\right)\leq \mathcal V_1+\mathcal V_2\qquad\hbox{a.s.}\]

\subsection{Lyapunov Exponents of the Filter}
Continuing to let $M_n = T_nT_{n-1}\dots,T_1$, we can now show that the rate of convergence is given by the spectral gap in $M_n$:
\begin{theorem} 
\label{thm:lyapExp}Assuming that $X_n$ is an ergodic Markov chain, there exists a deterministic function of $\gamma$, namely $\mathcal E_\gamma$, such that for any $\nu\neq\tilde\nu$, we have
\[\lim\sup_{n\rightarrow\infty}\frac{1}{n}\log\|\pi_n^\nu-\pi_n^{\tilde\nu}\|=\mathcal E_\gamma(\nu,\tilde\nu) \leq \mathcal E_\gamma,\]
$\mathbb P$-a.s. In particular, $\mathcal E_\gamma = \mathcal V_2-\mathcal V_1<0$, which is the spectral gap in the matrix $\lim_n((M_n)^*M_n)^{1/2n}$. 
\end{theorem}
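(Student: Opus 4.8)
The plan is to follow Atar and Zeitouni by reducing the stability rate to the spectral gap of the random matrix products $M_n=T_nT_{n-1}\dots T_1$ via Oseledec's theorem and the exterior-product calculus introduced above. First I would express the difference of the two normalized filters in terms of the unnormalized vectors $p_n^\nu=M_n\nu$ and $p_n^{\tilde\nu}=M_n\tilde\nu$. Using $\pi_n^\nu=p_n^\nu/\langle p_n^\nu,\mathbf 1\rangle$, a direct computation gives
\[
\pi_n^\nu-\pi_n^{\tilde\nu}=\frac{\langle p_n^{\tilde\nu},\mathbf 1\rangle\, p_n^\nu-\langle p_n^\nu,\mathbf 1\rangle\, p_n^{\tilde\nu}}{\langle p_n^\nu,\mathbf 1\rangle\,\langle p_n^{\tilde\nu},\mathbf 1\rangle},
\]
and the numerator is exactly the contraction of the exterior product $p_n^\nu\wedge p_n^{\tilde\nu}$ against $\mathbf 1$. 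Since $p_n^\nu\wedge p_n^{\tilde\nu}=(\wedge^2 M_n)(\nu\wedge\tilde\nu)$, the norm of the numerator is controlled above and below by $\|M_n(\nu\wedge\tilde\nu)\|$ times factors depending only on $\|\mathbf 1\|=\sqrt d$.

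Second, I would estimate each factor by its Lyapunov growth rate. By the Perron--Frobenius structure of $T_n=\psi_n\Lambda^*$ --- in particular the strict positivity of the dominant direction --- the denominators $\langle p_n^\nu,\mathbf 1\rangle$ and $\langle p_n^{\tilde\nu},\mathbf 1\rangle$ stay comparable to $\|M_n\nu\|$ and $\|M_n\tilde\nu\|$, so each grows at the top exponent $\mathcal V_1$ for generic $\nu,\tilde\nu\notin S_\omega^1$. Applying Oseledec to the induced cocycle on the second exterior power, the numerator grows at rate at most $\mathcal V_1+\mathcal V_2$, and exactly at that rate provided $\nu\wedge\tilde\nu$ avoids the corresponding degenerate subspace. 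Taking $\tfrac1n\log$ and combining, the $\mathcal V_1$ from the numerator cancels one $\mathcal V_1$ in the denominator and we are left with $\mathcal V_2-\mathcal V_1$; this yields the upper bound, and with the matching lower estimate on the numerator it gives the equality $\mathcal E_\gamma(\nu,\tilde\nu)=\mathcal V_2-\mathcal V_1$. Ergodicity of the stationary driving sequence $\{Y_n\}$ (started from $\mu$) makes the exponents nonrandom, so $\mathcal E_\gamma=\mathcal V_2-\mathcal V_1$ is a deterministic function of $\gamma$ and the model parameters. Before invoking Oseledec I must verify its hypotheses: $\mathbb E\log^+\|M_n\|<\infty$ is immediate since the entries of $\psi_n$ lie in $(0,1]$, while $\mathbb E\log^+\|M_n^{-1}\|<\infty$ follows because $\log\|\psi_n^{-1}\|$ is of order $\max_i (Y_{n+1}-h(x_i))^2$, which is integrable as $Y_{n+1}=h(X_{n+1})+W_{n+1}$ with $W$ Gaussian.

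The main obstacle is establishing the strict gap $\mathcal V_1>\mathcal V_2$, that is, simplicity of the top Lyapunov exponent; without it the conclusion $\mathcal V_2-\mathcal V_1<0$ --- and hence genuine exponential stability --- fails. This is precisely where the positivity of the matrices $T_n$ must be exploited: the products $M_n$ act as strict contractions in Hilbert's projective metric on the positive cone (Birkhoff's contraction theorem), and a Furstenberg/Hennion-type argument then forces the leading exponent to be simple with a one-dimensional random limiting direction. I would also confirm that the data avoid the relevant exceptional subspaces --- again by positivity any genuine probability vector has a nonzero component along the dominant direction, and $\nu\wedge\tilde\nu\neq 0$ exactly when $\nu\neq\tilde\nu$, which is the standing hypothesis --- so that the equality in the theorem holds for all distinct initial laws.
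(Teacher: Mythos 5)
Your proposal is correct and takes essentially the same route as the paper's proof: both reduce $\|\pi_n^\nu-\pi_n^{\tilde\nu}\|$ (up to dimension-dependent constants) to the ratio $\|p_n^\nu\wedge p_n^{\tilde\nu}\|\big/\bigl(\|p_n^\nu\|\,\|p_n^{\tilde\nu}\|\bigr)$, apply Oseledec's theorem once to the cocycle $M_n$ and once to its second exterior power, use positivity of the Perron direction to show probability vectors escape the exceptional subspace $S_\omega^1$, and cancel to get $\mathcal V_2-\mathcal V_1$; your identity contracting $p_n^\nu\wedge p_n^{\tilde\nu}$ against $\mathbf 1$ is just an algebraic variant of the paper's sine inequality. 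The one point where you go beyond the paper is in recognizing that the strict gap $\mathcal V_1>\mathcal V_2$ needs an argument (Birkhoff/Hilbert-metric contraction of positive matrices), which the paper simply asserts and defers to Atar and Zeitouni.
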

\begin{proof}
From the triangle inequality, we can assume W.L.O.G. that the initial distribution of $X$ is its invariant distribution, and take $\tilde\nu=\mu$. In this case, the matrices $T_n$ possess a stationary law which is also ergodic. Moreover,
\[\mathbb E\log^+\|T_n\|\leq c\mathbb E\max_i\gamma^{-2}\left(Y_nh(x_i)-.5h^2(x_i)\right)^+<\infty.\]
Hence, we can apply Oseledec's theorem to conclude there exists a random subspace $S_\omega^1$ such that if $\nu\notin S_\omega^1$ then
\begin{equation}
\label{eq:singleLyap}
\frac{1}{n}\log\|p_n^\nu\|\rightarrow\mathcal V_1
\end{equation}
$\mathbb P$-a.s. In this setting, $\mathcal V_1>\mathcal V_2>\dots>\mathcal V_d$ are the Lyapunov exponents associated with the matrix $M_n$. It is well-known that $((M_n)^*M_n)^{1/2n}$ has a (random) limit a.s., the eigenvalues of which are $e^{\mathcal V_i}$. Note that $(M_n)^*M_n$ is a non-negative matrix, thus by Perron-Frobenius theorem the eigenvector associated with the highest eigenvalue of $(M_n)^*M_n$ has all entries real and non-negative. The last property thus holds for $(M_n^*M_n)^{1/2n}$ too, and hence for $\lim_n(M_n^*M_n)^{1/2n}$. Since $S_\omega^1$ must be orthogonal to the eigenvector associated with the highest eigenvalue of $\lim_n(M_n^*M_n)^{1/2n}$, if follows that $S_\omega^1$ cannot include any probability vector with all entries strictly positive. As cases where $\nu$ does not have all positive entries, notice that $p^\nu$ does for $n\geq k$ where $k$ was the constant such that $\Lambda_{ij}^n>0$ for $i,j\leq d$ when $n\geq k$. Thus, (\ref{eq:singleLyap}) holds for any probability measure $\nu:\mathcal S\rightarrow[0,1]$.

Using Oseledec's theorem again, this time for $\mathbb R^d\wedge\mathbb R^d$-valued process $p_n^\mu\wedge p_n^\nu$, there exists a (random) strict subspace $S_\omega^2\subset\mathbb R^d\wedge\mathbb R^d$ such that is $\mu\wedge\nu\notin  S_\omega^2$ then
\begin{equation}
\label{eq:doubleLyap}
\frac{1}{n}\log\|p_n^\mu\wedge p_n^\nu\|\rightarrow \mathcal V_1+\mathcal V_2
\end{equation}
$\mathbb P$-a.s., and for $p_n^\mu\wedge p_n^\nu\in S_\omega^2$ we have
\begin{equation}
\label{eq:doubleLyapEstimate}
\lim\sup_n\frac{1}{n}\log\|p_n^\mu\wedge p_n^\nu\|\leq\mathcal V_1+\mathcal V_2
\end{equation}
$\mathbb P$-a.s. 

Then using the inequality $\frac{1}{\sqrt d}|\sin(a,b)|\leq \|a-b\|\leq \sqrt d|\sin(a,b)|$ where $\sin(a,b)$ is the sine of the angle between vectors $a$ and $b$ (see lemma \ref{lem:sineInequality}), and the fact that 
\[\sin^2(a,b) = 1-\cos^2(a,b) = \frac{\|a\|^2\|b\|^2-\left<a,b\right>^2}{\|a\|^2\|b\|^2} = \frac{\|a\wedge b\|^2}{\|a\|^2\|b\|^2},\]
we can conclude that
\[\lim\sup_n\frac{1}{n}\log\|\pi_n^\mu-\pi_n^\nu\| = \lim\sup_n\frac{1}{n}\left(\log\|p_n^\mu\wedge p_n^\nu\|-\log\|p_n^\mu\|-\log\|p_n^\nu\|\right)\leq\mathcal V_1+\mathcal V_2-2\mathcal V_1\]

\[=\mathcal V_2-\mathcal V_1<0,\]
which completes the proof.
\end{proof}
The difference $\mathcal V_2-\mathcal V^1<0$ is a spectral gap and its negativity is sufficient for the stability of the filters. In their paper \cite{atarZeitouni1997}, Atar and Zeitouni proceed to prove that when $\Lambda_{ij}>0$ for all $i,j\leq d$, then there exists a constant $c$ such that
\[\mathcal E_\gamma\leq c<0\]
where $c$ does not depend on $h$ or $\gamma$. They go on to prove that
\[c\leq -2\min_{i\neq j}\sqrt{\Lambda_{ij}\Lambda_{ji}}.\]
They also prove the following bounds for low-noise models,
\begin{eqnarray}
\label{eq:lowNoiseUpper}
\lim\sup_{\gamma\searrow 0}\gamma^2\mathcal E_\gamma&\leq&-\frac{1}{2}\sum_{i=1}^d\mu_i\min_{i\neq j}(h(x_i)-h(x_j))^2\\
\label{eq:lowNoiseLower}
\lim\inf_{\gamma\searrow 0}\gamma^2\mathcal E_\gamma&\geq&-\frac{1}{2}\sum_{i=1}^d\mu_i\sum_{j=1}^d(h(x_i)-h(x_j))^2.
\end{eqnarray}

Finally, with regard to the ergodic theory, it was shown by Chigansky in 2006 \cite{chigansky2006} that the Markov-Feller process $(X_n,\pi_n)$ has a unique invariant measure $\mathcal M$, such that for any continuous $g$,
\[\lim_n\frac{1}{n}\sum_ng(X_n,\pi_n) = \sum_i\int g(x_i,u)\mathcal M(x_i,du) \]

\begin{equation}
\label{eq;ergodicTheory}
= \sum_i\int \mu_ig(x_i,u)\mathcal M_{\mu_i}(du) =\lim_n\mathbb Eg(X_n,\pi_n)
\end{equation}
where $\mathcal M_{\mu_i}$ is the $\mu_i$-marginal of $\mathcal M$. 

\subsection{Proof of $\frac{1}{\sqrt d}|\sin(a,b)|\leq \|a-b\|\leq \sqrt d|\sin(a,b)|$}
Let $\mathcal D\subset\mathcal R^d$ denote the set of $d$-dimensional distribution vectors. If $a\in\mathcal D$ then $a_i\geq 0$ for all $i\leq d$, and $\sum_ia_i=1$. Furthermore, we can easily verify with a Jensen inequality that $\frac{1}{d}\leq \|a\|^2\leq 1$ where $\|\cdot\|$ is the Euclidean norm on $\mathbb R^d$. 
An inequality that will be useful is presented in the following lemma:
\begin{lemma}
\label{lem:sineInequality}
For any $a,b\in\mathcal D$, we have
\begin{equation}
\label{eq:sineInequality}
\frac{1}{d}\sin^2(a,b)\leq \|a-b\|^2\leq d\sin^2(a,b)
\end{equation}
where $\sin(a,b)$ is the sine of the angle between vectors $a$ and $b$.
\end{lemma}

\begin{proof} For $a=b$, the lemma is trivial, so the proof will focus on the case when $a\neq b$. 

The set $\mathcal D$ can be defined by a hyperplane $\mathcal H$ which is a $d-1$ dimensional surface inscribed in the non-negative region of $\mathbb R^d$ whose distance from the origin is exactly unity under the $\ell^1$-norm. For any vector $x\in\mathbb R^d$, its distance to $\mathcal H$ is defined as
\[\|x-\mathcal H\|\doteq\inf_{a\in\mathcal D}\|x-a\|.\]
From Jensen's inequality, we know that $\|a\|^2\geq \frac{1}{d}$ for all $a\in\mathcal D$ with equality iff $a_i\equiv\frac{1}{d}$ for all $i\leq d$, so for $x=0$ we have
\[\|0-\mathcal H\|=\inf_{a\in\mathcal D}\|a\| =\|a_0\|= \frac{1}{\sqrt{d}}\]
where $a_0\doteq\frac{1}{d}(1,1,\dots,1)$.  From the law of sines, for any vectors $a,b\in\mathcal D$ with $a\neq b$, the vectors and the hyperplane's surface form a triangle, and so we have a law of sines
\begin{equation}
\label{eq:lawOfSines}
\frac{\sin(a,b)}{\|a-b\|} = \frac{\sin(a,b-a)}{\|b\|}=\frac{\sin(b,b-a)}{\|a\|}.
\end{equation}
From (\ref{eq:lawOfSines}) we easily obtain  

\[2\frac{\sin^2(a,b)}{\|a-b\|^2} =  \frac{\sin^2(a,b-a)}{\|b\|^2}+\frac{\sin^2(b,b-a)}{\|a\|^2}\leq 2d,\]
which shows that $\frac{1}{d}\sin^2(a,b)\leq \|a-b\|^2$ and proves the lower-bound in (\ref{eq:sineInequality}).

To get the upper-bound requires significantly more preparation. For any $a\in\mathcal D$, let $\sin(a,\mathcal H)$ denote the sine of $a$'s angle of incidence with $\mathcal H$. Obviously, $\sin(a_0,\mathcal H) = \sin(\pi/2) = 1$, and for any $a\neq a_0$ the most acute angle that $a$ can make with a vector parallel to the hyperplane is its angle of incidence, which is its angle with the vector $a-a_0$,

\[\sin(a,\mathcal H) = \sin(a,a-a_0)\qquad\hbox{for }a\neq a_0.\]
With the angle of incidence in mind, we observe the following inequality,
\begin{equation}
\label{eq:incInequal}
\sin(a,b-a) \geq \sin(a,\mathcal H)\geq \sin(e_i,\mathcal H)
\end{equation}
for any $e_i$. The first inequality in (\ref{eq:incInequal}) follows from the angle of incidence, and the second inequality is seen to be true if one notices that $\forall a\in\mathcal D$, the sine of its angle of incidence must be greater than or equal to that of $e_i$, because the hyperplane's surface is flat and the most acute angle of incidence is formed by a vector that stretches the farthest, which happens to be any one that touches a corner of $\mathcal H$. If we apply (\ref{eq:lawOfSines}) to the triangle formed from $a_0$ and any $e_i$, we obtain
\[\frac{\sin^2(e_i,\mathcal H)}{\|a_0\|^2}= \frac{\sin^2(a_0,\mathcal H)}{\|e_i\|^2} = 1\qquad\hbox{for all }i\leq d,\]
giving us $\sin^2(e_i,\mathcal H) = \|a_0\|^2 = \frac{1}{d}$. Using this along with (\ref{eq:incInequal}), we have
\[\frac{\sin^2(a,b-a)}{\|b\|^2}\geq\sin^2(a,b-a)\geq \sin^2(e_i,\mathcal H) = \frac{1}{d},\]
and using (\ref{eq:lawOfSines}) we have
\[\frac{\sin^2(a,b)}{\|a-b\|^2} = \frac{\sin^2(a,b-a)}{\|b\|^2}\geq \frac{1}{d}\]
which proves the upper-bound.

\end{proof}


\end{document}